\providecommand{\tabularnewline}{\\}
\theoremstyle{remark}
\newtheorem*{notation*}{\protect\notationname}
\theoremstyle{definition}
 \newtheorem{example}{\protect\examplename}
\theoremstyle{definition}
\newtheorem{defn}{\protect\definitionname}
\theoremstyle{plain}
\newtheorem{thm}{\protect\theoremname}
\theoremstyle{plain}
\newtheorem{cor}{\protect\corollaryname}
\theoremstyle{plain}
\newtheorem{assumption}{\protect\assumptionname}
\theoremstyle{definition}
\newtheorem*{example*}{\protect\examplename}
\theoremstyle{plain}
\newtheorem{lem}{\protect\lemmaname}
\theoremstyle{plain}
\newtheorem*{prop*}{\protect\propositionname}
\theoremstyle{plain}
\newtheorem{prop}{\protect\propositionname}
\providecommand{\assumptionname}{Assumption}
\providecommand{\corollaryname}{Corollary}
\providecommand{\definitionname}{Definition}
\providecommand{\examplename}{Example}
\providecommand{\lemmaname}{Lemma}
\providecommand{\notationname}{Notation}
\providecommand{\propositionname}{Proposition}
\providecommand{\theoremname}{Theorem}
\begin{document}
\title{\onehalfspacing{}Stable Outcomes and Information in Games: \\
An Empirical Framework\thanks{I am indebted to Sokbae Lee, Bernard Salanié, and Qingmin Liu for
their guidance and support. I would like to thank Matthew Backus,
Yeon-koo Che, Pierre-André Chiappori, Evan Friedman, Duarte Gonçalves,
Gautam Gowrisankaran, Miho Hong, Hiroaki Kaido, Dilip Ravindran, and
all seminar participants at Boston University and Columbia University.
I also thank an associate editor and two anonymous referees for constructive
comments that improved this paper. The views expressed in this article
are those of the author and do not necessarily reflect those of the
Federal Trade Commission or any individual Commissioner. This paper
is based on the first chapter of my PhD dissertation. All errors are
mine.}}
\author{\onehalfspacing{}Paul S. Koh\thanks{Federal Trade Commission. Email: \texttt{pkoh@ftc.gov}.}}
\date{May 18, 2023}
\maketitle
\begin{abstract}
\begin{onehalfspace}
Empirically, many strategic settings are characterized by stable outcomes
in which players\textquoteright{} decisions are publicly observed,
yet no player takes the opportunity to deviate. To analyze such situations
in the presence of incomplete information, we build an empirical framework
by introducing a novel solution concept that we call \emph{Bayes stable
equilibrium} and computationally tractable approaches for estimation
and inference\emph{.} Our framework allows the researcher to be agnostic
about players' information and the equilibrium selection rule. In
an application, we study the strategic entry decisions of McDonald's
and Burger King in the US. While the Bayes stable equilibrium identified
set is always (weakly) tighter than the Bayes correlated equilibrium
identified set, our results show that the former can be substantially
tighter in practice. In a counterfactual experiment, we examine the
impact of increasing access to healthy food on the market structures
in Mississippi food deserts.\\
\\
\textbf{Keywords}: Estimation of games, Bayes stable equilibrium,
informational robustness, partial identification, burger industry\\
\\
\textbf{JEL Codes}: C57, L10
\end{onehalfspace}
\end{abstract}

\section{Introduction}

In dynamic strategic settings where firms can react after observing
their opponents' choices, our intuitions suggest that firms' actions
would change over time. Interestingly, we often see firms reach a
certain steady state in which no firm changes its decision even when
it can. For example, major exporters' decisions to export products
to specific markets remain unchanged for a long period \citep{ciliberto2021superstar}.
Airline firms' decisions to operate between cities tend to be persistent
\citep{ciliberto2009marketstructure}. Food-service retailers operate
in a local market over a long horizon, knowing precisely the identities
of the competitors operating nearby. In all these examples, each firm's
action constitutes a best response to the \emph{observed} actions
of the opponents.

The prevalence of incomplete information in the real world makes the
phenomenon particularly interesting. If opponents' actions are observable
at the steady state, rational firms will use the observation to update
their beliefs. For example, while a coffee chain's own research might
find a given neighborhood unattractive, observing that Starbucks\textemdash a
chain known to have leading market research technology\textemdash enter
the neighborhood may make it think twice.\footnote{According to Tom O'Keefe, the founder of Tully's Coffee, Tully's early
business expansion strategy was to ``open across the street from
every Starbucks'' because ``they do a great job at finding good
locations.'' \citep{goll2000seattles}.} If there is no further revision of actions, it must be that each
firm holds beliefs refined by their observations of opponents' actions,
but no further updating is possible.

Although stable outcomes in the presence of information asymmetries
are common in the real world, it is not straightforward to model the
data generating process. The main difficulty arises from requiring
that the firms' beliefs and actions be consistent with each other.
On the one hand, firms' beliefs must support the realized actions
as optimal. On the other hand, each firm's beliefs must incorporate
its private information \emph{as well as} the information extracted
from observing its opponents' decisions. Static Bayes Nash equilibrium,
although a popular modeling choice, does not account for the possible
revision of actions after opponents' actions are observed. Modeling
convergence to stable outcomes via a dynamic games framework may be
feasible but is likely non-trivial and reliant on ad hoc assumptions.
In this paper, we develop a tractable equilibrium notion that satisfies
the consistency requirement and facilitates econometric analysis when
the analyst observes a cross-section of stable outcomes at some point
in time.

We propose a solution concept dubbed \emph{Bayes stable equilibrium}
as a basis for analyzing stable outcomes in the presence of incomplete
information. Bayes stable equilibrium is described as follows. A \emph{decision
rule} specifies a distribution over action profiles for each realization
of the state of the world and players' private signals. Suppose that,
after the state of the world and private signals are realized, an
action profile is drawn from the decision rule, and the action profile
is \emph{publicly} recommended to the players. The decision rule is
a Bayes stable equilibrium if the players always find no incentives
to deviate from the publicly recommended action profile after observing
their private signals and the action profile.

We justify Bayes stable equilibrium using a version of rational expectations
equilibrium à la \citet{radner1979rational}. First, we argue that
rational expectations equilibrium, appropriately defined for our setting,
provides a simple approach to rationalizing stable outcomes under
incomplete information. We define rational expectations equilibrium
by adopting the ``outcome function'' approach of \citet{liu2020stability},
who uses a similar approach to define the notion of stability in two-sided
markets with incomplete information. Next, we show that Bayes stable
equilibrium characterizes the implications of rational expectations
equilibria when the analyst can only specify the minimal information
available to the players. Thus, Bayes stable equilibrium is useful
as it allows the analyst to be ``informationally robust'' in the
same sense as the Bayes correlated equilibrium of \citet{bergemann2016bayescorrelated}.
The informational robustness property is attractive since it is often
difficult for the analyst to know the true information structure governing
the data generating process.

Assuming that the analyst observes a cross-section of stable outcomes,
we characterize the identified set of parameters using Bayes stable
equilibrium as a solution concept. The corresponding identified set
has a number of attractive properties. First, the identified set is
valid for arbitrary equilibrium selection rules and robust to the
possibility that the players actually observed more information than
specified by the analyst. We let the model be ``incomplete'' in
the sense of \citet{tamer2003incomplete}, and the parameters are
typically partially identified. Second, when strong assumptions on
information are made, the Bayes stable equilibrium identified set
collapses to the pure strategy Nash equilibrium identified set studied
in \citet*{beresteanu2011sharpidentification} and \citet{galichon2011setidentification}.
Third, everything else equal, the Bayes stable equilibrium identified
set is (weakly) tighter than the Bayes correlated equilibrium identified
set studied in \citet{magnolfi2021estimation}. While Bayes stable
equilibrium and Bayes correlated equilibrium both facilitate estimation
of games with weak assumptions on players' information, the former
is stronger as it leverages the assumption that players' actions are
observable to each other at equilibrium situations.

We propose a computationally tractable approach to estimation and
inference. We show that checking whether a candidate parameter enters
the identified set (asking whether we can find an equilibrium consistent
with data) solves a linear program. Furthermore, we propose a simple
approach to constructing confidence sets for the identified set by
leveraging the insights from \citet{horowitz2021inference}. The key
idea is to construct convex confidence sets for the conditional choice
probabilities, which are the only source of sampling uncertainty.
Checking whether a candidate parameter belongs to the confidence set
amounts to solving a convex program. 

As an empirical application, we use our framework to analyze the strategic
entry decisions of McDonald's and Burger King in the US. We estimate
the model parameters using Bayes stable equilibrium and explore the
role of informational assumptions on identification. We also use the
model to simulate the impact of increasing access to healthy food
in Mississippi food deserts. We find that popular assumptions on players'
information may be too strong, as the corresponding identified set
can be empty. On the other hand, making no assumption on players'
information produces an identified set that is too large, indicating
that some assumptions on information are necessary to produce informative
results. We show that an informative identified set can be obtained
under an intermediate assumption, which is also credible; this specification
assumes that McDonald's has accurate information about its payoff
shocks while Burger King may minimally observe nothing. We also compute
the identified sets under the Bayes correlated equilibrium assumption
and find that the Bayes stable equilibrium identified sets are substantially
tighter under the same assumptions on players' information: the volume\textemdash measured
as the product of the projection intervals\textemdash under Bayes
stable equilibrium is at most 5\% of that under Bayes correlated equilibrium.

\subsubsection*{Related Literature}

Our work adds to the literature on the econometric analysis of game-theoretic
models by designing a framework that applies to a class of situations
characterized by stable outcomes (see \citet{depaula2013econometric}
and \citet{aradillas-lopez2020theeconometrics} for recent surveys).\footnote{In his survey on the econometrics of static games, \citet{aradillas-lopez2020theeconometrics}
classifies existing papers around five criteria: (i) Nash equilibrium
versus weaker solution concepts; (ii) the presence of multiple solutions;
(iii) complete- versus incomplete-information games; (iv) correct
versus incorrect beliefs; (v) parametric versus nonparametric models.
To place our work in these categories, this paper (i) develops a new
solution concept that is weaker than complete information pure strategy
Nash equilibrium but stronger than Bayes correlated equilibrium; (ii)
admits a set of equilibria; (iii) allows a general form of incomplete
information which accommodate standard assumptions as special cases;
(iv) assumes that players have correct beliefs; (v) imposes parametric
assumptions on the payoff functions and the distribution of unobservables.} Our framework would be well-suited when (i) it is reasonable to assume
that the realized actions represent best responses to the \emph{observed}
decisions of the opponents, (ii) the stability of outcomes is not
driven by high costs of revising actions, and (iii) the analyst observes
cross-sectional data of firms' stable decisions at some point in time.\footnote{This idea behind cross-sectional analysis of games is accentuated
in \citet{ciliberto2009marketstructure}: ``\emph{The idea behind
cross-section studies is that in each market, firms are in a long-run
equilibrium. The objective of our econometric analysis is to infer
long-run relationships between the exogenous variables in the data
and the market structure that we observe at some point in time, without
trying to explain how firms reached the observed equilibrium.}''
(pp.1792-1793).}

Our framework differs from the usual Nash framework. To account for
stable outcomes, we assume players can observe opponents' actions
and react. In contrast, in static Nash frameworks, players are not
allowed to change their ``one-shot'' actions and therefore may be
subject to regrets after observing the realized actions of their opponents.\footnote{The empirical literature has been aware that the Nash framework is
subject to ex-post regret when information is incomplete or players
are using mixed strategies. See, for example, the discussions in \citet{draganska2008discrete},
\citet{einav_not_2010}, and \citet{ellickson2011structural}. } Furthermore, we are not aware of dynamic models (e.g., frameworks
based on Markov perfect equilibrium) that can straightforwardly handle
stable outcomes in incomplete information environment. 

Bayes stable equilibrium allows the researcher to work with weak assumptions
on players' information. An early work in this spirit is \citet{grieco2014discrete},
which considers a parametric class of information structures that
nests standard assumptions. Our work is most closely related to recent
papers that use Bayes correlated equilibrium as a basis for informationally
robust econometric analysis: \citet{magnolfi2021estimation} applies
Bayes correlated equilibrium to static entry games (which are also
considered in this paper), \citet*{syrgkanis2021inference} to auctions,
and \citet{gualdani2020identification} to static, single-agent models.\footnote{There is also a strand of literature that studies the possibility
that firms might have biased beliefs (see \citet{aguirregabiria2020identification}
and \citet{aguirregabiria2020firmstextquoteright} for a review).
The works in this literature assume that the econometrician knows
the true information structure of the game but firms may not have
correct beliefs. In contrast, we assume that firms have correct beliefs
but the econometrician does not know the true information structure.}

We contribute to the literature on the econometrics of moment inequality
models by proposing a simple approach to constructing confidence sets
based on the idea of \citet{horowitz2021inference}.\footnote{Recent development in inference with moment inequality models has
introduced many alternative approaches for constructing confidence
sets (see \citet{ho2017partial}, \citet{canay2017practical}, and
\citet{molinari2020microeconometrics2} for recent surveys). However,
to the best of our knowledge, most are not directly applicable to
our setup, primarily due to the presence of a high-dimensional nuisance
parameter and a large number of inequalities. A feasible strategy
for inference is the subsampling approach of \citet*{chernozhukov2007estimation},
which is also used in \citet{magnolfi2021estimation} and \citet*{syrgkanis2021inference}.} Our approach is new in the context of econometric analysis of game-theoretic
models and applicable under alternative solution concepts such as
pure strategy Nash equilibrium or Bayes correlated equilibrium.

Our work also relates to the game theory literature in two dimensions.
First, our solution concept adopts the idea of rational expectations
equilibrium pioneered by \citet{radner1979rational} to capture how
players refine their information based on market observables in equilibrium
situations. Our approach closely follows the logic in \citet{liu2020stability},
which uses the same idea to define the notion of stability in two-sided
markets with incomplete information. Compared to other works that
study solution concepts based on rational expectations equilibrium
(e.g., \citet{green1987posterior}, \citet{minehart1999expost}, \citet{minelli2003information},
and \citet{kalai2004largerobust}), we do not assume that actions
are generated by a product of individual strategy mappings nor that
players' types are fully revealed after actions are realized. Second,
our solution concept also adds to the recent literature that studies
solution concepts with informational robustness properties (e.g.,
Bergemann and Morris \citeyearpar{bergemann2013robustpredictions,bergemann2016bayescorrelated,bergemann2017belieffree}
and \citet{doval2020sequential}). 

 Finally, our empirical application contributes to the literature
on entry competition in the fast-food industry. Existing empirical
works that study strategic entries by the top burger chains include
\citet{toivanen2005marketstructure}, \citet{thomadsen2007product},
\citet{yang2012burgerking}, \citet{gayle2015choosing}, \citet{igami2016unobserved},
\citet{yang2020learning}, and \citet{aguirregabiria2020identification}.
In particular, \citet{yang2020learning}, which studies strategic
entries in the Canadian hamburger industry, shares a similar motivation
that players extract information from the opponents' actions, but
uses a dynamic games framework to explicitly model the learning process.
Our empirical work is distinguished by the use of novel datasets and
its focus on exploring the role of informational assumptions. To the
best of our knowledge, we are the first to study the impact of local
food environment on burger chains' strategic entry decisions.\footnote{For a list of works in economics that study issues related to food
deserts, see \citet{allcott2019fooddeserts} and the references cited
therein.}

 The rest of the paper is organized as follows. Section \ref{sec:Model}
introduces the notion of Bayes stable equilibrium in a general finite
game of incomplete information and studies its property. Section \ref{sec:Identification}
sets up the econometric model and provides identification results.
Section \ref{sec:Estimation and Inference} provides econometric strategies
for computationally tractable estimation and inference. Section \ref{sec:Empirical-Application}
applies our framework to the entry game played by McDonald's and Burger
King in the US. Section \ref{sec:Conclusion} concludes. All proofs
are in Appendix \ref{sec:Proofs}.

\begin{notation*}
Throughout the paper, we will use the following notation to express
discrete probability distributions in a compact manner. When $\mathcal{Y}$
is a finite set, and $p\left(y\right)$ denotes the probability of
$y\in\mathcal{Y}$, we will use $p_{y}\equiv p\left(y\right)$. Similarly,
$q_{y\vert x}\equiv q\left(y\vert x\right)$ will be used to denote
conditional probability of $y$ given $x$. We let $\Delta_{y}\equiv\Delta\left(\mathcal{Y}\right)$
denote the probability simplex on $\mathcal{Y}$, so that $p\in\Delta_{y}$
if and only if $p_{y}\geq0$ for all $y\in\mathcal{Y}$ and $\sum_{y\in\mathcal{Y}}p_{y}=1$.
Similarly, we let $\Delta_{y\vert x}$ denote the set of all probability
distributions on $\mathcal{Y}$ conditional on $x$, so that $q\in\Delta_{y\vert x}$
if and only if $q_{y\vert x}\geq0$ for all $y$ and $\sum_{y\in\mathcal{Y}}q\left(y\vert x\right)=1$.
We also use the convention that writes an action profile as $a=\left(a_{1},...,a_{I}\right)=\left(a_{i},a_{-i}\right)$.
\end{notation*}

\section{Model\label{sec:Model}}

We consider empirical settings characterized by two properties. First,
the setting is \emph{dynamic} in the sense that players can revise
their actions after observing the opponents' actions.\footnote{We use ``dynamic'' to mean that each player can react to the realized
actions of the opponents. However, we do not introduce standard dynamic
games assumptions (e.g., finite number of periods, timing of moves,
etc.) to model players' interactions. } Second, players' actions are readily and publicly observed by others.
 In other words, we focus on certain ``steady-state'' situations
in which all players publicly observe each other's realized actions,
yet no deviation occurs even when they have the opportunity to do
so. Our objective is to describe such situations as a static equilibrium.
When conducting econometric analysis, we will assume that the analyst
observes a cross-section of stable outcomes.

In this section, we introduce Bayes stable equilibrium as a solution
concept that solves the consistency problem and facilitates econometric
analysis while allowing for weak assumptions on players' information.
Throughout the paper, we assume that the state of the world remains
persistent enough to abstract away from transitions over time, and
that the costs of revising actions are sufficiently low so that we
can ignore them.\footnote{The zero adjustment cost assumption is not essential for the key
ideas of this paper. In the real world, the costs of revising actions
are not zero. However, the relevant question is whether high adjustment
costs drive stable outcomes. We treat adjustment costs as negligible
compared to the long-run profits obtained at stable outcomes. This
is in the same spirit as the empirical matching models surveyed in
\citet{chiappori2016theeconometrics}; the stable matching condition
abstracts away from the costs of entering into or exiting a marriage.
Similar assumptions assumptions are commonly used for econometric
models of network formation although forming networks can be costly
in reality (see, e.g., \citet{de2020econometric}). The assumption
is useful for motivating an alternative to the Nash framework and
simplifying exposition. See Appendix \ref{sec:Adjustment-Costs-in}
for further discussion.} We formalize the idea in a general class of discrete games of incomplete
information, following the notation of \citet{bergemann2016bayescorrelated}.

We proceed as follows. In Section \ref{subsec:Discrete-Games-of},
we lay out the game environment. In Section \ref{subsec:Stable-Outcomes},
we formalize the notion of stable outcomes and motivate our solution
concept. In Section \ref{subsec:Rational-Expectations-Equilibrium},
we argue that rational expectations equilibrium à la \citet{radner1979rational}
can be used as a baseline solution concept for rationalizing stable
outcomes in the presence of incomplete information. In Section \ref{subsec:Bayes-Stable-Equilibrium},
we introduce Bayes stable equilibrium. Then, in Section \ref{subsec:Informational-Robustness-of BSE},
we show that Bayes stable equilibrium characterizes the implications
of rational expectations equilibria when the players might observe
more information than assumed by the analyst. In Section \ref{subsec:Relationship-to-Other},
we compare the proposed solution concepts to pure strategy Nash equilibrium
and Bayes correlated equilibrium. Finally, in Section \ref{subsec:Existence-and-Uniqueness},
we discuss issues around the existence and uniqueness of the proposed
solution concepts.

\subsection{Discrete Games of Incomplete Information\label{subsec:Discrete-Games-of}}

Let $\mathcal{I}=\left\{ 1,2...,I\right\} $ be the set of players.
The players interact in a finite game of incomplete information $\left(G,S\right)$.\footnote{Throughout this paper, we assume that the state space is finite. The
assumption simplifies the notation. In addition, even though continuous
state space can be used, we will eventually need to discretize the
space for feasible estimation. \citet{magnolfi2021estimation} and
\citet*{syrgkanis2021inference} take similar discretization approaches
for estimation with Bayes correlated equilibria. } A \emph{basic game} $G=\langle\mathcal{E},\psi,\left(\mathcal{A}_{i},u_{i}\right)_{i=1}^{I}\rangle$
specifies the payoff-relevant primitives: $\mathcal{E}$ is a finite
set of unobserved states; $\psi\in\Delta\left(\mathcal{E}\right)$
is a common prior distribution with full support; $\mathcal{A}_{i}$
is a finite set of actions available to player $i$, and $\mathcal{A}\equiv\times_{i=1}^{I}\mathcal{A}_{i}$
is the set of action profiles; $u_{i}:\mathcal{A}\times\mathcal{E}\to\mathbb{R}$
is player $i$'s von Neumann\textendash Morgenstern utility function.
An \emph{information structure} $S=\langle\left(\mathcal{T}_{i}\right)_{i=1}^{I},\pi\rangle$
specifies the information-related primitives: $\mathcal{T}_{i}$ is
a finite set of signals (or types), and $\mathcal{T}\equiv\times_{i=1}^{I}\mathcal{T}_{i}$
is the set of signal profiles; $\pi:\mathcal{E}\to\Delta\left(\mathcal{T}\right)$
is a signal distribution, which allows players' signals to be arbitrarily
correlated. The interpretation is that the state of the world $\varepsilon\in\mathcal{E}$,
which is drawn from the prior $\psi$, is not directly observed by
the players, but each player $i$ receives a private signal $t_{i}\in\mathcal{T}_{i}$
whose informativeness about $\varepsilon$ depends on the signal distribution
$\pi$. The game is common knowledge to the players. As highlighted
by \citet{bergemann2016bayescorrelated}, the separation between the
basic game and the information structure facilitates the analysis
on the role of information structures.

In empirical applications, there is a finite set of exogenous observable
covariates $\mathcal{X}$. We can augment the notation and let $\left(G^{x},S^{x}\right)$
describe the game in markets with characteristics $x\in\mathcal{X}$.
Indexing each game by $x$ is justified by assuming that $x$ is common-knowledge
to the players and that the game primitives are functions of $x$.
We suppress the dependence on $x$ for now.

The following two-player entry game serves as a running example as
well as a baseline model for our empirical application.
\begin{example}[Two-player entry game]
\label{exa:1} The basic game $G$ is described as follows. There
are two players, $i=1,2$. The state of the world $\varepsilon\in\mathcal{E}$
is a vector of payoff shocks, $\varepsilon=\left(\varepsilon_{1},\varepsilon_{2}\right)\in\mathbb{R}^{2}$,
where $\varepsilon_{i}$ enters player $i$'s payoff. Assume $\varepsilon\sim\psi$
for some distribution $\psi$, e.g., bivariate normal; $\varepsilon_{i}$'s
may be correlated. Firm $i$'s action set is $\mathcal{A}_{i}=\left\{ 0,1\right\} $,
where $a_{i}=1$ represents staying in the market and $a_{i}=0$ represents
staying out. The payoff function is $u_{i}\left(a_{i},a_{j},\varepsilon_{i}\right)=a_{i}\left(\beta_{i}+\kappa_{i}a_{j}+\varepsilon_{i}\right)$,
where $\beta_{i}\in\mathbb{R}$ is the intercept and $\kappa_{i}\in\mathbb{R}$
is the ``spillover effect'' parameter; $\kappa_{i}$ may be negative
or positive depending on the nature of competition. Then, $\beta_{i}+\varepsilon_{i}$
is the monopoly profit, $\beta_{i}+\kappa_{i}+\varepsilon_{i}$ is
the duopoly profit, and the profit from staying out is zero.

Next, we provide examples of information structures to which we will
pay special attention in our empirical application: \label{exa:special information structures}
\begin{itemize}
\item In $S^{complete}$, each player observes the realization of $\varepsilon$.
Formally, we have $\mathcal{T}_{i}\equiv\mathcal{E}$ for all player
$i$, and $\pi\left(t_{1}=\varepsilon,t_{2}=\varepsilon\vert\varepsilon\right)=1$
for each $\varepsilon$;
\item In $S^{private}$, $\varepsilon_{i}$ is private information to player
$i$. We have $\mathcal{T}_{i}\equiv\mathcal{E}_{i}$ for all player
$i$, and $\pi\left(t_{1}=\varepsilon_{1},t_{2}=\varepsilon_{2}\vert\varepsilon\right)=1$
for each $\varepsilon$;
\item In $S^{1P}$, player 1 observes $\varepsilon_{1}$, but player 2 observes
nothing. We have $\mathcal{T}_{1}\equiv\mathcal{E}_{1}$, $\mathcal{T}_{2}\equiv\left\{ 0\right\} $,
and $\pi\left(t_{1}=\varepsilon_{1},t_{2}=0\vert\varepsilon\right)=1$
for each $\varepsilon$. Player 2's signal is uninformative;
\item Finally, in $S^{null}$, both players observe nothing. We have $\mathcal{T}_{1}\equiv\mathcal{T}_{2}\equiv\left\{ 0\right\} $.
\end{itemize}
Note that the information structures described above can be ordered
from the most informative to the least informative: $S^{complete}$,
$S^{private}$, $S^{1P}$, $S^{null}$. For example, $S^{complete}$
is ``more informative'' than $S^{private}$ since each player is
allowed to ``observe more.'' We will formally define a partial order
on the set of information structures following \citet{bergemann2016bayescorrelated}
in Section \ref{subsec:Informational-Robustness-of BSE}. $\blacksquare$

\end{example}

\subsection{Stable Outcomes\label{subsec:Stable-Outcomes}}

Let us formalize the notion of stable outcomes and motivate our solution
concept.\footnote{The term ``stability'' has been used in different ways in the theory
literature depending on the context. Our notion of stability is the
closest to ``stable matching'' defined in \citet{liu2020stability}
under incomplete information matching games (the canonical complete
information stable matching is a special case). There is also ``hindsight
(or ex-post) stability'' of \citet{kalai2004largerobust}, whose
motivation is very similar to ours but differs in that it also requires
players' types to be revealed after the play. To the best of our knowledge,
the term ``Bayes stable equilibrium'' has not been used in the literature.} Suppose that, at some point in time, the state of the world is $\varepsilon$,
the private signals are $t=\left(t_{1},...,t_{I}\right)$, and the
players' decisions are $a=\left(a_{1},...,a_{I}\right)$. Assume that
each player $i$ observes her private signal $t_{i}$ as well as the
outcome $a$. What are the conditions for having no deviation at this
situation? A necessary condition is that each player $i$ holds a
belief $\mu^{i}\in\Delta\left(\mathcal{E}\right)$ that gives no incentive
to deviate from the status quo outcome $a$ unilaterally.
\begin{defn}[Stable outcome]
 An outcome $a=\left(a_{1},...,a_{I}\right)$ is \emph{stable} with
respect to a system of beliefs $\mu=\left(\mu^{i}\right)_{i=1}^{I}$
if, for each player $i=1,...,I$, 
\begin{equation}
\mathbb{E}_{\varepsilon\sim\mu^{i}}\left[u_{i}\left(a,\varepsilon\right)\right]\ge\mathbb{E}_{\varepsilon\sim\mu^{i}}\left[u_{i}\left(a_{i}',a_{-i},\varepsilon\right)\right]\label{eq:stable outcome}
\end{equation}
for all $a_{i}'\in\mathcal{A}_{i}$.

\end{defn}
In addition to actions being optimal with respect to the beliefs,
a sensible equilibrium would require that the beliefs reflect each
player's private information as well as the information revealed from
observing opponents' decisions. 

But how do these beliefs arise? In general, static Bayes Nash equilibrium
will not generate stable outcomes and stable beliefs; players may
have incentives to revise their actions after directly observing opponents'
decisions and updating beliefs by inverting the equilibrium strategies.\footnote{There may be special classes of games where ex post regret does not
arise or can be limited in a Bayes Nash equilibrium. \citet{kalai2004largerobust}
studies hindsight stability in a special class of games with many
players. \citet{Mathevet2022} study a class of information structures
called ``single meeting schemes'' in which a subset of players participant
in a meeting and get equally informed while the non-participants stay
uninformed. In this case, the informed players are not subject to
regret in a pure Bayes Nash equilibrium because they have (weakly)
more information than others and thus can predict all players' actions
(although the uninformed players may regret their actions after observing
the actions of the informed players).} While it is natural to ask whether we can use a noncooperative dynamic
game to model convergence to a pair of stable decisions and stable
beliefs, such route is likely to be non-trivial and dependent on ad
hoc assumptions. In the following sections, we propose a simple and
pragmatic approach to the problem.

\subsection{Rational Expectations Equilibrium\label{subsec:Rational-Expectations-Equilibrium}}

Before introducing Bayes stable equilibrium, which will be the solution
concept we take to econometric analysis, we define a version of rational
expectations equilibrium à la \citet{radner1979rational} that offers
a simple conceptual framework for rationalizing stable outcomes in
the presence of incomplete information. To define rational expectations
equilibrium appropriately in our setting, we follow \citet{liu2020stability}
and use the ``outcome function'' approach described as follows.\footnote{An analog of an outcome function in \citet{liu2020stability} is the
\emph{matching function} that maps players' types to an observable
match. In noncooperative games settings, \citet{minehart1999expost}
and \citet{minelli2003information} have made similar attempts to
connect rational expectations equilibrium to games without price.
While their definition of rational expectations equilibrium refers
to strategy profiles, we take a ``cooperative'' approach and use
outcomes functions, which are not necessarily the product of individual
strategy mappings.} 

Let a game $\left(G,S\right)$ be given. Let $\delta:\mathcal{T}\to\Delta\left(\mathcal{A}\right)$
be an \emph{outcome function} in $\left(G,S\right)$; an outcome function
specifies a probability distribution over action profiles at each
realization of players' signals. 
\begin{example}[Continued]
 Let us provide an example of an outcome function. Suppose that
the information structure is given by $S^{private}$ so that player
$i$ observes $\varepsilon_{i}$. Let $\delta\left(a_{1},a_{2}\vert t_{1},t_{2}\right)\in\mathbb{R}$
denote the probability of outcome $\left(a_{1},a_{2}\right)$ when
player 1 observes $t_{1}$ and player 2 observes $t_{2}$. Let 
\[
\delta\left(\varepsilon_{1},\varepsilon_{2}\right)\equiv\delta\left(\left(0,0\right),\left(1,0\right),\left(0,1\right),\left(1,1\right)\vert t_{1}=\varepsilon_{1},t_{2}=\varepsilon_{2}\right)\in\mathbb{R}^{4}
\]
 be the corresponding probability vector. An example of an outcome
function is given by
\[
\delta\left(\varepsilon_{1},\varepsilon_{2}\right)=\begin{cases}
\left(1,0,0,0\right) & \varepsilon_{1}<\bar{\varepsilon}_{1},\varepsilon_{2}<\bar{\varepsilon}_{2}\\
\left(0,1,0,0\right) & \varepsilon_{1}\geq\bar{\varepsilon}_{1},\varepsilon_{2}<\bar{\varepsilon}_{2}\\
\left(0,0,1,0\right) & \varepsilon_{1}<\bar{\varepsilon}_{1},\varepsilon_{2}\geq\bar{\varepsilon}_{2}\\
\left(0,0,0,1\right) & \varepsilon_{1}\geq\bar{\varepsilon}_{1},\varepsilon_{2}\geq\bar{\varepsilon}_{2}
\end{cases}
\]
where $\bar{\varepsilon}_{i}\in\mathbb{R}$, $i=1,2$, represents
some threshold value. The above outcome function dictates that player
$i$ is present in the market if $\varepsilon_{i}$ is above $\bar{\varepsilon}_{i}$
and absent otherwise. $\blacksquare$
\end{example}
Assume that $\delta$ is common knowledge to the players. Suppose
that, after the state of the world $\varepsilon\in\mathcal{E}$ and
the signal profile $t\in\mathcal{T}$ are realized according to the
prior distribution $\psi\left(\cdot\right)$ and the signal distribution
$\pi\left(\cdot\vert\varepsilon\right)$, an action profile $a\in\mathcal{A}$
is drawn from the outcome function $\delta\left(\cdot\vert t\right)$,
and the players publicly observe $a$. Each player $i$, having observed
his private signal and the realized action profile $\left(t_{i},a_{i},a_{-i}\right)$,
updates his beliefs about the state of the world $\varepsilon$ using
Bayes' rule, and decides whether to adhere to the observed outcome
(play $a_{i}$) or not (deviate to $a_{i}'\neq a_{i}$). If $\delta$
is such that the players always find the realized action profiles
optimal, we call it a rational expectations equilibrium of $\left(G,S\right)$.
Let $\mathbb{E}_{\varepsilon}^{\delta}\left[u_{i}\left(a_{i}',a_{-i},\varepsilon\right)\vert t_{i},a_{i},a_{-i}\right]$
denote the expected payoff to player $i$ from choosing $a_{i}'$
conditional on observing private signal $t_{i}$ and action profile
$\left(a_{i},a_{-i}\right)$.
\begin{defn}[Rational expectations equilibrium]
 An outcome function $\delta$ is a \emph{rational expectations equilibrium}
for $\left(G,S\right)$ if, for each $i=1,...,I$, $t_{i}\in\mathcal{T}_{i}$,
$\left(a_{i},a_{-i}\right)\in\mathcal{A}$ such that $\text{Pr}^{\delta}\left(t_{i},a_{i},a_{-i}\right)>0$,
we have
\begin{equation}
\mathbb{E}_{\varepsilon}^{\delta}\left[u_{i}\left(a_{i},a_{-i},\varepsilon\right)\vert t_{i},a_{i},a_{-i}\right]\geq\mathbb{E}_{\varepsilon}^{\delta}\left[u_{i}\left(a_{i}',a_{-i},\varepsilon\right)\vert t_{i},a_{i},a_{-i}\right]\label{eq:REE condition in expectation form}
\end{equation}
for all $a_{i}'\in\mathcal{A}_{i}$.
\end{defn}
The outcome function $\delta:\mathcal{T}\to\Delta\left(\mathcal{A}\right)$
represents a reduced-form relationship between players' information
and the outcome of the game. We are agnostic about the details on
how $\delta$ came about. However, it is assumed that the players
agree on $\delta$, and use it to infer opponents' information after
observing the realized decisions. Thus, $\delta$ serves as the players'
``model'' for connecting the uncertainties to the observables.

There is nothing conceptually new; we simply apply the idea of rational
expectations equilibrium to our setting. Rational expectations equilibrium
refers to a mapping from agents' information to observable market
outcomes such that the agents do not have incentives to deviate after
observing the realized market outcomes. The key idea is that if the
final market outcome is observable and depends on agents' signals
about the state of the economy, then the agents must be able to learn
others' information based on their observation of the market outcome.
The agents are said to have ``rational expectations'' because they
refine their information based on the information available at the
equilibrium situation. In \citet{radner1979rational}, there is a
price function (or a forecast function) that maps agents' signals
to market price. The agents use their observation of their price to
not only calculate their budget but also to infer others' information
via the price function. In \citet{liu2020stability}, there is a matching
function that maps agents' signals to a (two-sided) match. The agents
use their observation of a match to infer others' information before
assessing whether they have (unilateral and pairwise) incentives to
deviate from a given match. Although the exact definition varies by
economic environment\textemdash depending on endogenous outcomes of
the model and agents' optimality conditions\textemdash the logic is
parallel.

In a rational expectations equilibrium, outcomes and beliefs are determined
simultaneously such that the stability condition (\ref{eq:stable outcome})
is satisfied. If the environment\textemdash the state of the world
and the players' signals\textemdash stays unchanged and the outcomes
are generated by a rational expectations equilibrium, the realized
decisions persist over time. In the econometric analysis, we assume
that the analyst observes these decisions at some point in time.

\subsection{Bayes Stable Equilibrium\label{subsec:Bayes-Stable-Equilibrium}}

Let us introduce Bayes stable equilibrium. Let $\left(G,S\right)$
be given. A \emph{decision rule} in $\left(G,S\right)$ is a mapping
$\sigma:\mathcal{E}\times\mathcal{T}\to\Delta\left(\mathcal{A}\right)$
that specifies a probability distribution over action profiles at
each realization of state and signals. Assume that $\sigma$ is common
knowledge to the players. Suppose the data generating process is described
as follows. First, the state of the world $\varepsilon\in\mathcal{E}$
is drawn from $\psi\left(\cdot\right)$, and the profile of private
signals $t\in\mathcal{T}$ is drawn from $\pi\left(\cdot\vert\varepsilon\right)$.
Next, an action profile $a\in\mathcal{A}$ is drawn from $\sigma\left(\cdot\vert\varepsilon,t\right)$
and publicly observed by the players. Then, each player $i$, having
observed her private signal and the realized action profile $\left(t_{i},a_{i},a_{-i}\right)$,
updates her belief about the state of the world $\varepsilon$ using
Bayes' rule and decides whether to adhere to the observed outcome
(play $a_{i}$) or not (deviate to $a_{i}'\neq a_{i}$). If the players
always have no incentives to deviate from the realized action profiles,
we call $\sigma$ a Bayes stable equilibrium.
\begin{defn}[Bayes Stable Equilibrium]
 A decision rule $\sigma$ is a \emph{Bayes stable equilibrium} for
$\left(G,S\right)$ if, for each $i=1,...,I$, $t_{i}\in\mathcal{T}_{i}$,
$\left(a_{i},a_{-i}\right)\in\mathcal{A}$ such that $\text{Pr}^{\sigma}\left(t_{i},a_{i},a_{-i}\right)>0$,
we have
\begin{equation}
\mathbb{E}_{\varepsilon}^{\sigma}\left[u_{i}\left(a_{i},a_{-i},\varepsilon\right)\vert t_{i},a_{i},a_{-i}\right]\geq\mathbb{E}_{\varepsilon}^{\sigma}\left[u_{i}\left(a_{i}',a_{-i},\varepsilon\right)\vert t_{i},a_{i},a_{-i}\right]\label{eq:BSE condition in expectation form}
\end{equation}
for all $a_{i}'\in\mathcal{A}_{i}$.
\end{defn}
It is helpful to interpret $\sigma$ as the recommendation strategy
of an omniscient mediator. The mediator commits to $\sigma$ and announces
it to the players at the beginning of the game. Then, after observing
the realized $\left(\varepsilon,t\right)$, the mediator draws an
action profile $a$ from $\sigma\left(\cdot\vert\varepsilon,t\right)$
and publicly recommends it to the players. The Bayes stable equilibrium
condition requires that the publicly recommended action profiles are
always incentive compatible to the players.

Note that an outcome function $\delta$ does not depend on the state
of the world $\varepsilon$ whereas a decision rule $\sigma$ can.
The measurability of an outcome function with respect to players'
information reflects the requirement that if any outcome is to be
achieved, it cannot depend on what they do not know. On the other
hand, a decision rule allows the realized action profiles to be correlated
with the unobserved state. In the next section, we show that the correlation
arises because Bayes stable equilibrium captures the implications
of rational expectations equilibria when the players might observe
extra signals about the state of the world that are unknown to the
analyst.

We can simplify the obedience condition (\ref{eq:BSE condition in expectation form})
so that the equilibrium conditions are linear in the decision rule.
Given that player $i$ observes signal $t_{i}$ and recommendation
$\left(a_{i},a_{-i}\right)$, the expected payoff from choosing $a_{i}'$
is
\begin{align*}
\mathbb{E}_{\varepsilon}^{\sigma}\left[u_{i}\left(a_{i}',a_{-i},\varepsilon\right)\vert t_{i},a_{i},a_{-i}\right] & =\sum_{\varepsilon}u_{i}\left(a_{i}',a_{-i},\varepsilon\right)\text{Pr}^{\sigma}\left(\varepsilon\vert t_{i},a_{i},a_{-i}\right)\\
 & =\sum_{\varepsilon}u_{i}\left(a_{i}',a_{-i},\varepsilon\right)\left(\frac{\sum_{t_{-i}}\psi\left(\varepsilon\right)\pi\left(t_{i},t_{-i}\vert\varepsilon\right)\sigma\left(a_{i},a_{-i}\vert\varepsilon,t_{i},t_{-i}\right)}{\sum_{\tilde{\varepsilon},\tilde{t}_{-i}}\psi\left(\tilde{\varepsilon}\right)\pi\left(t_{i},\tilde{t}_{-i}\vert\tilde{\varepsilon}\right)\sigma\left(a_{i},a_{-i}\vert\tilde{\varepsilon},t_{i},\tilde{t}_{-i}\right)}\right).
\end{align*}
Then, after cancelling out the denominator, which is constant across
all possible realizations of $\varepsilon\in\mathcal{E},t_{-i}\in\mathcal{T}_{-i}$,
the obedience condition (\ref{eq:BSE condition in expectation form})
can be rewritten as\footnote{Using a similar argument, we can express the rational expectation
equilibrium conditions for an outcome function $\delta$ in $\left(G,S\right)$
as: $\sum_{\varepsilon,t_{-i}}\psi_{\varepsilon}\pi_{t\vert\varepsilon}\delta_{a\vert t}u_{i}\left(a,\varepsilon\right)\geq\sum_{\varepsilon,t_{-i}}\psi_{\varepsilon}\pi_{t\vert\varepsilon}\delta_{a\vert t}u_{i}\left(a_{i}',a_{-i},\varepsilon\right),\ \forall i,t_{i},a,a_{i}'.$} 
\begin{equation}
\sum_{\varepsilon,t_{-i}}\psi_{\varepsilon}\pi_{t\vert\varepsilon}\sigma_{a\vert\varepsilon,t}u_{i}\left(a,\varepsilon\right)\geq\sum_{\varepsilon,t_{-i}}\psi_{\varepsilon}\pi_{t\vert\varepsilon}\sigma_{a\vert\varepsilon,t}u_{i}\left(a_{i}',a_{-i},\varepsilon\right),\quad\forall i\in\mathcal{I},t_{i}\in\mathcal{T}_{i},a\in\mathcal{A},a_{i}'\in\mathcal{A}_{i}.\label{eq:BSE condition}
\end{equation}
Since $\sigma$ enters the expression linearly, finding a Bayes stable
equilibrium solves a linear feasibility program, a feature that renders
estimation computationally tractable.

\subsection{Informational Robustness of Bayes Stable Equilibrium\label{subsec:Informational-Robustness-of BSE}}

In Section \ref{subsec:Rational-Expectations-Equilibrium}, we have
argued that an analyst can use rational expectations equilibrium as
a description of stable outcomes under incomplete information situations.
More often than not, however, it is difficult for the analyst to know
the true information structure governing the data generating process.
Attempts to characterize all feasible predictions (joint distribution
on states, signals, and actions) of a model by a direct enumeration
over all possible information structures are likely to be futile since
the set of information structures is large. How might the analyst
proceed without making strong assumptions on players' information?

We show that Bayes stable equilibrium provides a tractable characterization
of all rational expectations equilibrium predictions that can arise
when the players might observe more information than assumed by the
analyst. Thus, Bayes stable equilibrium serves as a tool for analyzing
stable outcomes with weak assumptions on players' information. The
informational robustness property closely resembles that of Bayes
correlated equilibrium (established in Theorem 1 of \citet{bergemann2016bayescorrelated}),
namely that Bayes correlated equilibrium provides a shortcut to charactering
all Bayes Nash equilibrium predictions that can arise when the players
might observe more information than specified by the analyst.

We formalize the idea as follows. First, to capture the idea that
players observe more information under one information structure than
under another, we introduce the notion of \emph{expansion} defined
in \citet{bergemann2016bayescorrelated}.
\begin{defn}[Expansion]
 Let $S=\left(\mathcal{T},\pi\right)$ be an information structure.
$S^{*}=\left(\mathcal{T}^{*},\pi^{*}\right)$ is an \emph{expansion}
of $S$, or $S^{*}\succsim_{E}S$, if there exists $\left(\tilde{\mathcal{T}}_{i}\right)_{i=1}^{I}$
and $\lambda:\mathcal{E}\times\mathcal{T}\to\Delta\left(\tilde{\mathcal{T}}\right)$
such that $\mathcal{T}_{i}^{*}=\mathcal{T}_{i}\times\tilde{\mathcal{T}_{i}}$
for all $i=1,...,I$ and $\pi^{*}\left(t,\tilde{t}\vert\varepsilon\right)=\pi\left(t\vert\varepsilon\right)\lambda\left(\tilde{t}\vert\varepsilon,t\right)$.
\end{defn}
Intuitively, $S^{*}$ is an expansion of $S$ if each player is allowed
to observe more signals under $S^{*}$ than under $S$. In other words,
in $S$, each player $i$ observes a private signal $t_{i}$, whereas
in $S^{*}$, each $i$ gets to observe an additional signal $\tilde{t}_{i}$
generated by an augmenting signal distribution $\lambda$. The notion
of expansion defines a partial order $\succsim_{E}$ on the set of
information structures.
\begin{example}[Continued]
 Clearly, $S^{complete}\succsim_{E}S^{private}\succsim_{E}S^{1P}\succsim_{E}S^{null}$.
For example, to show $S^{private}\succsim_{E}S^{1P}$, take $\mathcal{T}_{1}^{private}=\mathcal{E}_{1}$,
$\mathcal{T}_{2}^{private}=\mathcal{E}_{2}$, $\mathcal{T}_{1}^{1P}=\mathcal{E}_{1}$,
$\mathcal{T}_{2}^{1P}=\left\{ 0\right\} $, $\tilde{\mathcal{T}}_{1}=\left\{ 0\right\} $,
$\tilde{\mathcal{T}}_{2}=\mathcal{E}_{2}$, and $\lambda\left(\tilde{t}_{1}=0,\tilde{t}_{2}=\varepsilon_{2}\vert\varepsilon_{2}\right)=1$,
i.e., in $S^{private}$, Player 2 receives an extra signal that informs
him the realization of $\varepsilon_{2}$. $\blacksquare$
\end{example}
Let $\mathcal{P}_{\varepsilon,t,a}^{BSE}\left(G,S\right)$ be the
set of joint distributions on $\mathcal{E}\times\mathcal{T}\times\mathcal{A}$
that can arise in a Bayes stable equilibrium of $\left(G,S\right)$.
Let $\mathcal{P}_{\varepsilon,t,a}^{REE}\left(G,S\right)$ be defined
similarly. Note that if $S^{*}\succsim_{E}S$, a joint distribution
on $\mathcal{E}\times\mathcal{T}^{*}\times\mathcal{A}$ induce a marginal
on $\mathcal{E}\times\mathcal{T}\times\mathcal{A}$. The following
theorem states that by considering Bayes stable equilibrium of $\left(G,S\right)$,
we can capture all joint distributions on $\mathcal{E}\times\mathcal{T}\times\mathcal{A}$
that can arise in a rational expectations equilibrium under some information
structure that is more informative than $S$.
\begin{thm}[Informational robustness]
\label{thm:BSE and REE connection} For any basic game $G$ and information
structure $S$, $\mathcal{P}_{\varepsilon,t,a}^{BSE}\left(G,S\right)=\bigcup_{S^{*}\succsim_{E}S}\mathcal{P}_{\varepsilon,t,a}^{REE}\left(G,S^{*}\right)$.
\end{thm}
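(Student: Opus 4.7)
The plan is to prove the two set inclusions separately, adapting the argument that \citet{bergemann2016bayescorrelated} use to relate Bayes correlated equilibrium to Bayes Nash equilibrium under expansions.

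For the inclusion $\bigcup_{S^{*}\succsim_{E}S}\mathcal{P}_{\varepsilon,t,a}^{REE}(G,S^{*})\subseteq\mathcal{P}_{\varepsilon,t,a}^{BSE}(G,S)$, fix any expansion $S^{*}$ with auxiliary signal spaces $\tilde{\mathcal{T}}_{i}$ and augmenting kernel $\lambda$, and let $\delta:\mathcal{T}^{*}\to\Delta(\mathcal{A})$ be an REE of $(G,S^{*})$. Define a decision rule in $(G,S)$ by marginalizing out the extra signals,
\[
\sigma(a\mid\varepsilon,t)=\sum_{\tilde{t}\in\tilde{\mathcal{T}}}\lambda(\tilde{t}\mid\varepsilon,t)\,\delta(a\mid t,\tilde{t}).
\]
A direct computation using $\pi^{*}(t,\tilde{t}\mid\varepsilon)=\pi(t\mid\varepsilon)\lambda(\tilde{t}\mid\varepsilon,t)$ shows that $\sigma$ induces the same joint distribution on $\mathcal{E}\times\mathcal{T}\times\mathcal{A}$ as the marginal of the $(\varepsilon,t,\tilde{t},a)$-distribution under $\delta$. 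To verify the BSE obedience condition (\ref{eq:BSE condition}), I would substitute this formula for $\sigma$ into the unnormalized inequality and observe that both sides become sums over $\tilde{t}_{i}$ of the unnormalized REE inequality at the augmented type $(t_{i},\tilde{t}_{i})$. Each summand is either guaranteed by the REE condition (when $\text{Pr}^{\delta}(t_{i},\tilde{t}_{i},a)>0$) or has both sides equal to zero (otherwise, since the non-negative weights $\psi_{\varepsilon}\pi^{*}_{t,\tilde{t}\mid\varepsilon}\delta_{a\mid t,\tilde{t}}$ must all vanish), so summing over $\tilde{t}_{i}$ preserves the inequality.

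For the reverse inclusion, fix a BSE $\sigma$ of $(G,S)$ and construct an expansion by setting $\tilde{\mathcal{T}}_{i}=\mathcal{A}_{i}$ for each $i$ and $\lambda(\tilde{t}\mid\varepsilon,t)=\sigma(\tilde{t}\mid\varepsilon,t)$; the augmentation draws a profile of private action recommendations from the BSE decision rule itself. Define an outcome function in $(G,S^{*})$ by $\delta(a\mid t,\tilde{t})=\mathbf{1}\{a=\tilde{t}\}$, i.e., players carry out the recommendations. Direct substitution yields $\text{Pr}^{\delta}(\varepsilon,t,a)=\text{Pr}^{\sigma}(\varepsilon,t,a)$. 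To check that $\delta$ is an REE, note that any information set $(t_{i},\tilde{t}_{i},a)$ of positive probability must have $\tilde{t}_{i}=a_{i}$, so $\tilde{t}_{i}$ is informationally redundant once $a$ is observed; consequently the Bayes posterior over $\varepsilon$ equals $\text{Pr}^{\sigma}(\varepsilon\mid t_{i},a)$, and the REE obedience inequality reduces verbatim to the BSE obedience inequality for $\sigma$.

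I expect the main difficulty to lie in the second direction, because the BSE decision rule may depend on $\varepsilon$ and thus cannot itself serve as an outcome function of an REE, which must be measurable with respect to signals only. The construction above sidesteps this by \emph{promoting} the state-dependent randomization $\sigma$ into an additional signal through $\lambda$, so that $\delta$ becomes a deterministic function of the enriched type; the measurability obstruction is then resolved by design, and what remains is the routine check that conditioning on a signal pinned down by the observed action profile does not alter Bayes updating.
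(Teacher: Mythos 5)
Your proposal is correct and follows essentially the same route as the paper: the forward direction promotes the state-dependent decision rule $\sigma$ into an augmenting signal $\lambda$ paired with a degenerate outcome function, and the reverse direction marginalizes the extra signals out of the REE inequalities, with your handling of the zero-probability augmented types being a careful (and valid) version of the paper's "integrate out $\tilde t_i$" step. The only difference is cosmetic: the paper has the mediator's recommendation arrive as a \emph{public} signal equal to the whole profile $a$, while you give each player only the private component $a_i$; both work because the realized action profile is publicly observed anyway, so the conditioning event coincides with $\left\{ t_{i},a\right\}$ in either construction.
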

The proof of the theorem closely follows that of \citet{bergemann2016bayescorrelated}
Theorem 1. The ``$\subseteq$'' direction is established by taking
the equilibrium decision rule $\sigma:\mathcal{E}\times\mathcal{T}\to\Delta\left(\mathcal{A}\right)$
as an augmenting signal function that generates a ``public signal''
$a$ that is commonly observed by the agents. We then construct a
trivial outcome function $\delta$ that places unit mass on the recommended
outcome, i.e., $\delta\left(\tilde{a}\vert a\right)=1$ if and only
if $\tilde{a}=a$. Then the rational expectations equilibrium condition
for $\delta$ in the game with augmented information structure is
implied by the obedience condition for $\sigma$. Conversely, the
``$\supseteq$'' direction is established by integrating out the
``extra signals'' $\tilde{t}_{i}$ from the rational expectations
equilibrium condition, which directly implies the obedience condition
for the induced decision rule $\sigma\left(a\vert\varepsilon,t\right)\equiv\sum_{\tilde{t}}\lambda\left(\tilde{t}\vert\varepsilon,t\right)\delta\left(a\vert t,\tilde{t}\right)$.

Theorem \ref{thm:BSE and REE connection} can be framed in terms of
marginal distributions on the action profiles. This characterization
is more relevant for econometric analysis since typical data contain
information on players' decisions but not the signals nor the state
of the world. Let $\mathcal{P}_{a}^{BSE}\left(G,S\right)$ be the
set of marginal distributions on $\mathcal{A}$ that can arise in
a Bayes stable equilibrium of $\left(G,S\right)$. Let $\mathcal{P}_{a}^{REE}\left(G,S\right)$
be defined similarly.
\begin{cor}[Observational equivalence]
\label{cor:CCP for BSE and REE} For any basic game $G$ and information
structure $S$, $\mathcal{P}_{a}^{BSE}\left(G,S\right)=\bigcup_{S^{*}\succsim_{E}S}\mathcal{P}_{a}^{REE}\left(G,S^{*}\right)$.
\end{cor}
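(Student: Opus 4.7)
The plan is to deduce Corollary \ref{cor:CCP for BSE and REE} directly from Theorem \ref{thm:BSE and REE connection} by taking marginals. The key observation is that marginalization on $\mathcal{A}$ is a linear operator on distributions that commutes with set unions, and it is insensitive to whether we first project a distribution on $\mathcal{E}\times\mathcal{T}^{*}\times\mathcal{A}$ down to $\mathcal{E}\times\mathcal{T}\times\mathcal{A}$ or directly sum out all of $(\varepsilon,t,\tilde{t})$; in either case the $\mathcal{A}$-marginal is identical. Hence Corollary \ref{cor:CCP for BSE and REE} follows by pushing Theorem \ref{thm:BSE and REE connection} through the marginalization map $\mathrm{marg}_{a}$.

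For the $\subseteq$ direction, I would take an arbitrary $p\in\mathcal{P}_{a}^{BSE}(G,S)$ and lift it: by definition there exists a joint distribution $P\in\mathcal{P}_{\varepsilon,t,a}^{BSE}(G,S)$ whose $\mathcal{A}$-marginal is $p$. Applying Theorem \ref{thm:BSE and REE connection}, $P$ coincides with the induced marginal on $\mathcal{E}\times\mathcal{T}\times\mathcal{A}$ of some joint $P^{*}\in\mathcal{P}_{\varepsilon,t,a}^{REE}(G,S^{*})$ for an expansion $S^{*}\succsim_{E}S$. Because summing out the extra signals $\tilde{t}$ and the original signals $t$ in any order yields the same $\mathcal{A}$-marginal, the $\mathcal{A}$-marginal of $P^{*}$ is exactly $p$, so $p\in\mathcal{P}_{a}^{REE}(G,S^{*})\subseteq\bigcup_{S^{*}\succsim_{E}S}\mathcal{P}_{a}^{REE}(G,S^{*})$.

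For the $\supseteq$ direction, I would reverse the argument: pick $p\in\mathcal{P}_{a}^{REE}(G,S^{*})$ for some $S^{*}\succsim_{E}S$ and let $P^{*}\in\mathcal{P}_{\varepsilon,t,a}^{REE}(G,S^{*})$ be a corresponding joint distribution. Define $P$ as the marginal of $P^{*}$ on $\mathcal{E}\times\mathcal{T}\times\mathcal{A}$ obtained by summing out $\tilde{t}$. Theorem \ref{thm:BSE and REE connection} gives $P\in\mathcal{P}_{\varepsilon,t,a}^{BSE}(G,S)$, and since the $\mathcal{A}$-marginals of $P$ and $P^{*}$ agree and both equal $p$, we conclude $p\in\mathcal{P}_{a}^{BSE}(G,S)$.

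No serious obstacle arises: the entire argument is a bookkeeping exercise that marginalization commutes with the set-theoretic identity established in Theorem \ref{thm:BSE and REE connection}. The only mild care required is to confirm that the conventions used in the statement of Theorem \ref{thm:BSE and REE connection}, where the RHS is treated as a set of distributions on $\mathcal{E}\times\mathcal{T}\times\mathcal{A}$ (via the obvious projection from $\mathcal{E}\times\mathcal{T}^{*}\times\mathcal{A}$), are compatible with the convention used here for the $\mathcal{A}$-marginals of REE on expanded information structures. That compatibility is immediate because marginalizing on $\mathcal{A}$ does not depend on whether the auxiliary coordinates are in $\mathcal{T}$ or in $\mathcal{T}^{*}$.
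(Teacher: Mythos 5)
Your proposal is correct and follows essentially the same route as the paper: both deduce the corollary from Theorem \ref{thm:BSE and REE connection} by noting that the equilibrium objects on each side induce the same joint distribution on $\mathcal{E}\times\mathcal{T}\times\mathcal{A}$ and hence the same marginal on $\mathcal{A}$ (the paper phrases this via ``$\delta$ induces $\sigma$ and $\sigma$ induces $\phi$, so $\delta$ induces $\phi$,'' which is exactly your marginalization-commutes argument). No gap.
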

Intuitively, allowing more information to the players should shrink
the set of equilibria because it tightens the obedience constraints.
The following corollary formalizes the idea.
\begin{cor}
\label{cor:more info leads to tighter set} For any basic game $G$
and information structures $S$ and $S'$ such that $S\succsim_{E}S'$,
$\mathcal{P}_{\varepsilon,t,a}^{BSE}\left(G,S\right)\subseteq\mathcal{P}_{\varepsilon,t,a}^{BSE}\left(G,S'\right)$.
\end{cor}

\subsection{Relationship to Other Solution Concepts\label{subsec:Relationship-to-Other}}

 In this section, we compare our solution concepts to other existing
solution concepts that have been frequently employed for empirical
analysis. First, we compare rational expectations equilibrium and
pure strategy Nash equilibrium, which are decentralized solution concepts.
We show that our framework attains pure strategy Nash equilibrium
as a special case. Second, we compare Bayes stable equilibrium and
Bayes correlated equilibrium, which are centralized solution concepts
that rely on a mediator analogy. We show that Bayes stable equilibrium
refines Bayes correlated equilibrium as the former imposes stronger
restrictions than the latter. We also provide an overview of how our
framework relates to the literature.

\subsubsection{Comparison to Pure Strategy Nash Equilibrium}

The following theorem says that pure strategy Nash equilibrium arises
as a special case of rational expectations equilibrium (or Bayes stable
equilibrium) when strong assumptions on players' information are made.
\begin{thm}[Relationship to pure strategy Nash equilibrium]
\label{thm:Relationship to PSNE} 
\begin{enumerate}
\item \label{enu:REE PSNE 1}Let $G$ be an arbitrary basic game and let
$S^{complete}$ be an information structure in which the state of
the world $\varepsilon$ is publicly observed by the players. An outcome
function $\delta:\mathcal{E}\to\Delta\left(\mathcal{A}\right)$ is
a rational expectations equilibrium of $\left(G,S^{complete}\right)$
if and only if, for every $\varepsilon\in\mathcal{E}$, $\delta_{\tilde{a}\vert\varepsilon}>0$
implies $\tilde{a}$ is a pure-strategy Nash equilibrium action profile
at $\varepsilon$. Furthermore, $\delta$ is a rational expectations
equilibrium of $\left(G,S^{complete}\right)$ if and only if it is
a Bayes stable equilibrium of $\left(G,S^{complete}\right)$.
\item \label{enu:REE PSNE 2}Suppose that the basic game $G$ is such that
$\varepsilon=\left(\varepsilon_{1},...,\varepsilon_{I}\right)$ and
$u_{i}\left(a,\varepsilon\right)=u_{i}\left(a,\varepsilon_{i}\right)$,
and let $S^{private}$ be an information structure in which each player
$i$ observes $\varepsilon_{i}$. Then an outcome function $\delta:\mathcal{E}\to\Delta\left(\mathcal{A}\right)$
is a rational expectations equilibrium of $\left(G,S^{private}\right)$
if and only if it is a rational expectations equilibrium of $\left(G,S^{complete}\right)$.
Furthermore, $\delta$ is a rational expectations equilibrium of $\left(G,S^{private}\right)$
if and only if it is a Bayes stable equilibrium of $\left(G,S^{private}\right)$.
\end{enumerate}
\end{thm}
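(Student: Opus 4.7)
The plan is to exploit the linearized obedience conditions (equation~\ref{eq:BSE condition} for BSE and its analog for REE displayed in the footnote) and observe that under $S^{complete}$ and $S^{private}$ the signal distribution $\pi_{t\vert\varepsilon}$ collapses to a point mass, which makes the conditional expectations trivial. I treat the equivalence with PSNE and the REE--BSE equivalence separately for each part.

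\textbf{Part~\ref{enu:REE PSNE 1}.} Under $S^{complete}$ we have $\pi_{t\vert\varepsilon}=1$ iff $t_{1}=\cdots=t_{I}=\varepsilon$. Substituting into the REE obedience inequality for $\delta$, the sum $\sum_{\varepsilon,t_{-i}}$ collapses to a single term with $\varepsilon=t_{i}$ and $t_{-i}=(t_{i},\dots,t_{i})$, yielding $\psi_{t_{i}}\delta_{a\vert t_{i}}\bigl[u_{i}(a,t_{i})-u_{i}(a_{i}',a_{-i},t_{i})\bigr]\ge 0$. Because $\psi$ has full support, whenever $\delta_{a\vert\varepsilon}>0$ this is the PSNE inequality at $\varepsilon=t_{i}$; conversely, if every outcome in the support of $\delta(\cdot\vert\varepsilon)$ is PSNE at $\varepsilon$, the inequality is trivially satisfied. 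For the REE$=$BSE claim, observe that under $S^{complete}$ the signal profile deterministically equals $(\varepsilon,\dots,\varepsilon)$, so a BSE decision rule $\sigma(\cdot\vert\varepsilon,t)$ is only relevant on that diagonal and is effectively a function of $\varepsilon$ alone; the identical collapse applied to equation~\ref{eq:BSE condition} shows the BSE obedience inequality reduces to the same PSNE condition.

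\textbf{Part~\ref{enu:REE PSNE 2}.} Now $u_{i}(a,\varepsilon)=u_{i}(a,\varepsilon_{i})$ and under $S^{private}$ we have $\pi_{t\vert\varepsilon}=1$ iff $t_{i}=\varepsilon_{i}$ for each $i$. The sum $\sum_{\varepsilon,t_{-i}}$ in the REE obedience inequality collapses to a sum over $\varepsilon_{-i}$, and the utility difference $u_{i}(a,\varepsilon_{i})-u_{i}(a_{i}',a_{-i},\varepsilon_{i})$ factors out, giving
\[
\bigl[u_{i}(a,\varepsilon_{i})-u_{i}(a_{i}',a_{-i},\varepsilon_{i})\bigr]\sum_{\varepsilon_{-i}}\psi(\varepsilon_{i},\varepsilon_{-i})\delta(a\vert\varepsilon_{i},\varepsilon_{-i})\ge 0.
\]
Whenever the marginal probability of $(\varepsilon_{i},a)$ is positive, this reduces to $u_{i}(a,\varepsilon_{i})\ge u_{i}(a_{i}',a_{-i},\varepsilon_{i})$. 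By Part~\ref{enu:REE PSNE 1} combined with separability, the $S^{complete}$ REE condition requires exactly the same inequality for every $(\varepsilon,a)$ with $\delta_{a\vert\varepsilon}>0$; since the inequality does not involve $\varepsilon_{-i}$, quantifying it over all such $(\varepsilon,a)$ is equivalent to quantifying it over all $(\varepsilon_{i},a)$ with positive marginal probability, and the two REE conditions coincide. The same calculation applied to $\sigma$ (which, since $t$ deterministically reveals $\varepsilon$ under $S^{private}$, is effectively a function of $\varepsilon$) yields REE$=$BSE.

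The main obstacle is not any single difficult computation but careful bookkeeping of the quantifiers and supports: one must verify that the ``$(\varepsilon,a)$ with $\delta_{a\vert\varepsilon}>0$'' indexing under $S^{complete}$ collapses correctly to the ``$(\varepsilon_{i},a)$ with positive marginal probability'' indexing under $S^{private}$, and that separability is deployed precisely where it is needed to integrate out $\varepsilon_{-i}$ from the expected payoff. Once these details are managed, the argument is a direct algebraic reduction from the definitions, using only the full support of $\psi$.
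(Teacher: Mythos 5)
Your proposal is correct and follows essentially the same route as the paper: collapse the obedience sums using the degenerate signal distributions, invoke full support of $\psi$ to reduce to pointwise PSNE inequalities on the support of $\delta$, and for $S^{private}$ observe that the inequality depends only on $(\varepsilon_{i},a)$ so that quantifying over the support of $\delta$ under $S^{complete}$ matches quantifying over pairs with positive marginal $\varphi(a,\varepsilon_{i})=\sum_{\varepsilon_{-i}}\psi_{\varepsilon}\delta_{a\vert\varepsilon}$ (which is exactly the paper's device). Your treatment of the REE--BSE equivalence, noting that signals exhaust the state under both information structures so decision rules are effectively functions of $\varepsilon$ alone, also matches the paper's remark.
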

Theorem \ref{thm:Relationship to PSNE}.\ref{enu:REE PSNE 1} states
that when information is complete, rational expectations equilibrium
is observationally equivalent to pure strategy Nash equilibrium. A
rational expectations equilibrium outcome function $\delta$ is just
a selection device over pure strategy Nash outcomes. It also implies
that, when players have complete information, a rational expectations
equilibrium exists if and only if there is at least one pure strategy
Nash equilibrium action profile at each $\varepsilon\in\mathcal{E}$
(on the support of $\psi$). 

Theorem \ref{thm:Relationship to PSNE}.\ref{enu:REE PSNE 2} states
that when $\varepsilon$ is simply a vector of player-specific payoff
shocks\textemdash a common assumption for empirical models of discrete
games\textemdash we can use weaker informational assumptions to rationalize
pure strategy Nash outcomes. Intuitively, when each player $i$ observes
his type $\varepsilon_{i}$ and an outcome $a$ in an equilibrium
situation, opponents' types $\varepsilon_{-i}$ are \emph{payoff-irrelevant}.
In a pure strategy Nash equilibrium, $i$ uses its knowledge of $\varepsilon_{-i}$
to \emph{predict} $a_{-i}$. However, in a rational expectations equilibrium,
$i$ \emph{observes} $a_{-i}$, so $\varepsilon_{-i}$ plays no role
for $i$. Therefore, under the rational expectations equilibrium assumption,
it is sufficient that player $i$ observes $\varepsilon_{i}$ in order
to support pure strategy Nash outcomes.

Note that under the assumptions in the theorem, there is no material
difference between an outcome function and a decision rule because
players' signals exhaust information about the state of the world,
so Bayes stable equilibrium and rational expectations equilibrium
are equivalent. 

\subsubsection{Comparison to Bayes Correlated Equilibrium}

Bayes stable equilibrium refines Bayes correlated equilibrium because
equilibrium conditions for the former are stronger. To describe Bayes
correlated equilibrium, suppose that an omniscient mediator commits
to a decision rule $\sigma:\mathcal{E}\times\mathcal{T}\to\Delta\left(\mathcal{A}\right)$
in $\left(G,S\right)$ and announces it to the players so that $\sigma$
is common knowledge to the players. After the state $\varepsilon$
and signal profile $t$ are drawn from $\psi$ and $\pi$ respectively,
the mediator observes $\left(\varepsilon,t\right)$ and draws an action
profile $a$ from the decision rule $\sigma\left(\cdot\vert\varepsilon,t\right)$.
Then, the mediator \emph{privately} recommends $a_{i}$ to each player
$i$. Each player $i$, having observed his private signal $t_{i}$
and the privately recommended action $a_{i}$, decides whether to
follow the recommendation (play $a_{i}$) or not (deviate to $a_{i}'\neq a_{i}$).
If the players are always obedient, then the decision rule is a Bayes
correlated equilibrium of $\left(G,S\right)$.

Formally, a decision rule $\sigma:\mathcal{E}\times\mathcal{T}\to\Delta\left(\mathcal{A}\right)$
in $\left(G,S\right)$ is a \emph{Bayes correlated equilibrium} if
for each $i\in\mathcal{I}$, $t_{i}\in\mathcal{T}_{i}$, and $a_{i}\in\mathcal{A}_{i}$,
we have
\[
\mathbb{E}_{\left(\varepsilon,a_{-i}\right)}^{\sigma}\left[u_{i}\left(a_{i},a_{-i},\varepsilon\right)\vert t_{i},a_{i}\right]\geq\mathbb{E}_{\left(\varepsilon,a_{-i}\right)}^{\sigma}\left[u_{i}\left(a_{i}',a_{-i},\varepsilon\right)\vert t_{i},a_{i}\right]
\]
for all $a_{i}'\in\mathcal{A}_{i}$ whenever $\text{Pr}^{\sigma}\left(t_{i},a_{i}\right)>0$,
or more compactly, 
\begin{equation}
\sum_{\varepsilon,t_{-i},a_{-i}}\psi_{\varepsilon}\pi_{t\vert\varepsilon}\sigma_{a\vert\varepsilon,t}u_{i}\left(a_{i},a_{-i},\varepsilon\right)\geq\sum_{\varepsilon,t_{-i},a_{-i}}\psi_{\varepsilon}\pi_{t\vert\varepsilon}\sigma_{a\vert\varepsilon,t}u_{i}\left(a_{i}',a_{-i},\varepsilon\right),\quad\forall i,t_{i},a_{i},a_{i}'.\label{eq:BCE condition}
\end{equation}

The only difference between Bayes stable equilibrium and Bayes correlated
equilibrium is that the former assumes each player $i$ observes $\left(a_{i},a_{-i}\right)$
whereas the latter assumes each $i$ observes only $a_{i}$, but not
$a_{-i}$. While the Bayes correlated equilibrium conditions (\ref{eq:BCE condition})
integrate out opponents' actions $a_{-i}$ since each player $i$
needs to form expectation over $a_{-i}$, Bayes stable equilibrium
conditions (\ref{eq:BSE condition}) condition on $a_{-i}$ because
$a_{-i}$ is observed to $i$ at the equilibrium situation. The following
is immediate.
\begin{thm}[Relationship to Bayes correlated equilibrium]
\label{thm:BSE and BCE} If a decision rule $\sigma$ is a Bayes
stable equilibrium of $\left(G,S\right)$, it is a Bayes correlated
equilibrium of $\left(G,S\right)$.
\end{thm}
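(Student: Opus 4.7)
The plan is to exploit the already-derived linear forms of the two equilibrium conditions and observe that the Bayes stable equilibrium (BSE) condition in equation (\ref{eq:BSE condition}) is an inequality indexed by $(i, t_i, a_i, a_{-i}, a_i')$, while the Bayes correlated equilibrium (BCE) condition in equation (\ref{eq:BCE condition}) is indexed only by $(i, t_i, a_i, a_i')$ and is obtained by integrating out $a_{-i}$. So the entire argument reduces to showing that summing the BSE inequalities over $a_{-i}\in\mathcal{A}_{-i}$ preserves the direction of the inequality and reproduces the BCE inequality term-by-term.

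Concretely, I would fix arbitrary $i\in\mathcal{I}$, $t_i\in\mathcal{T}_i$, $a_i\in\mathcal{A}_i$, and $a_i'\in\mathcal{A}_i$. For each $a_{-i}\in\mathcal{A}_{-i}$, the BSE condition applied at $a=(a_i,a_{-i})$ gives
\begin{equation*}
\sum_{\varepsilon,t_{-i}}\psi_{\varepsilon}\pi_{t\vert\varepsilon}\sigma_{(a_i,a_{-i})\vert\varepsilon,t}\,u_{i}(a_i,a_{-i},\varepsilon)\;\geq\;\sum_{\varepsilon,t_{-i}}\psi_{\varepsilon}\pi_{t\vert\varepsilon}\sigma_{(a_i,a_{-i})\vert\varepsilon,t}\,u_{i}(a_i',a_{-i},\varepsilon).
\end{equation*}
Since a sum of inequalities (pointing in the same direction, with nonnegative weights implicit in the structure) is still an inequality, summing both sides over $a_{-i}$ yields
\begin{equation*}
\sum_{\varepsilon,t_{-i},a_{-i}}\psi_{\varepsilon}\pi_{t\vert\varepsilon}\sigma_{a\vert\varepsilon,t}\,u_{i}(a_i,a_{-i},\varepsilon)\;\geq\;\sum_{\varepsilon,t_{-i},a_{-i}}\psi_{\varepsilon}\pi_{t\vert\varepsilon}\sigma_{a\vert\varepsilon,t}\,u_{i}(a_i',a_{-i},\varepsilon),
\end{equation*}
which is exactly the BCE condition (\ref{eq:BCE condition}) at $(i,t_i,a_i,a_i')$. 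Since $(i,t_i,a_i,a_i')$ was arbitrary, $\sigma$ satisfies all BCE obedience constraints and hence is a BCE.

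There is no genuine obstacle here because the author has already written the two conditions in their simplified ``multiplied-through'' linear form, so the positivity issue attached to the conditional expectations version ($\text{Pr}^{\sigma}(t_i,a_i,a_{-i})>0$) does not need separate handling: whenever the marginal vanishes, both sides of the summand are already zero. The only point worth emphasizing in the write-up is the conceptual one the paper already flagged---BSE conditions on the observation of $a_{-i}$, while BCE integrates $a_{-i}$ out---so that the inclusion of equilibrium sets is just a manifestation of the fact that conditioning on more information imposes (weakly) more obedience constraints.
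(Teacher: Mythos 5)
Your proof is correct and is exactly the argument the paper intends: the paper states the result is "immediate" precisely because the BCE obedience constraint at $(i,t_i,a_i,a_i')$ is the sum over $a_{-i}$ of the BSE obedience constraints at $(i,t_i,(a_i,a_{-i}),a_i')$ in their multiplied-through linear forms, which is what you verify. Your remark that the positivity qualifier needs no separate treatment in the linearized form is also right, since every summand vanishes when $\text{Pr}^{\sigma}(t_i,a_i,a_{-i})=0$.
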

Outcomes generated by a Bayes correlated equilibrium may be subject
to regret; a player who observes the realized decisions of the opponents
might want to revise her action. In contrast, Bayes stable equilibrium
explicitly requires that such regret is absent. When information is
complete, Bayes correlated equilibrium reduces to the canonical correlated
equilibrium, whereas Bayes stable equilibrium reduces to pure strategy
Nash equilibrium in the sense described in Theorem \ref{thm:Relationship to PSNE}.
When there is a single player, the two solution concepts are identical
because there is no informational feedback from observing opponents'
actions. 

\subsubsection{Relationship to the Literature}

Although the relationship between our solution concepts and static
Nash equilibrium can be gleaned from the above theorems, we provide
a compact review and discuss connections to the related literature
for the readers.

Let $\mathcal{P}_{\varepsilon,a}^{SC}\left(G,S\right)$ denote the
set of predictions (joint distributions on $\mathcal{E}\times\mathcal{A}$)
that can arise in game $\left(G,S\right)$ under solution concept
$SC$. We use $\mathcal{P}_{\varepsilon,a}^{NE}\left(G,S\right)$
and $\mathcal{P}_{\varepsilon,a}^{PSNE}\left(G,S\right)$ to represent
the set of (mixed-strategy) Bayes Nash equilibrium predictions and
pure strategy Bayes Nash equilibrium predictions respectively. Note
that $\mathcal{P}_{\varepsilon,a}^{PSNE}\left(G,S\right)\subseteq\mathcal{P}_{\varepsilon,a}^{NE}\left(G,S\right)$
since pure strategies are special cases of mixed strategies. The predictions
of various solution concepts are related to each other in the following
way.
\begin{cor}[Relationships among predictions]
\label{cor:Predictions} Let $G$ be an arbitrary basic game. 
\begin{enumerate}
\item For any information structure $S$, $\mathcal{P}_{\varepsilon,a}^{BSE}\left(G,S\right)=\bigcup_{\tilde{S}\succsim_{E}S}\mathcal{P}_{\varepsilon,a}^{REE}\left(G,\tilde{S}\right)$,
and $\mathcal{P}_{\varepsilon,a}^{BCE}\left(G,S\right)=\bigcup_{\tilde{S}\succsim_{E}S}\mathcal{P}_{\varepsilon,a}^{NE}\left(G,\tilde{S}\right)$.
\item If $S\succsim_{E}S'$, then $\mathcal{P}_{\varepsilon,a}^{BSE}\left(G,S\right)\subseteq\mathcal{P}_{\varepsilon,a}^{BSE}\left(G,S'\right)$,
and $\mathcal{P}_{\varepsilon,a}^{BCE}\left(G,S\right)\subseteq\mathcal{P}_{\varepsilon,a}^{BCE}\left(G,S'\right)$.
However, if $S\neq S'$ (even if one is an expansion of the other),
no clear relationship exists between $\mathcal{P}_{\varepsilon,a}^{REE}\left(G,S\right)$
and $\mathcal{P}_{\varepsilon,a}^{REE}\left(G,S'\right)$, nor between
$\mathcal{P}_{\varepsilon,a}^{NE}\left(G,S\right)$ and $\mathcal{P}_{\varepsilon,a}^{NE}\left(G,S'\right)$. 
\item For any information structure $S$, $\mathcal{P}_{\varepsilon,a}^{BSE}\left(G,S\right)\subseteq\mathcal{P}_{\varepsilon,a}^{BCE}\left(G,S\right)$.
\item If $S=S^{complete}$, then $\mathcal{P}_{\varepsilon,a}^{BSE}\left(G,S\right)=\mathcal{P}_{\varepsilon,a}^{REE}\left(G,S\right)=\mathcal{P}_{\varepsilon,a}^{PSNE}\left(G,S\right)$,
but $\mathcal{P}_{\varepsilon,a}^{BCE}\left(G,S\right)$ is the set
of complete information correlated equilibrium predictions. 
\end{enumerate}
\end{cor}
In the literature on econometric models of games, it has been common
to assume that the unobserved state variable $\varepsilon$ is a vector
of $\varepsilon_{i}$'s that only enter firm $i$'s payoff (see our
running example and Theorem \ref{thm:Relationship to PSNE}.\ref{enu:REE PSNE 2}).
Under this structure on payoffs and states, most papers have assumed
that the data generating process can be described by a Nash equilibrium
with information structure set to either $S^{complete}$ or $S^{incomplete}$
(an important exception is \citet{grieco2014discrete} who considers
a flexible, but parametric, information structure that nests both).
Examples of works that use pure strategy Nash equilibrium under $S^{complete}$,
which is a special case of our framework, include Bresnahan and Reiss
(\citeyear{bresnahan_entry_1990,bresnahan1991empirical,bresnahan_entry_1991}),
\citet{tamer2003incomplete}, \citet{ciliberto2009marketstructure},
\citet{bajari2010identification}, \citet{kline_identification_2015},
and \citet{aradillas2022inference}. Examples of works that use pure
strategy (Bayes) Nash equilibrium under $S^{private}$ include \citet{seim_empirical_2006},
\citet{pesendorfer_asymptotic_2008}, \citet{sweeting_strategic_2009},
\citet{aradillas-lopez_semiparametric_2010}, and \citet{bajari2010estimating}.
Since generally $\mathcal{P}_{\varepsilon,a}^{PSNE}\left(G,S^{complete}\right)\neq\mathcal{P}_{\varepsilon,a}^{PSNE}\left(G,S^{private}\right)$,
the two sets of papers rely on different model predictions. 

\citet{magnolfi2021estimation} motivate their analysis by arguing
that researchers often do not know whether the true information structure
is $S^{complete}$ or $S^{private}$ and propose using Bayes correlated
equilibrium under $S^{private}$. Bayes correlated equilibrium summarizes
the implications of Nash equilibrium with unknown information structure
in the sense that $\mathcal{P}_{\varepsilon,a}^{BCE}\left(G,S^{private}\right)=\bigcup_{\tilde{S}\succsim S^{priavte}}\mathcal{P}_{\varepsilon,a}^{NE}\left(G,\tilde{S}\right)$,
as established by \citet{bergemann2016bayescorrelated}. \citet*{syrgkanis2021inference}
apply Bayes correlated equilibrium to common-value and private-value
auctions. 

There has been no work that develops an empirical framework to tackle
the regret problem associated with Nash equilibrium.\footnote{\citet{yang2020learning} allows for information updating after observing
opponents' actions in the context of fast-food industry, but models
the interaction as a dynamic game. In contrast to his framework that
requires panel data, our framework allows the researcher to work with
cross-sectional data. } Rational expectations equilibrium provides a simple framework for
capturing steady state situations in which players observe opponents'
actions but do not deviate. Similarly to Bayes correlated equilibrium,
Bayes stable equilibrium provides robustness to informational assumptions
because $\mathcal{P}_{\varepsilon,a}^{BSE}\left(G,S\right)=\bigcup_{\tilde{S}\succsim_{E}S}\mathcal{P}_{\varepsilon,a}^{REE}\left(G,\tilde{S}\right)$.
Complete information pure strategy Nash equilibrium arises as a special
case of our framework because $\mathcal{P}_{\varepsilon,a}^{PSNE}\left(G,S^{complete}\right)=\mathcal{P}_{\varepsilon,a}^{BSE}\left(G,S^{complete}\right)\subseteq\mathcal{P}_{\varepsilon,a}^{BSE}\left(G,S\right)$
for any information structure $S$. However, the predictions under
rational expectations equilibrium or Bayes stable equilibrium are
generally unrelated to incomplete information Nash equilibrium predictions,
e.g., $\mathcal{P}_{\varepsilon,a}^{PSNE}\left(G,S^{private}\right)\not\in\mathcal{P}_{\varepsilon,a}^{BSE}\left(G,S^{private}\right)$.
Bayes stable equilibrium predictions are tighter than Bayes correlated
equilibrium predictions ($\mathcal{P}_{\varepsilon,a}^{BSE}\left(G,S\right)\subseteq\mathcal{P}_{\varepsilon,a}^{BCE}\left(G,S\right)$
for any $\left(G,S\right)$). In the empirical application, we show
that Bayes stable equilibrium can lead to a substantially tighter
identified set compared to Bayes correlated equilibrium; leveraging
the assumption that market outcomes are readily observed by the players
can add substantial identifying power.

\subsection{Existence and Uniqueness of Bayes Stable Equilibrium\label{subsec:Existence-and-Uniqueness}}

The reader may wonder about high-level conditions for the existence
and uniqueness of Bayes stable equilibrium. Unfortunately, we do not
have results applicable to a large class of games relevant to empirical
work. The existence and uniqueness of Bayes stable equilibrium are
generally not guaranteed. For instance, in the matching pennies game,
there is no Bayes stable equilibrium because there is always one player
who wants to deviate. In the battle of the sexes game, there is a
continuum of Bayes stable equilibria because any decision rule that
represents a mixture over the two pure strategy Nash equilibrium action
profiles corresponds to a Bayes stable equilibrium. 

The task is actually non-trivial even for complete information pure
strategy Nash equilibrium (which Bayes stable equilibrium boils down
to when information is complete) in discrete games environment because
we cannot use the standard fixed point theorems that depend on continuity,
convexity, and compactness. Existence and uniqueness of pure strategy
Nash equilibria in discrete games are typically checked numerically
(e.g., \citet{ciliberto2009marketstructure}'s algorithm enumerates
over all action profiles at each state to find all pure strategy Nash
equilibria). Similarly, the existence and uniqueness of Bayes stable
equilibrium should be checked on a case-by-case basis.

Fortunately, this paper provides positive results. First, knowledge
about the existence and uniqueness of complete information pure strategy
Nash equilibrium can be used to infer the existence and uniqueness
of Bayes stable equilibrium (see Theorem \ref{thm:Relationship to PSNE}).
The researcher can apply this result when dealing with a class of
games for which the existence and uniqueness of pure strategy Nash
equilibrium are well-understood (e.g., two-player entry games). Second,
numerically checking for existence of a Bayes stable equilibrium (or
a rational expectations equilibrium) can be done quickly by solving
a linear program. To the best of our knowledge, a linear programming
approach to checking the existence of complete information pure strategy
Nash equilibrium is new.

\section{Econometric Model and Identification\label{sec:Identification}}

In this section, we describe the econometric model. We characterize
the identified set under the assumption that data are generated by
a Bayes stable equilibrium and discuss its properties.

\subsection{Setup}

Let us denote observable market covariates as $x\in\mathcal{X}$ where
$\mathcal{X}$ is a finite set; $x$ is common knowledge to the players
and observed by the econometrician. At each $x\in\mathcal{X}$, the
player interact in a game $\left(G^{x,\theta},S^{x}\right)$ where
$G^{x,\theta}=\langle\mathcal{E},\psi^{x,\theta},\left(\mathcal{A}_{i},u_{i}^{x,\theta}\right)_{i=1}^{I}\rangle$
is the basic game, $S^{x}=\langle\left(\mathcal{T}_{i}\right)_{i=1}^{I},\pi^{x}\rangle$
is the information structure, and $\theta\in\Theta$ is a finite-dimensional
parameter the analyst wish to identify.\footnote{It is without loss to assume that $\mathcal{E}$ and $\mathcal{T}$
do not depend on $x$ because we can use $\mathcal{E}\equiv\cup_{x}\mathcal{E}^{x}$
and $\mathcal{T}\equiv\cup_{x}\mathcal{T}^{x}$. In principle, we
can also let $\theta$ enter the information structures, which would
make the information structures be part of the objects the econometrician
wants to identify. In this paper, however, we focus on the case where
$\theta$ only enters the payoff functions and the distribution of
the payoff shocks.} We maintain the assumption that the set $\mathcal{E}$ is finite
in order to make estimation feasible.\footnote{If the benchmark distribution of unobservables is continuous, it will
be discretized. Increasing the number of points in $\mathcal{E}$
can make the discrete approximation more accurate at the expense of
increased computational burden. See Appendix \ref{sec:Computational-Details}
for the details on how we make discrete approximations to continuous
distributions.} The parameter $\theta$ enters the prior distributions $\psi^{x,\theta}\in\Delta\left(\mathcal{E}\right)$
and the payoff functions $u_{i}^{x,\theta}:\mathcal{A}\times\mathcal{E}\to\mathbb{R}$.
As standard in the empirical literature, we assume that the state
of the world is a vector of player-specific payoff shocks, i.e., $\varepsilon=\left(\varepsilon_{1},\varepsilon_{2},...,\varepsilon_{I}\right)$
and $u_{i}^{x,\theta}\left(a,\varepsilon\right)=u_{i}^{x,\theta}\left(a,\varepsilon_{i}\right)$.

The data $\left\{ \left(a_{m},x_{m}\right)\right\} _{m=1}^{n}$ represent
a cross-section of action profiles and covariates in markets $m=1,...,n$
that are independent from each other. Let $\phi^{x}\in\Delta\left(\mathcal{A}\right)$
denote the \emph{conditional choice probabilities} that represent
the probability of observing each action profile conditional on covariate
value $x$. We assume that the econometrician can identify $\phi^{x}$
at each $x\in\mathcal{X}$ as $n\to\infty$. The set of baseline assumptions
for identification analysis is summarized below.
\begin{assumption}[Baseline assumption for identification]
\label{assu:baseline assumption for identification}
\begin{enumerate}
\item The set of covariates $\mathcal{X}$ and the set of states $\mathcal{E}$
are finite.
\item The prior distribution $\psi^{x,\theta}\in\Delta\left(\mathcal{E}\right)$
and the payoff functions $u_{i}^{x,\theta}\left(\cdot\right)$ are
known up to a finite-dimensional parameter $\theta$.
\item \label{enu:games with private values}The state of the world is a
vector of player-specific payoff shocks, i.e., $\varepsilon=\left(\varepsilon_{1},...,\varepsilon_{I}\right)$
and $u_{i}^{x,\theta}\left(a,\varepsilon\right)=u_{i}^{x,\theta}\left(a,\varepsilon_{i}\right)$.
\item \label{enu:CCPs are identified from data}Conditional choice probabilities
$\phi^{x}\in\Delta\left(\mathcal{A}\right)$, $x\in\mathcal{X}$,
are identified from the data.
\end{enumerate}
\end{assumption}
\begin{example*}
(Continued) In the baseline example, there are no observable covariates.
The econometrician assumes that the prior distribution is $\varepsilon_{i}\overset{\text{iid}}{\sim}N\left(0,1\right)$
(which will be discretized). The payoff function is $u_{i}^{\theta}\left(a_{i},a_{j},\varepsilon_{i}\right)=a_{i}\left(\kappa_{i}a_{j}+\varepsilon_{i}\right)$
where $\theta=\left(\kappa_{1},\kappa_{2}\right)\in\mathbb{R}^{2}$
is the parameter of interest. The econometrician observes the conditional
choice probabilities $\phi=\left(\phi_{\left(0,0\right)},\phi_{\left(0,1\right)},\phi_{\left(1,0\right)},\phi_{\left(1,1\right)}\right)$
whose elements represent the probability of each action profile, e.g.,
$\phi_{\left(1,0\right)}$ is the probability that firm 1 stays in
$\left(a_{1}=1\right)$ but firm 2 stays out $\left(a_{2}=0\right)$.
$\blacksquare$
\end{example*}
Given Assumption \ref{assu:baseline assumption for identification},
the identified set of parameters can be defined when the solution
concept and the information structure are specified. For any game
$\left(G^{x,\theta},S^{x}\right)$, let $\mathcal{P}_{a}^{SC}\left(G^{x,\theta},S^{x}\right)$
be the set of feasible probability distributions over action profiles
under solution concept $SC$. 
\begin{defn}[Identified set of parameters]
 Given Assumption \ref{assu:baseline assumption for identification},
a solution concept $SC$, and information structures $\tilde{S}=\left(\tilde{S}^{x}\right)_{x\in\mathcal{X}}$,
the identified set of parameters is defined as:
\[
\Theta_{I}^{SC}\left(\tilde{S}\right)\equiv\left\{ \theta\in\Theta:\ \forall x\in\mathcal{X},\ \phi^{x}\in\mathcal{P}_{a}^{SC}\left(G^{x,\theta},\tilde{S}^{x}\right)\right\} .
\]
\end{defn}
In words, a candidate parameter $\theta$ enters the identified set
$\Theta_{I}^{SC}\left(\tilde{S}\right)$ if at each $x\in\mathcal{X}$,
the observed conditional choice probabilities $\phi^{x}$ can arise
under some equilibrium.

\subsection{Identification and Informational Robustness}

Let us translate the observational equivalence between rational expectations
equilibrium and Bayes stable equilibrium (Corollary \ref{cor:CCP for BSE and REE})
in terms of identified sets. Consider the following assumption.
\begin{assumption}[Identification under rational expectations equilibrium]
\label{assu:Identification under REE} In each market with covariates
$x\in\mathcal{X}$, the data are generated by a rational expectations
equilibrium of $\left(G^{x,\theta_{0}},\tilde{S}^{x,0}\right)$ for
some information structure $\tilde{S}^{x,0}$ that is an expansion
of $S^{x}$ ($\tilde{S}^{x,0}\succsim_{E}S^{x}$).
\end{assumption}
Assumption \ref{assu:Identification under REE} says that there is
a true parameter $\theta_{0}$ underlying the data generating process,
and that at each $x\in\mathcal{X}$, the true information structure
is some $\tilde{S}^{x,0}$ that is an expansion of $S^{x}$. In practice,
we will consider a scenario where the econometrician knows the \emph{baseline
information structure} $S^{x}$, which describes the \emph{minimal}
information available to the players but not the true information
structure $\tilde{S}^{x,0}$. Then, under Assumptions \ref{assu:baseline assumption for identification}
and \ref{assu:Identification under REE}, the econometrician has
to admit all information structures that are expansions of $S^{x}$.
This approach contrasts with the traditional approach that assumes
the econometrician knows the true information structure exactly. 

However, directly working with Assumption \ref{assu:Identification under REE}
is computationally infeasible because it requires searching over the
set of information structures, which is large. We show that Assumption
\ref{assu:Identification under REE} can be replaced with the following
assumption, which does not rely on unknown information structures.
\begin{assumption}[Identification under Bayes stable equilibrium]
\label{assu:Identification under BSE} In each market with covariates
$x\in\mathcal{X}$, the data are generated by a Bayes stable equilibrium
of $\left(G^{x,\theta_{0}},S^{x}\right)$.
\end{assumption}
The following theorem is the consequence of Corollary \ref{cor:CCP for BSE and REE};
Assumption \ref{assu:Identification under REE} and Assumption \ref{assu:Identification under BSE}
are observationally equivalent.

\begin{thm}[Equivalence of identified sets]
\label{thm:Equivalence of identified sets}  The identified set under
Assumptions \ref{assu:baseline assumption for identification} and
\ref{assu:Identification under REE} is equal to the identified set
under Assumptions \ref{assu:baseline assumption for identification}
and \ref{assu:Identification under BSE}.
\end{thm}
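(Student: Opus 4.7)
The plan is to reduce the claim to a market-by-market application of Corollary \ref{cor:CCP for BSE and REE}, which already packages the substantive content (the ``$\subseteq$'' and ``$\supseteq$'' constructions from Theorem \ref{thm:BSE and REE connection}). The equivalence of identified sets is then essentially a definitional unpacking: both identified sets are described by membership conditions on the conditional choice probabilities $\phi^{x}$, indexed by $x\in\mathcal{X}$, and these conditions are observationally equivalent at each $x$.

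First I would write out the two identified sets explicitly. Under Assumption \ref{assu:Identification under REE}, a parameter $\theta$ belongs to the identified set if and only if, for every $x\in\mathcal{X}$, there exists some expansion $\tilde{S}^{x}\succsim_{E}S^{x}$ with $\phi^{x}\in\mathcal{P}_{a}^{REE}(G^{x,\theta},\tilde{S}^{x})$; equivalently, $\phi^{x}\in\bigcup_{\tilde{S}^{x}\succsim_{E}S^{x}}\mathcal{P}_{a}^{REE}(G^{x,\theta},\tilde{S}^{x})$ for every $x$. Under Assumption \ref{assu:Identification under BSE}, $\theta$ belongs to the identified set if and only if $\phi^{x}\in\mathcal{P}_{a}^{BSE}(G^{x,\theta},S^{x})$ for every $x\in\mathcal{X}$.

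Next I would invoke Corollary \ref{cor:CCP for BSE and REE} market by market: for the basic game $G^{x,\theta}$ and information structure $S^{x}$, the corollary gives
\[
\mathcal{P}_{a}^{BSE}(G^{x,\theta},S^{x})=\bigcup_{\tilde{S}^{x}\succsim_{E}S^{x}}\mathcal{P}_{a}^{REE}(G^{x,\theta},\tilde{S}^{x}).
\]
Applying this equality inside the conjunction over $x\in\mathcal{X}$ shows that the two membership conditions on $\phi^{x}$ coincide, so the two identified sets are equal.

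There is no genuine obstacle here; all the work has already been done in Theorem \ref{thm:BSE and REE connection} and its corollary. The only point that deserves a brief comment in the write-up is the handling of the quantifier: Assumption \ref{assu:Identification under REE} allows the true expansion $\tilde{S}^{x,0}$ to depend on the market $x$, which matches the fact that the union over expansions in Corollary \ref{cor:CCP for BSE and REE} is taken separately at each $x$ before intersecting across $\mathcal{X}$. Once this is noted, the chain of equalities goes through, and the theorem follows.
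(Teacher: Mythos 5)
Your proposal is correct and follows the same route as the paper's own proof: write both identified sets as conjunctions over $x\in\mathcal{X}$ of membership conditions on $\phi^{x}$, interchange the union over (market-indexed) expansions with the quantifier over $x$, and apply Corollary \ref{cor:CCP for BSE and REE} market by market. Your remark about the expansion being allowed to vary with $x$ is exactly the quantifier-interchange step the paper makes explicit in its display (\ref{eq:equivalence2}).
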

Theorem \ref{thm:Equivalence of identified sets} says that in order
to compute the identified set when the data are generated by some
rational expectations equilibrium but with an unknown information
structure, we can proceed as if the data are generated by a Bayes
stable equilibrium with known information structure.

\citet{magnolfi2021estimation} and \citet*{syrgkanis2021inference}
develop a similar approach for informationally robust estimation of
games, but use Bayes correlated equilibrium as the solution concept.
They assume that the underlying data generating process is described
by Bayes Nash equilibria, whereas we rely on rational expectations
equilibria. Also see \citet{gualdani2020identification} for the single-agent
case.

Our identification results make no assumptions on the equilibrium
selection rule. The Bayes stable equilibrium identified set under
Assumptions \ref{assu:baseline assumption for identification} and
\ref{assu:Identification under BSE} is valid even when the data are
generated from a mixture of multiple equilibria. The convexity of
the set of Bayes stable equilibria (readily verified from the equilibrium
conditions (\ref{eq:BSE condition}) since $\sigma$ enters the expression
linearly) makes the single equilibrium assumption innocuous. For example,
if the data are generated by two equilibria $\sigma^{1}$ and $\sigma^{2}$
with mixture probability $\lambda$ and $\left(1-\lambda\right)$,
then since $\sigma^{\lambda}\equiv\lambda\sigma^{1}+\left(1-\lambda\right)\sigma^{2}$
is another equilibrium that generates the same joint distributions,
it is as if the data were generated by a single equilibrium $\sigma^{\lambda}$.\footnote{\citet*{syrgkanis2021inference} Lemma 2 presents a general argument
on why it is without loss to assume that the data are generated by
a single equilibrium if the set of predictions is convex.}

\subsection{Relationship Between Identified Sets}

Recall from Example \ref{exa:1} that in $S^{complete}$ each player
$i$ observes the realization of $\varepsilon$, and in $S^{private}$
each player $i$ observes the realization of $\varepsilon_{i}$. We
let $\Theta_{I}^{SC}\left(S^{complete}\right)$ denote the identified
set when $S^{x}=S^{complete}$ at every $x\in\mathcal{X}$; $\Theta_{I}^{SC}\left(S^{private}\right)$
is defined similarly. Finally, we write $S^{1}\succsim_{E}S^{2}$
if and only if $S^{1,x}\succsim_{E}S^{2,x}$ at every $x\in\mathcal{X}$.
The following theorem shows the relationship between identified sets.
\begin{thm}[Relationship between identified sets]
\label{thm:Relationship between identified sets} Suppose Assumption
\ref{assu:baseline assumption for identification} holds.
\begin{enumerate}
\item \label{enu:Relationship between identified sets 1}If $S\succsim_{E}S'$,
then $\Theta_{I}^{BSE}\left(S\right)\subseteq\Theta_{I}^{BSE}\left(S'\right)$.
\item \label{enu:Relationship between identified sets 2}$\Theta_{I}^{BSE}\left(S^{complete}\right)=\Theta_{I}^{PSNE}\left(S^{complete}\right)=\Theta_{I}^{BSE}\left(S^{private}\right)$.
\item \label{enu:Relationship between identified sets 3}For any information
structure $S$, $\Theta_{I}^{BSE}\left(S\right)\subseteq\Theta_{I}^{BCE}\left(S\right)$.
\end{enumerate}
\end{thm}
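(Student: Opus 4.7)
The plan is to reduce each of the three claims to a result already proved in the excerpt (informational robustness, the collapse to PSNE under complete or private values, and the comparison with Bayes correlated equilibrium). Since the identified set is defined market-by-market via $\phi^{x}\in\mathcal{P}_{a}^{SC}(G^{x,\theta},S^{x})$, each statement reduces to an inclusion between prediction sets at a fixed basic game; I would therefore fix an arbitrary $x\in\mathcal{X}$, write $G=G^{x,\theta}$, and establish the corresponding inclusion $\mathcal{P}_{a}^{SC}(G,S')\subseteq\mathcal{P}_{a}^{SC'}(G,S'')$, from which the identified-set inclusion follows immediately by quantifying over $x$.

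For item \ref{enu:Relationship between identified sets 1}, I would invoke Corollary \ref{cor:CCP for BSE and REE} to write
\[
\mathcal{P}_{a}^{BSE}(G,S')=\bigcup_{S^{*}\succsim_{E}S'}\mathcal{P}_{a}^{REE}(G,S^{*}),\qquad \mathcal{P}_{a}^{BSE}(G,S'')=\bigcup_{S^{*}\succsim_{E}S''}\mathcal{P}_{a}^{REE}(G,S^{*}).
\]
The crux is then transitivity of $\succsim_{E}$: if $S^{*}\succsim_{E}S'$ and $S'\succsim_{E}S''$, then $S^{*}\succsim_{E}S''$. This is a routine check by composing the two augmenting signal kernels via the law of total probability, producing a single kernel that realizes $S^{*}$ as an expansion of $S''$. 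Transitivity shows that the index set on the left-hand union is contained in that on the right, so $\mathcal{P}_{a}^{BSE}(G,S')\subseteq\mathcal{P}_{a}^{BSE}(G,S'')$ pointwise in $x$, delivering the inclusion of identified sets.

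For item \ref{enu:Relationship between identified sets 2}, I would chain two equalities. The first, $\Theta_{I}^{BSE}(S^{complete})=\Theta_{I}^{PSNE}(S^{complete})$, follows from Theorem \ref{thm:Relationship to PSNE}(\ref{enu:REE PSNE 1}): under $S^{complete}$ a BSE decision rule can place positive mass only on PSNE action profiles at each $\varepsilon$, so $\mathcal{P}_{a}^{BSE}(G,S^{complete})$ equals the set of selection devices over PSNE outcomes, which is precisely the PSNE prediction set. The second equality, $\Theta_{I}^{BSE}(S^{private})=\Theta_{I}^{BSE}(S^{complete})$, follows from Theorem \ref{thm:Relationship to PSNE}(\ref{enu:REE PSNE 2}), whose private-values hypothesis is Assumption \ref{assu:baseline assumption for identification}(\ref{enu:games with private values}): BSE under $(G,S^{private})$ coincides with REE under $(G,S^{private})$, which coincides with REE under $(G,S^{complete})$, hence (by the first equality and Theorem \ref{thm:Relationship to PSNE}(\ref{enu:REE PSNE 1})) with BSE under $(G,S^{complete})$. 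Finally, item \ref{enu:Relationship between identified sets 3} is immediate from Theorem \ref{thm:BSE and BCE}: every BSE is a BCE of the same game, so $\mathcal{P}_{a}^{BSE}(G,S)\subseteq\mathcal{P}_{a}^{BCE}(G,S)$ market-by-market. The main obstacle across the three parts is really just the transitivity verification needed for item \ref{enu:Relationship between identified sets 1}; once that is in place, the remainder is bookkeeping given the earlier theorems.
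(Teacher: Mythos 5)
Your proposal is correct. Parts \ref{enu:Relationship between identified sets 2} and \ref{enu:Relationship between identified sets 3} follow the paper's own route exactly: both reduce to Theorem \ref{thm:Relationship to PSNE} and Theorem \ref{thm:BSE and BCE}, respectively, after noting that the identified set is defined market-by-market through the prediction sets. For part \ref{enu:Relationship between identified sets 1}, however, you take a genuinely different path. The paper argues directly at the level of decision rules: given a BSE $\sigma^{1}$ of $\left(G,S^{1}\right)$ with $S^{1}\succsim_{E}S^{2}$, it integrates the extra signals $\tilde{t}$ out of the obedience inequalities to exhibit an induced decision rule $\sigma^{2}$ satisfying the BSE conditions for $\left(G,S^{2}\right)$, which gives the inclusion of feasible CCPs without ever invoking the REE characterization. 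You instead apply Corollary \ref{cor:CCP for BSE and REE} to both sides and reduce the claim to transitivity of the expansion order $\succsim_{E}$, verified by composing the two augmenting kernels into $\lambda\left(\tilde{t}_{2},\tilde{t}_{1}\vert\varepsilon,t''\right)=\lambda_{2}\left(\tilde{t}_{2}\vert\varepsilon,t''\right)\lambda_{1}\left(\tilde{t}_{1}\vert\varepsilon,t'',\tilde{t}_{2}\right)$ (the chain rule for conditionals rather than the law of total probability, but the check is as routine as you claim, up to the associativity identification of the product signal spaces). Both arguments are valid; the paper's is more self-contained and makes the induced equilibrium explicit, while yours buys a cleaner high-level proof at the cost of establishing transitivity of $\succsim_{E}$, a fact the paper never needs to state. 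Since Corollary \ref{cor:CCP for BSE and REE} ultimately rests on the same integration computation, the two routes are mathematically close cousins, but your decomposition of the work is different.
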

First, Theorem \ref{thm:Relationship between identified sets}.\ref{enu:Relationship between identified sets 1}
says that a stronger assumption on information leads to a tighter
identified set. The result directly follows from Corollary \ref{cor:more info leads to tighter set},
which says that the feasible set of equilibria shrinks when more information
is available to the players. A consequence of Theorem \ref{thm:Relationship between identified sets}.\ref{enu:Relationship between identified sets 1}
is that we will have $\Theta_{I}^{BSE}\left(S^{complete}\right)\subseteq\Theta_{I}^{BSE}\left(\tilde{S}\right)\subseteq\Theta_{I}^{BSE}\left(S^{null}\right)$
for any $\tilde{S}$, i.e., the tightest identified set is obtained
when $S^{complete}$ is assumed and the loosest identified set is
obtained when $S^{null}$ is assumed. Note that $\Theta_{I}^{BSE}\left(S^{null}\right)$
corresponds to the identified set that makes no assumption on players'
information.

Second, Theorem \ref{thm:Relationship between identified sets}.\ref{enu:Relationship between identified sets 2},
which is a consequence of Theorem \ref{thm:Relationship to PSNE},
says that Bayes stable equilibrium and pure strategy Nash equilibrium
are observationally equivalent when $S^{complete}$ is assumed.\footnote{When Assumption \ref{assu:baseline assumption for identification}.\ref{enu:games with private values}
is imposed, rational expectations equilibrium and Bayes stable equilibrium
are identical under $S^{private}$ and $S^{complete}$. This is because
a profile of players' signals is equal to the state of the world,
so conditioning on players' information is equivalent to conditioning
on the state of the world.} Furthermore, due to Assumption \ref{assu:baseline assumption for identification}.\ref{enu:games with private values},
Bayes stable equilibrium can deliver the same identified set under
$S^{private}$ which is weaker than $S^{complete}$. Thus, if the
researcher takes Bayes stable equilibrium (or rational expectations
equilibrium) to be a reasonable notion for the given empirical setting,
pure strategy Nash equilibrium outcomes can be rationalized with informational
assumptions that are weaker than the complete information assumption.

Finally, Theorem \ref{thm:Relationship between identified sets}.\ref{enu:Relationship between identified sets 3},
which follows from Theorem \ref{thm:BSE and BCE}, says that for any
baseline assumption on players' information, the Bayes stable equilibrium
identified set is a subset of the Bayes correlated equilibrium identified
set.

\subsection{Identifying Power of Informational Assumptions}

We use a two-player entry game (our running example) to numerically
illustrate the identifying power of various informational assumptions
in the spirit of \citet{aradillas-lopez_identification_2008}. We
also compare the identifying power to that of Bayes correlated equilibrium
studied in \citet{magnolfi2021estimation}.

Each player's payoff function is $u_{i}^{\theta}\left(a_{i},a_{j},\varepsilon_{i}\right)=a_{i}\left(\kappa_{i}a_{j}+\varepsilon_{i}\right)$.
We assume $\left(\varepsilon_{1},\varepsilon_{2}\right)$ follows
a bivariate normal distribution with zero mean, unit variance, and
zero correlation. As a discrete approximation to the prior distribution,
we use a grid of 30 points for each $\mathcal{E}_{i}$ and a Gaussian
copula to assign appropriate probability mass on each grid point $\left(\varepsilon_{1},\varepsilon_{2}\right)$.\footnote{Computational details can be found in Appendix \ref{sec:Computational-Details}.}
We set $\theta=\left(\kappa_{1},\kappa_{2}\right)=\left(-1.0,-1.0\right)$
and generate choice probabilities using the pure strategy Nash equilibrium
assumption with arbitrary selection rule.\footnote{Specifically, we generate population choice probability by finding
a feasible $\sigma:\mathcal{E}\to\Delta\left(\mathcal{A}\right)$
which satisfies the inequalities in (\ref{eq:PSNE feasibility}) as
described in Section \ref{subsec:A-Linear-Programming}.}

To construct the identified sets, we take the distribution of unobservables
as known, and collect all points $\left(\kappa_{1},\kappa_{2}\right)$
compatible with the given solution concept and informational assumptions.
We plot the convex hulls of the identified sets in Figure \ref{fig:Identified-sets}.

\begin{figure}[h]
\caption{\label{fig:Identified-sets} Convex Hulls of Identified Sets}

\centering{}%
\begin{minipage}[t]{0.45\columnwidth}%
\begin{center}
(a) BSE
\par\end{center}
\begin{center}
\includegraphics[scale=0.4]{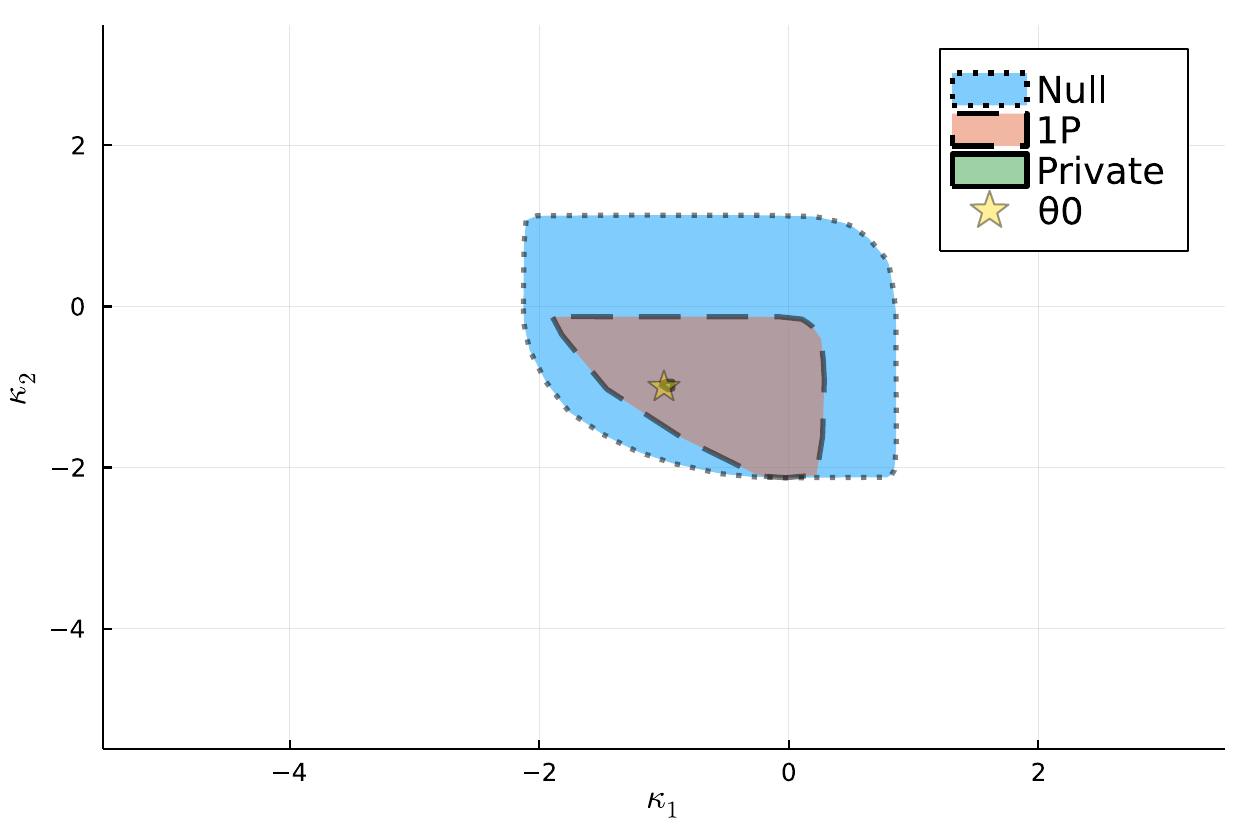}
\par\end{center}%
\end{minipage}\hfill{}%
\begin{minipage}[t]{0.45\columnwidth}%
\begin{center}
(b) BCE
\par\end{center}
\begin{center}
\includegraphics[scale=0.4]{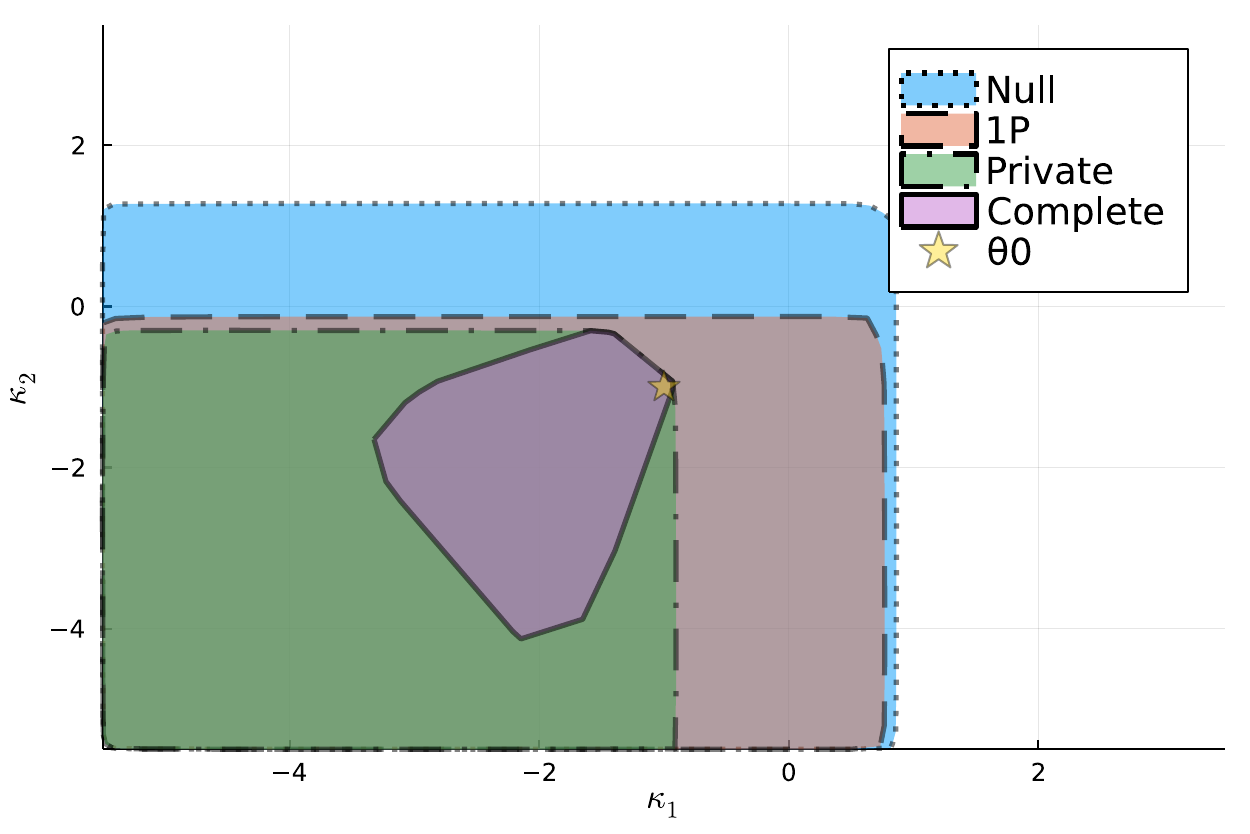}
\par\end{center}%
\end{minipage}
\end{figure}

Figure \ref{fig:Identified-sets}-(a) shows the BSE identified sets
obtained under different baseline information structures. The identified
sets shrink as the informational assumptions get stronger. We omit
the complete information case since $\Theta_{I}^{BSE}\left(S^{private}\right)=\Theta_{I}^{BSE}\left(S^{complete}\right)$.
Setting the baseline information structure as $S^{null}$ generates
an identified set that is quite permissive while using $S^{private}$
generates a tight identified set. Note that $\Theta_{I}^{BSE}\left(S^{null}\right)$
amounts to making no assumption on what the players minimally observe,
and $\Theta_{I}^{BSE}\left(S^{private}\right)$ is equal to the PSNE
identified set. Similarly, Figure \ref{fig:Identified-sets}-(b) plots
the BCE identified sets obtained under different baseline information
structures. It shows that stronger assumptions on information lead
to tighter identified sets. Assumptions on players' information play
a crucial role in determining the size of the identified set. In this
sense, imposing strong assumption on players' information may be far
from innocuous because it places strong restrictions for identification.

As stated in Theorem \ref{thm:Relationship between identified sets}.\ref{enu:Relationship between identified sets 3},
comparing Figure \ref{fig:Identified-sets}-(a) and \ref{fig:Identified-sets}-(b)
shows that, for any given baseline information structure, the corresponding
BSE identified set is a subset of the corresponding BCE identified
set. In our example, under the same informational assumption, the
BSE identified set can be substantially tighter than the BCE identified
set, illustrating the identifying power of leveraging observability
of opponents' actions in the equilibrium conditions.

\section{Estimation and Inference\label{sec:Estimation and Inference}}

We propose a computationally attractive approach for estimation and
inference. In Section \ref{subsec:A-Linear-Programming}, we show
that whether a candidate parameter enters the identified set can be
determined by solving a single linear feasibility program. In Section
\ref{subsec:A-Simple-Approach}, we show that this property can be
combined with the insights from \citet{horowitz2021inference} to
make construction of confidence sets simple and computationally tractable:
determining whether a candidate parameter enters the confidence set
amounts to solving a convex feasibility program. Finally, in Section
\ref{subsec:Implementation}, we provide some practical suggestions
for computational implementations.

\subsection{A Linear Programming Characterization\label{subsec:A-Linear-Programming}}

We provide a computationally attractive characterization of the identified
set. \citet*{syrgkanis2021inference} uses a similar characterization,
but with Bayes correlated equilibrium. Bayes stable equilibrium and
Bayes correlated equilibrium share similar computational property
since decision rules enter the equilibrium conditions linearly in
both cases. 

Let $\Theta_{I}\equiv\Theta_{I}^{BSE}\left(S\right)$ denote the sharp
identified set. Let $\partial u_{i}^{x,\theta}\left(a_{i}',a,\varepsilon_{i}\right)\equiv u_{i}^{x,\theta}\left(a_{i}',a_{-i},\varepsilon_{i}\right)-u_{i}^{x,\theta}\left(a_{i},a_{-i},\varepsilon_{i}\right)$
denote the gains from unilaterally deviating to $a_{i}'$ from outcome
$\left(a_{i},a_{-i}\right)$ given $\varepsilon_{i}$. Recall our
notation that $\sigma^{x}\in\Delta_{a\vert\varepsilon,t}$ if and
only if $\sigma_{a\vert\varepsilon,t}^{x}\geq0$ for all $a,\varepsilon,t$
and $\sum_{a\in\mathcal{A}}\sigma_{a\vert\varepsilon,t}^{x}=1$.
\begin{thm}[Linear programming characterization]
\label{thm:Identified set LP} Under Assumptions \ref{assu:baseline assumption for identification}
and \ref{assu:Identification under BSE}, $\theta\in\Theta_{I}$ if
and only if, for each $x\in\mathcal{X}$, there exists $\sigma^{x}\in\Delta_{a\vert\varepsilon,t}$
such that
\begin{enumerate}
\item (Obedience) For all $i\in\mathcal{I}$, $t_{i}\in\mathcal{T}_{i}$,
$a\in\mathcal{A}$, $a_{i}'\in\mathcal{A}_{i}$,
\begin{equation}
\sum_{\varepsilon\in\mathcal{E},t_{-i}\in\mathcal{T}_{-i}}\psi_{\varepsilon}^{x,\theta}\pi_{t\vert\varepsilon}^{x}\sigma_{a\vert\varepsilon,t}^{x}\partial u_{i}^{x,\theta}\left(a_{i}',a,\varepsilon_{i}\right)\leq0.\label{eq:LP IC}
\end{equation}
\item (Consistency) For all $a\in\mathcal{A}$,
\begin{equation}
\phi_{a}^{x}=\sum_{\varepsilon\in\mathcal{E},t\in\mathcal{T}}\psi_{\varepsilon}^{x,\theta}\pi_{t\vert\varepsilon}^{x}\sigma_{a\vert\varepsilon,t}^{x}.\label{eq:LP consistency}
\end{equation}
\end{enumerate}
\end{thm}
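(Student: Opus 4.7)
The plan is to unpack the definition of the BSE identified set and verify that it translates directly into the linear feasibility system in the theorem.

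\textbf{Step 1 (Definition chase).} By the definition of $\Theta_I = \Theta_I^{BSE}(S)$, we have $\theta \in \Theta_I$ if and only if, for every $x \in \mathcal{X}$, $\phi^x \in \mathcal{P}_a^{BSE}(G^{x,\theta}, S^x)$. Unfolding the definition of $\mathcal{P}_a^{BSE}$, this holds if and only if there exists a decision rule $\sigma^x: \mathcal{E} \times \mathcal{T} \to \Delta(\mathcal{A})$ that is a BSE of $(G^{x,\theta}, S^x)$ and whose induced joint distribution $(\varepsilon, t, a) \mapsto \psi_\varepsilon^{x,\theta} \pi_{t|\varepsilon}^x \sigma_{a|\varepsilon,t}^x$ has $\phi^x$ as its marginal on $\mathcal{A}$. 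Since the markets are independent and the required conditions involve only market-$x$ objects, I can treat each $x$ separately.

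\textbf{Step 2 (Obedience as a linear inequality).} The BSE obedience condition stated as equation (\ref{eq:BSE condition}) reads $\sum_{\varepsilon, t_{-i}} \psi_\varepsilon^{x,\theta} \pi_{t|\varepsilon}^x \sigma_{a|\varepsilon,t}^x u_i^{x,\theta}(a, \varepsilon) \ge \sum_{\varepsilon, t_{-i}} \psi_\varepsilon^{x,\theta} \pi_{t|\varepsilon}^x \sigma_{a|\varepsilon,t}^x u_i^{x,\theta}(a_i', a_{-i}, \varepsilon)$ for every $i$, $t_i$, $a$, and $a_i'$. Moving everything to one side and invoking Assumption \ref{assu:baseline assumption for identification}.\ref{enu:games with private values} (private values, so $u_i^{x,\theta}(a, \varepsilon) = u_i^{x,\theta}(a, \varepsilon_i)$) lets me rewrite the payoff difference as $\partial u_i^{x,\theta}(a_i', a, \varepsilon_i)$, which produces exactly (\ref{eq:LP IC}). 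Note that this equivalence is two-sided: any $\sigma^x$ satisfying (\ref{eq:LP IC}) trivially satisfies the original BSE conditions.

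\textbf{Step 3 (Consistency and simplex constraints).} Summing the induced joint density $\psi_\varepsilon^{x,\theta} \pi_{t|\varepsilon}^x \sigma_{a|\varepsilon,t}^x$ over $(\varepsilon, t)$ yields the action marginal induced by $\sigma^x$; setting this equal to $\phi_a^x$ is precisely (\ref{eq:LP consistency}). The constraint $\sigma^x \in \Delta_{a|\varepsilon,t}$ is simply the requirement that $\sigma^x(\cdot|\varepsilon,t)$ be a probability distribution at each $(\varepsilon, t)$. Combining Steps 1--3 gives the stated equivalence.

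The main obstacle is notational rather than mathematical: the theorem is essentially a linear-programming re-expression of the BSE identified set, with sharpness inherited directly from the definition of $\Theta_I$ (no convex-hull or equilibrium-selection arguments are needed). The only nontrivial bookkeeping is in Step 2, where the private-values restriction in Assumption \ref{assu:baseline assumption for identification}.\ref{enu:games with private values} must be invoked to justify writing the deviation gain as $\partial u_i^{x,\theta}(a_i', a, \varepsilon_i)$; otherwise the argument amounts to reorganizing sums. The payoff of the reformulation is computational: because $\sigma^x$ enters both (\ref{eq:LP IC}) and (\ref{eq:LP consistency}) linearly, membership $\theta \in \Theta_I$ reduces to checking feasibility of one linear program per value of $x \in \mathcal{X}$.
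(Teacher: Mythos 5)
Your proposal is correct and matches the paper's (largely implicit) argument: the paper treats this theorem as an immediate consequence of the definition of $\Theta_{I}^{BSE}$ together with the linearized obedience condition (\ref{eq:BSE condition}) already derived in Section \ref{subsec:Bayes-Stable-Equilibrium} by cancelling the common positive denominator, which is exactly your Steps 1--3. Your remark that (\ref{eq:LP IC}) is equivalent to the conditional-expectation form of obedience is right, since for zero-probability events $\left(t_{i},a\right)$ every term $\psi_{\varepsilon}\pi_{t\vert\varepsilon}\sigma_{a\vert\varepsilon,t}$ vanishes and the unconditional inequality holds trivially.
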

Theorem \ref{thm:Identified set LP} says that for any candidate $\theta\in\Theta$,
whether $\theta\in\Theta_{I}$ can be determined by solving a single
linear feasibility program. The first condition (\ref{eq:LP IC})
states that the nuisance parameter $\sigma^{x}$ should be a decision
rule that satisfies the Bayes stable equilibrium conditions. The second
condition (\ref{eq:LP consistency}) states that the observed conditional
choice probabilities must be consistent with those induced by the
equilibrium decision rule. Given a candidate $\theta$ as fixed, $\psi_{\varepsilon}^{x,\theta}$,
$\pi_{t\vert\varepsilon}^{x}$, $\partial u_{i}^{x,\theta}$, and
$\phi_{a}^{x}$ are known objects. Also note that $\sigma^{x}\in\Delta_{a\vert\varepsilon,t}$
represent constraints that are linear in $\sigma^{x}$. Then, since
the variables of optimization $\sigma^{x}$ enter the constraints
linearly, the program is linear.

Since our empirical framework obtains pure strategy Nash equilibrium
as a special case, the complete information pure strategy Nash equilibrium
identified set can be computed using linear programs as well. Let
$\Theta_{I}^{PSNE}$ be the sharp identified set obtained under the
pure strategy Nash equilibrium assumption and no assumption on the
equilibrium selection rule. As a corollary to Theorem \ref{thm:Relationship between identified sets}
and Theorem \ref{thm:Identified set LP}, whether $\theta\in\Theta_{I}^{PSNE}$
can also be determined via a single linear feasibility program. Thus,
Bayes stable equilibrium identified sets embed the pure strategy Nash
equilibrium identified set studied in \citet*{beresteanu2011sharpidentification}
and \citet{galichon2011setidentification} as a special case.
\begin{cor}[Linear programming characterization of PSNE identified set]
\label{cor:PSNE identified set} $\theta\in\Theta_{I}^{PSNE}$ if
and only if, for each $x\in\mathcal{X}$, there exists $\sigma^{x}\in\Delta_{a\vert\varepsilon}$
such that
\begin{enumerate}
\item (Obedience) For all $i\in\mathcal{I}$, $\varepsilon_{i}\in\mathcal{E}_{i}$,
$a\in\mathcal{A}$, $a_{i}'\in\mathcal{A}_{i}$,
\[
\sum_{\varepsilon_{-i}\in\mathcal{E}_{-i}}\psi_{\varepsilon}^{x,\theta}\sigma_{a\vert\varepsilon}^{x}\partial u_{i}^{x,\theta}\left(a_{i}',a,\varepsilon_{i}\right)\leq0.
\]
\item (Consistency) For all $a\in\mathcal{A}$,
\[
\phi_{a}^{x}=\sum_{\varepsilon\in\mathcal{E}}\psi_{\varepsilon}^{x}\sigma_{a\vert\varepsilon}^{x}.
\]
\end{enumerate}
\end{cor}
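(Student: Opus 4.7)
The plan is to deduce the corollary directly by combining Theorem \ref{thm:Relationship between identified sets}.\ref{enu:Relationship between identified sets 2} with Theorem \ref{thm:Identified set LP}, specialized to the information structure $S^{private}$. Under Assumption \ref{assu:baseline assumption for identification}.\ref{enu:games with private values}, Theorem \ref{thm:Relationship between identified sets}.\ref{enu:Relationship between identified sets 2} yields $\Theta_{I}^{PSNE} = \Theta_{I}^{BSE}(S^{private})$, so it suffices to prove that the linear program in the corollary characterizes $\Theta_{I}^{BSE}(S^{private})$.

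First I would invoke Theorem \ref{thm:Identified set LP} applied to the family of games $\{(G^{x,\theta}, S^{x,private})\}_{x \in \mathcal{X}}$. In $S^{x,private}$, we have $\mathcal{T}_i = \mathcal{E}_i$ and $\pi^x_{t\vert\varepsilon} = \mathbb{I}\{t = \varepsilon\}$, i.e., signals are a deterministic one-to-one function of the state. Hence, in the obedience and consistency conditions (\ref{eq:LP IC})--(\ref{eq:LP consistency}), the only signal profiles $t$ that contribute to the sums are those with $t = \varepsilon$, and every decision rule $\sigma^x \in \Delta_{a\vert\varepsilon,t}$ is fully pinned down (on path) by its values at $t = \varepsilon$.

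Next I would change variables by defining $\tilde\sigma^x_{a\vert\varepsilon} \equiv \sigma^x_{a\vert\varepsilon, \varepsilon}$, so $\tilde\sigma^x \in \Delta_{a\vert\varepsilon}$. Substituting $\pi^x_{t\vert\varepsilon} = \mathbb{I}\{t = \varepsilon\}$ collapses the sum $\sum_{\varepsilon, t_{-i}}$ in (\ref{eq:LP IC}), with the outer index fixed at $t_i = \varepsilon_i$, to a sum over $\varepsilon_{-i}$ only, yielding exactly the obedience inequality stated in the corollary (with the quantifier ``for all $i, \varepsilon_i, a, a_i'$''). The same substitution collapses the consistency condition (\ref{eq:LP consistency}) to $\phi^x_a = \sum_{\varepsilon \in \mathcal{E}} \psi^{x,\theta}_{\varepsilon} \tilde\sigma^x_{a\vert\varepsilon}$, matching the corollary. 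Conversely, given any $\tilde\sigma^x \in \Delta_{a\vert\varepsilon}$ satisfying the corollary's inequalities, I would extend it to $\sigma^x \in \Delta_{a\vert\varepsilon,t}$ arbitrarily off-path (e.g., $\sigma^x_{a\vert\varepsilon, t} = \tilde\sigma^x_{a\vert\varepsilon}$ for all $t$), which satisfies the BSE linear program since only on-path contributions matter.

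The argument is essentially a notational specialization, so there is no real obstacle; the only step requiring minimal care is verifying that off-path values of $\sigma^x$ are irrelevant, which follows immediately because $\pi^x_{t\vert\varepsilon}$ vanishes whenever $t \neq \varepsilon$ under $S^{private}$.
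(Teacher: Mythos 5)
Your proposal is correct and follows exactly the route the paper intends: it presents the result as an immediate consequence of Theorem \ref{thm:Relationship between identified sets}.\ref{enu:Relationship between identified sets 2} (giving $\Theta_{I}^{PSNE}=\Theta_{I}^{BSE}(S^{private})$) together with Theorem \ref{thm:Identified set LP} specialized to $S^{private}$, where the degenerate signal distribution collapses the decision rule to a map $\mathcal{E}\to\Delta(\mathcal{A})$. The substitution details you supply (including the off-path extension for the converse direction) are the right ones and fill in what the paper leaves implicit.
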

\begin{example*}[Continued]
 Suppose the econometrician wants to identify $\theta=\left(\kappa_{1},\kappa_{2}\right)\in\mathbb{R}^{2}$
based on the population choice probabilities $\phi=\left(\phi_{\left(0,0\right)},\phi_{\left(0,1\right)},\phi_{\left(1,0\right)},\phi_{\left(1,1\right)}\right)\in\mathbb{R}^{4}$.
Then $\theta\in\Theta_{I}^{PSNE}$ if and only if there exists $\sigma\in\Delta_{a\vert\varepsilon}$
such that
\begin{gather}
\sum_{\varepsilon_{-i}}\psi_{\varepsilon}\sigma_{a\vert\varepsilon}\left(\left(a_{i}'-a_{i}\right)\left(\kappa_{i}a_{-i}+\varepsilon_{i}\right)\right)\leq0,\quad\forall i,\varepsilon_{i},a_{i},a_{-i},a_{i}'\label{eq:PSNE feasibility}\\
\phi_{a}=\sum_{\varepsilon}\psi_{\varepsilon}\sigma_{a\vert\varepsilon},\quad\forall a.\nonumber 
\end{gather}
which is a linear feasibility program. $\blacksquare$
\end{example*}

\subsection{A Simple Approach to Inference\label{subsec:A-Simple-Approach}}

We leverage the insights from \citet{horowitz2021inference} and propose
a simple approach to inference on the structural parameters.\footnote{\citet{horowitz2021inference} describe methods for carrying out non-asymptotic
inference when the partially identified parameters are solutions to
a class of optimization problem. While we leverage the insights from
their work, we focus on asymptotic inference with multinomial proportion
parameters.} The key idea behind our approach is summarized as follows. In discrete
games, all information in the data is summarized by the conditional
choice probabilities, as apparent in Theorem \ref{thm:Identified set LP}.
The statistical sampling uncertainty arises only from the estimation
of the unknown population conditional choice probabilities, which
are multinomial proportion parameters. Then, if we control for the
sampling uncertainty associated with the estimation of the conditional
choice probabilities, we can conduct inference on the structural parameters
of interest. This strategy is feasible given that the number of multinomial
proportion parameters to estimate is small relative to the sample
size. Thus, we construct a confidence set for the conditional choice
probabilities, and translate inference on the conditional choice probabilities
to inference on the structural parameters using the characterizations
in Theorem \ref{thm:Identified set LP}.\footnote{A similar idea has been used by \citet{kline2016bayesian} who propose
a Bayesian method for inference. They leverage the idea that a posterior
on the reduced-form parameters (the conditional choice probabilities)
can be translated to posterior statements on $\theta$ using a known
mapping between them.}

Let $\phi\equiv\left(\phi^{x}\right)_{x\in\mathcal{X}}$ be the population
choice probabilities. Let us make the dependence of the identified
set on $\phi$ explicit by writing
\[
\Theta_{I}\equiv\Theta_{I}\left(\phi\right).
\]
In other words, the identified set is constructed by inverting the
mapping from the structural parameters to the conditional choice probabilities;
if we know $\phi$ accurately, then we can obtain the population identified
set.

When there is a finite number of observations, $\phi$ is unknown.
However, we are able to construct a confidence set for $\phi$ that
accounts for the sampling uncertainty. Let $\alpha\in\left(0,1\right)$.
We assume that the econometrician can construct a \emph{convex} confidence
set $\Phi_{n}^{\alpha}$ that covers $\phi$ with high probability
asymptotically.
\begin{assumption}[Convex confidence set for CCP]
\label{assu:Convex confidence set for CCP} Let $\alpha\in\left(0,1\right)$.
A set $\Phi_{n}^{\alpha}$ such that 
\[
\lim\inf_{n\to\infty}\text{Pr}\left(\phi\in\Phi_{n}^{\alpha}\right)\geq1-\alpha
\]
is available. Moreover, $\phi\in\Phi_{n}^{\alpha}$ can be expressed
as a collection of convex constraints.
\end{assumption}
Leading examples of $\Phi_{n}^{\alpha}$ are box constraints or ellipsoid
constraints; the former will be characterized by constraints that
are linear in $\phi^{x}$ and the latter will be characterized by
those quadratic in $\phi^{x}$. For example, we can construct simultaneous
confidence intervals for each $\phi_{a}^{x}\in\mathbb{R}$ such that
the probability of covering all $\left\{ \phi_{a}^{x}\right\} _{a\in\mathcal{A},x\in\mathcal{X}}$
simultaneously is asymptotically no smaller than $1-\alpha$.

Define the confidence set for the identified set as 
\begin{equation}
\widehat{\Theta}_{I}^{\alpha}\equiv\bigcup_{\tilde{\phi}\in\Phi_{n}^{\alpha}}\Theta_{I}\left(\tilde{\phi}\right).\label{eq:confidence set expression}
\end{equation}
By construction, if $\Phi_{n}^{\alpha}$ covers $\phi$ with high
probability, then $\widehat{\Theta}_{I}^{\alpha}$ covers $\Theta_{I}$
with high probability.
\begin{thm}[Inference]
\label{thm:ConfidenceSet}  Suppose $\Phi_{n}^{\alpha}$ satisfies
Assumption \ref{assu:Convex confidence set for CCP} and $\widehat{\Theta}_{I}^{\alpha}$
is constructed as (\ref{eq:confidence set expression}).
\begin{enumerate}
\item \label{enu:confidence set 1}$\lim\inf_{n\to\infty}\text{Pr}\left(\Theta_{I}\subseteq\widehat{\Theta}_{I}^{\alpha}\right)\geq1-\alpha$.
\item \label{enu:confidence set 2}For each $\theta$, determining $\theta\in\widehat{\Theta}_{I}^{\alpha}$
solves a convex program.
\end{enumerate}
\end{thm}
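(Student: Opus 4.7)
The plan is to handle the two parts separately; both follow from direct structural arguments rather than deep probabilistic reasoning.

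For part~\ref{enu:confidence set 1}, I would use a straightforward event-inclusion argument. By the definition $\widehat{\Theta}_I^\alpha = \bigcup_{\tilde{\phi}\in\Phi_n^\alpha}\Theta_I(\tilde{\phi})$, whenever the true $\phi$ belongs to $\Phi_n^\alpha$, the set $\Theta_I(\phi)=\Theta_I$ appears as one of the sets in the union, so $\Theta_I\subseteq\widehat{\Theta}_I^\alpha$. This yields the event inclusion
\[
\{\phi\in\Phi_n^\alpha\}\ \subseteq\ \{\Theta_I\subseteq\widehat{\Theta}_I^\alpha\},
\]
so monotonicity of probability together with Assumption~\ref{assu:Convex confidence set for CCP} gives
\[
\liminf_{n\to\infty}\Pr(\Theta_I\subseteq\widehat{\Theta}_I^\alpha)\ \geq\ \liminf_{n\to\infty}\Pr(\phi\in\Phi_n^\alpha)\ \geq\ 1-\alpha.
\]

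For part~\ref{enu:confidence set 2}, I would combine the definition of $\widehat{\Theta}_I^\alpha$ with Theorem~\ref{thm:Identified set LP}. A candidate $\theta$ lies in $\widehat{\Theta}_I^\alpha$ iff there exist $\tilde{\phi}\in\Phi_n^\alpha$ and, for each $x\in\mathcal{X}$, a decision rule $\sigma^x\in\Delta_{a\vert\varepsilon,t}$ satisfying the obedience inequalities \eqref{eq:LP IC} and the consistency equalities $\tilde{\phi}_a^x=\sum_{\varepsilon,t}\psi_\varepsilon^{x,\theta}\pi_{t\vert\varepsilon}^x\sigma_{a\vert\varepsilon,t}^x$ of \eqref{eq:LP consistency}. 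Treating $(\tilde{\phi},\{\sigma^x\}_{x\in\mathcal{X}})$ jointly as decision variables with $\theta$ fixed, the simplex constraints and the obedience inequalities are linear in $\sigma^x$, the consistency equalities are jointly linear in $(\tilde{\phi},\sigma^x)$, and $\tilde{\phi}\in\Phi_n^\alpha$ is convex by Assumption~\ref{assu:Convex confidence set for CCP}. Hence checking membership reduces to a convex feasibility program.

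The main obstacle is conceptual rather than technical: one must notice that promoting $\tilde{\phi}$ from a parameter into a decision variable preserves convexity precisely because the consistency constraint is jointly linear in $(\tilde{\phi},\sigma)$ once $\theta$ is held fixed, so the bilinearity in $(\theta,\sigma)$ that prevents the outer problem over $\theta$ from being convex does not contaminate the inner feasibility check at a candidate $\theta$. Beyond this bookkeeping point, no asymptotic approximation beyond Assumption~\ref{assu:Convex confidence set for CCP} is needed, since the coverage guarantee is transferred from $\Phi_n^\alpha$ to $\widehat{\Theta}_I^\alpha$ purely by the monotonicity of the set-valued map $\tilde{\phi}\mapsto\Theta_I(\tilde{\phi})$ under inclusion.
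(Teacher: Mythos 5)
Your proposal is correct and follows essentially the same route as the paper: part (1) via the event inclusion $\{\phi\in\Phi_{n}^{\alpha}\}\subseteq\{\Theta_{I}\subseteq\widehat{\Theta}_{I}^{\alpha}\}$ and monotonicity, and part (2) by promoting $\phi$ to a decision variable in the linear-programming characterization of Theorem \ref{thm:Identified set LP} and noting that all constraints remain linear in $\left(\sigma,\phi\right)$ except the convex constraint $\phi\in\Phi_{n}^{\alpha}$. Your added remark that the bilinearity in $\left(\theta,\sigma\right)$ does not affect the feasibility check at a fixed $\theta$ is a correct observation, consistent with the paper's treatment.
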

Theorem \ref{thm:ConfidenceSet}.\ref{enu:confidence set 1} follows
directly from (\ref{eq:confidence set expression}) and the assumption
on $\Phi_{n}^{\alpha}$. To understand Theorem \ref{thm:ConfidenceSet}.\ref{enu:confidence set 2},
note that $\theta\in\widehat{\Theta}_{I}^{\alpha}$ if and only if,
for all $x\in\mathcal{X}$, there exist $\sigma^{x}:\mathcal{E}\times\mathcal{T}\to\Delta\left(\mathcal{A}\right)$
and $\phi^{x}\in\Delta\left(\mathcal{A}\right)$ such that (\ref{eq:LP IC}),
(\ref{eq:LP consistency}), and $\phi\in\Phi_{n}^{\alpha}$ are satisfied.
Compared to the population program described in Theorem \ref{thm:Identified set LP},
which treated $\phi$ as known constants, we make $\phi$ part of
the optimization variables and impose convex constraints $\phi\in\Phi_{n}^{\alpha}$.
Since all equality constraints are linear in $\left(\sigma,\phi\right)$
and inequality constraints are convex in $\left(\sigma,\phi\right)$,
the feasibility program is convex (see \citet{boyd2004convexoptimization}).
Note that the computational tractability comes from the fact that
$\phi$ enters the restrictions in Theorem \ref{thm:Identified set LP}
in an additively separable manner; letting $\phi$ be part of the
optimization variable does not disrupt the linearity of the constraints
with respect to the variables of optimization.

Finally, we note that computation can be made faster by constructing
$\Phi_{n}^{\alpha}$ as linear constraints since then $\theta\in\widehat{\Theta}_{I}^{\alpha}$
can be determined via a linear program. In our empirical application,
we construct $\Phi_{n}^{\alpha}$ as simultaneous confidence intervals
for the multinomial proportion parameters $\phi$ using the results
in \citet{fitzpatrick1987quicksimultaneous}.\footnote{See Appendix \ref{subsec:Construction-of-Convex} for details. We
also provide Monte Carlo evidence that the proposed method has desirable
coverage probabilities even when $\mathcal{X}$ has many elements.}

\subsection{Implementation\label{subsec:Implementation}}

We propose a practical routine for obtaining the confidence set $\widehat{\Theta}_{I}^{\alpha}$.
Theorem \ref{thm:ConfidenceSet} says that for any candidate $\theta$,
we can determine whether $\theta\in\widehat{\Theta}_{I}^{\alpha}$
by solving a convex (feasibility) program. This feature is attractive,
but it only provides us a binary answer (``yes'' or ``no'').

As commonly done in existing works on partially identified game-theoretic
models (e.g., \citet{ciliberto2009marketstructure}, \citet*{syrgkanis2021inference},
\citet{magnolfi2021estimation}), we define a non-negative criterion
function $\widehat{Q}_{n}^{\alpha}\left(\theta\right)\geq0$ with
the property that $\widehat{Q}_{n}^{\alpha}\left(\theta\right)=0$
if and only if $\theta\in\widehat{\Theta}_{I}^{\alpha}$. The value
of $\widehat{Q}_{n}^{\alpha}\left(\theta\right)$ for each $\theta$
can be obtained by solving a convex program. The advantage of using
a criterion function is that the value of $\widehat{Q}_{n}^{\alpha}\left(\theta\right)$
gives us information on the distance between $\theta$ and the identified
set. Moreover, the gradients of the criterion functions provide information
on which directions to descend in order to spot a local minimum.

Let $\left\{ w^{x}\right\} _{x\in\mathcal{X}}$ be the set of strictly
positive weights for each bin $x\in\mathcal{X}$. The choice of weights
can be arbitrary although we will choose values proportional to the
number of observations at each bin $x$. Let $q^{x}\in\mathbb{R}$
and $q\equiv\left(q^{x}\right)_{x\in\mathcal{X}}$. Let $\widehat{Q}_{n}^{\alpha}\left(\theta\right)$
be the value of the following convex program. 
\begin{gather}
\min_{q,\sigma,\phi}\sum_{x\in\mathcal{X}}w^{x}q^{x}\quad\text{subject to }\label{eq:program Qhat}\\
\sum_{\varepsilon,t_{-i}}\psi_{\varepsilon}^{x,\theta}\pi_{t\vert\varepsilon}^{x}\sigma_{a\vert\varepsilon,t}^{x}\partial u_{i}^{x,\theta}\left(\tilde{a}_{i},a,\varepsilon_{i}\right)\leq q^{x},\quad\forall i,x,t_{i},a,\tilde{a}_{i}\nonumber \\
\phi_{a}^{x}=\sum_{\varepsilon,t}\psi_{\varepsilon}^{x,\theta}\pi_{t\vert\varepsilon}^{x}\sigma_{a\vert\varepsilon,t}^{x},\quad\forall a,x\nonumber \\
q^{x}\geq0,\ \sigma^{x}\in\Delta_{a\vert\varepsilon,t},\ \phi^{x}\in\Delta_{a},\quad\forall x\nonumber \\
\phi\in\Phi_{n}^{\alpha}.\nonumber 
\end{gather}

Intuitively, $q^{x}\geq0$ measures the minimal violation of the inequalities
necessary at bin $x$; when all equilibrium conditions can be satisfied,
the solver will drive the value of $q^{x}$ to zero.\footnote{This formulation uses the fact that $\max\left\{ z_{1},...,z_{K}\right\} $
can be obtained by solving $\min t$ subject to $z_{k}\leq t$ for
$k=1,...,K$.} The solution to (\ref{eq:program Qhat}) measures the weighted average
of the minimal violations of the equilibrium conditions required to
make $\theta$ compatible with data. Also note that the choice of
weights do not affect the results if the researcher is only interested
in the set of $\theta$'s whose criterion function values are exactly
zero.

The following summarizes the properties of the criterion function
approach.
\begin{thm}[Implementation]
\label{thm:Implementation} 
\begin{enumerate}
\item \label{enu:implementation 1}For any $\theta\in\Theta$, program (\ref{eq:program Qhat})
is feasible and convex.
\item \label{enu:implementation 2}$\widehat{Q}_{n}^{\alpha}\left(\theta\right)=0$
if and only if $\theta\in\widehat{\Theta}_{I}^{\alpha}$.
\item \label{enu:implementation 3}If the gradient $\nabla\widehat{Q}_{n}^{\alpha}\left(\theta\right)$
exists at $\theta$, it can be obtained as a byproduct to program
(\ref{eq:program Qhat}) via the envelope theorem.
\end{enumerate}
\end{thm}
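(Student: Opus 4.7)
The plan is to prove the three claims in order, leaning on Theorem~\ref{thm:Identified set LP} (which characterizes $\Theta_I(\phi)$) and on standard results from convex analysis. Throughout, let $\mathcal{V}(\theta)$ denote the feasible set of program~(\ref{eq:program Qhat}) at parameter $\theta$, so that $\widehat{Q}_n^\alpha(\theta) = \inf\{\sum_x w^x q^x : (q,\sigma,\phi) \in \mathcal{V}(\theta)\}$.

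For part~\ref{enu:implementation 1}, convexity is essentially by inspection: the objective $\sum_x w^x q^x$ is linear in $q$; the obedience inequalities are linear in $(q,\sigma)$ because $\theta$ is fixed and $\sigma$ enters multiplicatively against the known quantities $\psi^{x,\theta}_\varepsilon\pi^x_{t\vert\varepsilon}\partial u_i^{x,\theta}$; the consistency constraints are linear equalities in $(\sigma,\phi)$; the simplex constraints $\sigma^x \in \Delta_{a|\varepsilon,t}$, $\phi^x \in \Delta_a$ and the sign constraint $q^x \ge 0$ are convex; and $\phi \in \Phi_n^\alpha$ is convex by Assumption~\ref{assu:Convex confidence set for CCP}. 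For feasibility, fix any $\phi \in \Phi_n^\alpha$ (which is non-empty for $n$ large) and define the ``uninformative'' decision rule $\sigma^x_{a\vert\varepsilon,t} \equiv \phi^x_a$ that ignores $(\varepsilon,t)$. Since $\sum_{\varepsilon,t}\psi^{x,\theta}_\varepsilon\pi^x_{t\vert\varepsilon}=1$, this $\sigma$ automatically satisfies the consistency equations and lies in $\Delta_{a\vert\varepsilon,t}$. Then choose each $q^x$ to be the maximum over $(i,t_i,a,\tilde a_i)$ of the left-hand side of the obedience inequality (or $0$, whichever is larger); this gives a feasible point.

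For part~\ref{enu:implementation 2}, the ``if'' direction is immediate: if $\theta \in \widehat{\Theta}_I^\alpha$ then by~(\ref{eq:confidence set expression}) there exists some $\tilde\phi \in \Phi_n^\alpha$ with $\theta \in \Theta_I(\tilde\phi)$, so Theorem~\ref{thm:Identified set LP} supplies $\sigma$ satisfying~(\ref{eq:LP IC}) and~(\ref{eq:LP consistency}); taking $q^x = 0$ for all $x$ yields a feasible point with objective value $0$, so $\widehat{Q}_n^\alpha(\theta) \le 0$, hence equals $0$ by the sign constraint on $q^x$. Conversely, if $\widehat{Q}_n^\alpha(\theta) = 0$, then since $w^x > 0$ and $q^x \geq 0$, any optimizer must have $q^x = 0$ for every $x$; the obedience inequalities collapse to~(\ref{eq:LP IC}), consistency gives~(\ref{eq:LP consistency}) for the chosen $\phi \in \Phi_n^\alpha$, and Theorem~\ref{thm:Identified set LP} yields $\theta \in \Theta_I(\phi) \subseteq \widehat{\Theta}_I^\alpha$. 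The minimum is attained because the feasible set is compact (simplices are compact, and $q^x$ is bounded above on the feasible set whenever the constraints can be made to hold).

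For part~\ref{enu:implementation 3}, I would invoke the envelope theorem for parametric convex programs (e.g.\ the standard sensitivity result in Boyd and Vandenberghe, or Milgrom--Segal). The only places $\theta$ enters the program are the coefficients $\psi^{x,\theta}_\varepsilon$ in both the obedience and consistency constraints and the payoff gains $\partial u_i^{x,\theta}$ in the obedience constraints; these depend smoothly on $\theta$ under standard parametric assumptions. Form the Lagrangian $\mathcal{L}(q,\sigma,\phi,\lambda;\theta)$ attaching multipliers $\lambda_{i,x,t_i,a,\tilde a_i} \ge 0$ to the obedience inequalities and $\mu_{x,a}$ to the consistency equalities (and similarly for the $\phi \in \Phi_n^\alpha$ constraints). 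Because the program is convex with a nonempty relative interior (verified by the construction above after a small perturbation, ensuring a Slater-type condition), strong duality holds; the value function is then differentiable wherever the dual optimum is unique, and in that case $\nabla_\theta \widehat{Q}_n^\alpha(\theta) = \nabla_\theta \mathcal{L}(q^\ast,\sigma^\ast,\phi^\ast,\lambda^\ast;\theta)$. The right-hand side is a simple closed-form expression in the primal optimum $(q^\ast,\sigma^\ast,\phi^\ast)$ and dual optimum $\lambda^\ast$, both of which are returned by standard convex solvers, so the gradient is a genuine byproduct. The only delicate step is verifying a constraint qualification for the envelope theorem; I would handle this by noting that the uninformative-$\sigma$ construction in part~\ref{enu:implementation 1} can be perturbed to satisfy Slater's condition (or, if needed, by appealing to the Milgrom--Segal formulation which only requires differentiability of the Lagrangian in $\theta$ at the optimum).
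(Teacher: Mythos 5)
Your proposal is correct and follows essentially the same route as the paper's proof: feasibility via an explicitly constructed $(\bar{q},\bar{\sigma},\bar{\phi})$, convexity by linearity of the constraints in $(q,\sigma,\phi)$ together with convexity of $\Phi_{n}^{\alpha}$, the equivalence in part 2 by forcing $q^{x}=0$ and invoking Theorem \ref{thm:Identified set LP}, and the gradient via the envelope theorem. Your version is in fact more careful than the paper's (the uninformative decision rule $\sigma_{a\vert\varepsilon,t}^{x}\equiv\phi_{a}^{x}$, the attainment-of-the-minimum remark, and the constraint-qualification discussion for part 3 are all details the paper leaves implicit), but the underlying argument is the same.
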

In particular, Theorem \ref{thm:Implementation}.\ref{enu:implementation 3}
says that, due to the envelope theorem, we can obtain the gradients
for free when we evaluate the criterion function at each point (assuming
the analytic derivatives of $\psi^{x,\theta}$ and $u_{i}^{x,\theta}$
are available). In practice, we need to identify the minimizers of
$\widehat{Q}_{n}^{\alpha}\left(\theta\right)$ in order to numerically
approximate $\widehat{\Theta}_{I}^{\alpha}$. However, doing so by
conducting an extensive grid search over the whole parameter space
can be computationally costly especially when the dimension of $\theta$
is high. Due to Theorem \ref{thm:Implementation}.\ref{enu:implementation 3},
one can use gradient-based optimization algorithms to identify a minimizer
of the criterion function.\footnote{When program (\ref{eq:program Qhat}) has a manageable number of variables,
then the nested minimization problem $\min_{\theta}\widehat{Q}_{n}^{\alpha}\left(\theta\right)$
can be solved more efficiently as a single joint minimization problem
using a large-scale nonlinear solver \citep{su2012constrained}. We
use this approach for our empirical application in the next section.} The ability to quickly identify $\arg\min_{\theta}\widehat{Q}_{n}^{\alpha}\left(\theta\right)$
is advantageous since we can quickly test whether the identified set
is empty, or restrict the search to points near the minimizer.

For our empirical application, we use a heuristic approach to approximate
$\widehat{\Theta}_{I}^{\alpha}$. The idea is to identify a minimizer
of the criterion function and run a random walk process starting from
the minimizer in order to collect nearby points that have zero criterion
function values. This way we avoid the need to evaluate points that
are far from the identified set. See Appendix \ref{subsec:Random-Walk-Surface}
for details.

\section{Empirical Application: Entry Game by McDonald's and Burger King in
the US\label{sec:Empirical-Application}}

We apply our framework to study the entry game by McDonald's and Burger
King in the US using rich datasets. Entry competition in the fast
food industry fits our framework well due to two stylized facts. First,
the decisions on whether or not to operate outlets are highly persistent,
indicating that the firms' decisions are publicly observed. Tables
\ref{tab:Three-year transition probability of decisions} and \ref{tab:Transition-probability}
report the three-year transition probability of the firms' decisions
and the market outcomes $\left(a_{MD},a_{BK}\right)$ (where $a_{i}=1$
if firm $i$ is present in the market and $a_{i}=0$ otherwise), measured
for all urban census tracts (which correspond to our definition of
markets) in the contiguous US over 1997-2019. For instance, the probability
that McDonald's has an outlet in operation in a local market three
years later conditional on it having an outlet in operation today
is 0.95. Together with the assumption that the costs of revising decisions
are sufficiently low, the evidence supports the claim that firms'
decisions are best-responses to opponents' decisions that are readily
observed.\footnote{In the model, we assume that the costs of revising actions are zero.
We discuss the validity of the assumption in this setting in Appendix
\ref{sec:Adjustment-Costs-in}. }

\begin{table}[h]
{\small
\centering

\caption{\label{tab:Three-year transition probability of decisions}Three-year
Transition Probability of Decisions}

\begin{threeparttable}

{\renewcommand{\arraystretch}{1.2}
\begin{centering}
\begin{tabular}{c|cccc|cc}
\multicolumn{3}{c}{McDonald's} &  & \multicolumn{3}{c}{Burger King}\tabularnewline
$t\backslash t+3$ & Out & In &  & $t\backslash t+3$ & Out & In\tabularnewline
\cline{1-3} \cline{2-3} \cline{3-3} \cline{5-7} \cline{6-7} \cline{7-7} 
Out & 0.98 & 0.02 &  & Out & 0.99 & 0.01\tabularnewline
In & 0.05 & 0.95 &  & In & 0.08 & 0.92\tabularnewline
\end{tabular}
\par\end{centering}
}

\begin{tablenotes} \scriptsize
\item \emph{Notes:} Measured for urban tracts in the contiguous US, 1997-2019. 
\end{tablenotes}
\end{threeparttable}

}
\end{table}

\begin{table}[h]
{\small
\centering

\caption{\label{tab:Transition-probability}Three-year Transition Probability
of Market Outcomes $\left(a_{MD},a_{BK}\right)$}

\begin{threeparttable}

{\renewcommand{\arraystretch}{1.2}
\begin{centering}
\begin{tabular}{c|cccc}
$t$\textbackslash$t+3$ & $\left(0,0\right)$ & $\left(0,1\right)$ & $\left(1,0\right)$ & $\left(1,1\right)$\tabularnewline
\hline 
$\left(0,0\right)$ & 0.97 & 0.01 & 0.02 & 0.00\tabularnewline
$\left(0,1\right)$ & 0.09 & 0.87 & 0.00 & 0.04\tabularnewline
$\left(1,0\right)$ & 0.06 & 0.00 & 0.92 & 0.02\tabularnewline
$\left(1,1\right)$ & 0.00 & 0.04 & 0.08 & 0.88\tabularnewline
\end{tabular}
\par\end{centering}
}

\begin{tablenotes} \scriptsize
\item \emph{Notes:} Measured for urban tracts in the contiguous US, 1997-2019. 
\end{tablenotes}
\end{threeparttable}

}
\end{table}

Second, information asymmetries and information spillover from observing
others' decisions are common features in the industry. It is well-documented
that competitors take extra scrutiny over the locations where McDonald's
opens new outlets in order to take advantage of McDonald's leading
market research technology.\footnote{See \citet{ridley2008herding} and \citet{yang2020learning} who provide
anecdotal evidence on how competing firms learn about the profitability
of a location from entries of leading firms such as McDonald's and
Starbucks. For example, according to \emph{The Wall Street Journal},
``In the past, many restaurants... plopped themselves next to a McDonald's
to piggyback on the No. 1 burger chain's market research.'' \citep{leung2003aglutted}} Our notion of equilibrium accounts for this phenomenon.

Using the proposed framework, we estimate the entry game under different
baseline information structures in order to explore the role of informational
assumptions on identification. We also compare our results to those
obtained under Bayes correlated equilibrium, which also allows estimation
with weak assumptions on players' information. We then perform a counterfactual
policy exercise that studies how the market structures in Mississippi
food deserts respond after increasing access to healthy food.

\subsection{Data Description}

We combine multiple datasets to construct the final dataset for structural
estimation of the entry game. In the final dataset, the unit of observation
is a market (urban tract). Each observation contains information on
the firms' market entry decisions and the observable characteristics
of the firms and the market.

Our primary dataset comes from Data Axle Historical Business Database,
which contains a (approximately) complete list of fast-food chain
outlets operating in the US between 1997 and 2019 at an annual level.\footnote{This database contains location information for a detailed list of
business establishments in the US from 1997 to 2019. The provider
attempts to increase accuracy by using an internal verification procedure
after collecting data from multiple sources. The dataset is approximately
complete in the sense that the list is not free of error. However,
we compare the number of burger outlets in the data and the number
reported in external sources and confirm that the information is highly
accurate for the case of burger chains. See Appendix \ref{sec:Data-Appendix}
for details.} The advantage of this dataset is that it provides the address information
of the burger outlets across all regions of the US. The use of this
dataset to study strategic entry decisions is new.\footnote{We are not the first to study the entry game between McDonald's and
Burger King in the US. \citet{gayle2015choosing} uses 2011 cross-sectional
data hand-collected using the online restaurant locator on the brands'
websites. However, they define a local market as an ``isolated city''
that is more than 10 miles away from the closest neighboring city,
which is larger than our definition that uses a census tract. Moreover,
they focus on examining assumptions on the order of entries.}

Although we use panel data to investigate the persistence of decisions
over time, we use cross-section data to estimate the structural model.
The idea is to illustrate that the econometrician can use cross-sectional
data as a snapshot of the stable outcomes of the markets at some point
in time.\footnote{If we wanted to exploit the information available in panel data, we
would need to model the dependence of observations across time. However,
given that market environments usually seem to stay very stable over
time, it is not clear how to leverage the information for structural
estimation. For simplicity, we focus on analyzing a single cross-section
(which also represents a typical dataset available to researchers).} We use the 2010 cross-section since it was the last year for which
decennial census data were available. We describe the main features
of our dataset below. Further details on data construction are provided
in Appendix \ref{sec:Data-Appendix}. 

\subsubsection*{Market Definition}

Markets are defined as 2010 urban census tracts in the contiguous
US. A census tract is classified as urban if its geographic centroid
is in an \emph{urbanized area} defined by the Census. The final data
contain 54,944 markets. We code $a_{i}=1$ if firm $i$ had an outlet
operating in the market.\footnote{McDonald's (resp. Burger King) has more than one outlets in 1.5\%
(resp. 0.3\%) of the markets.} The unconditional probabilities of market outcomes are $\left(\hat{\phi}_{00},\hat{\phi}_{01},\hat{\phi}_{10},\hat{\phi}_{11}\right)=\left(0.74,0.06,0.15,0.05\right)$
where $\hat{\phi}_{a}$ is the sample frequency of outcome $a=\left(a_{MD},a_{BK}\right)$.

\subsubsection*{Exclusion Restrictions}

We use two firm-specific variables that have been used in existing
works: distance to headquarters (\citet{zhu2009marketstructure},
\citet{zhu2009spatial}, \citet{yang2012burgerking}) and own outlets
in nearby markets (\citet{toivanen2005marketstructure}, \citet{igami2016unobserved},
\citet{yang2020learning}). Variable distance to headquarter measures
the distance between the center of each market to the firms' respective
headquarters. The associated exclusion restriction is valid if the
cost of operating an outlet increases with its distance to own headquarter,
but is unrelated to the distance to opponents' headquarters. Variable
own outlets in neighboring markets is constructed by finding all outlets
in tracts that are adjacent to a given tract. The underlying assumption
is that an outlet's profit can be affected by an own-brand outlet
in a neighboring market, but not by a competing brand's outlet in
a neighboring market; competition with opponents occur only within
each market.

\subsubsection*{Summary Statistics}

Summary statistics are provided in Table \ref{tab:Summary-Statistics}.
Continuous variables are discretized to binary variables by using
cutoffs around their medians. Clearly, the entry probability of McDonald's
is higher. McDonald's is more likely to have an outlet present in
adjacent markets. The distance to headquarter is higher for Burger
King on average because Burger King has its headquarter in Florida
while McDonald's has its headquarter in Chicago.

\begin{table}[h]
{\footnotesize

\caption{\label{tab:Summary-Statistics}Summary Statistics}

\centering
\begin{threeparttable}
\begin{centering}
{
\def\sym#1{\ifmmode^{#1}\else\(^{#1}\)\fi}
\begin{tabular}{l*{1}{ccccc}}
\toprule
                    &\multicolumn{5}{c}{}                                            \\
                    &        Mean&     Std dev&         Min&         Max&           N\\
\midrule
\textbf{\emph{Decision variables}}&            &            &            &            &            \\
MD Entry            &       0.196&       0.397&        0.00&        1.00&       54940\\
BK Entry            &       0.106&       0.307&        0.00&        1.00&       54940\\
\textbf{\emph{Firm-specific variables}}&            &            &            &            &            \\
MD outlets present in nearby markets&       0.720&       0.449&        0.00&        1.00&       54940\\
BK outlets present in nearby markets&       0.483&       0.500&        0.00&        1.00&       54940\\
Long distance to MD HQ (>1.6K km)&       0.285&       0.451&        0.00&        1.00&       54940\\
Long distance to BK HQ (>1.6K km)&       0.712&       0.453&        0.00&        1.00&       54940\\
\textbf{\emph{Market environment variables}}&            &            &            &            &            \\
Many eating/drinking places (>7 stores)&       0.465&       0.499&        0.00&        1.00&       54940\\
High income per capita (>25K dollars)&       0.502&       0.500&        0.00&        1.00&       54940\\
Low access to healthy food&       0.856&       0.351&        0.00&        1.00&       54940\\
Food desert         &       0.334&       0.472&        0.00&        1.00&       54940\\
\bottomrule
\end{tabular}
}

\par\end{centering}
\begin{tablenotes} \scriptsize
\emph{Notes:} All variables are binary. Each observation corresponds to urban census tracts.
\end{tablenotes}
\end{threeparttable}

}
\end{table}

Market environment variables control for the determinants of profitability
that are common across firms. We obtain the following variables to
describe market environments. First, we have an indicator for whether
a tract has many eating or drinking places; the variable is obtained
from the National Neighborhood Data Archive (NaNDA) which provides
business activity information at the tract-level. Second, we have
an indicator for whether a tract has high income per capita; the variable
is from the census. Finally, from the Food Access Research Atlas,
we obtain indicators for whether a tract has low access to healthy
food and whether a tract is classified as a food desert. A tract is
classified as having low access to healthy food if at least 500 or
33 percent of the population lives more than 1/2 mile from the nearest
supermarket, supercenter, or large grocery store.\footnote{ USDA use supermarkets, supercenters, and large grocery stores that
offer a full range of food products\textemdash including fresh meat
and poultry, produce, dairy, dry and packaged foods, and frozen foods\textemdash to
calculate access to healthy food. To construct the list of stores,
USDA combines a list of stores authorized to accept Supplemental Nutrition
Assistance Program (SNAP) benefits and a list of stores from Trade
Dimensions TDLinx (see \citet{ver2012access}). The list of stores
flagged as healthy food providers by the USDA serves as a proxy for
access to healthy and affordable food but does not count other retailers
that might offer healthy options (e.g., convenience stores, drugstores,
dollar stores, military commissaries, and warehouse club stores).} A tract is classified as a food desert if it has low income and low
access to healthy food, where the criteria for low-income are from
the U.S. Department of Treasury's New Markets Tax Credit program.

The last rows of Table \ref{tab:Summary-Statistics} shows that 85\%
of all urban census tracts are classified as having low access to
healthy food and 33\% are classified as food deserts. In the counterfactual
analysis, we select food deserts in Mississippi and investigate the
impact of increasing access to healthy food on the strategic entry
decisions of the firms.

\subsection{Preliminary Analysis}

Before estimating the structural model, we examine the data patterns
using simple probit regressions. Each market $m$ contains binary
decisions of each firm $a_{im}\in\left\{ 0,1\right\} $ where $a_{im}=0$
if firm $i$ stays out in market $m$ and $a_{im}=1$ if $i$ stays
in. We pool the decisions of the firms in each market (so that the
unit of observation is $\left(i,m\right)$) and regress the binary
decisions on market characteristics. Table \ref{tab:Average-Marginal-Effects}
reports the average marginal effects computed from the regression
results.

\begin{table}[h]
{\scriptsize
\centering

\caption{\label{tab:Average-Marginal-Effects}Average Marginal Effects from
Simple Probit Models}

\begin{threeparttable}
\begin{centering}
\begin{tabular}{l*{3}{c}}
\toprule
                    &\multicolumn{1}{c}{(1)}&\multicolumn{1}{c}{(2)}&\multicolumn{1}{c}{(3)}\\
                    &\multicolumn{1}{c}{In}&\multicolumn{1}{c}{In}&\multicolumn{1}{c}{In}\\
\midrule
Own-brand outlets present in nearby markets&      -0.067&      -0.076&      -0.096\\
                    &     (0.002)&     (0.002)&     (0.002)\\
\addlinespace
Long distance to HQ (> 1.6K km)&      -0.083&      -0.083&      -0.010\\
                    &     (0.003)&     (0.003)&     (0.003)\\
\addlinespace
Many eating/drinking places (>7)&            &       0.203&       0.203\\
                    &            &     (0.002)&     (0.002)\\
\addlinespace
High income per capita (>25K dollars)&            &      -0.038&      -0.037\\
                    &            &     (0.002)&     (0.002)\\
\addlinespace
Low access to healthy food&            &       0.039&       0.041\\
                    &            &     (0.004)&     (0.004)\\
\addlinespace
McDonald's          &            &            &       0.109\\
                    &            &            &     (0.002)\\
\midrule
State Dummies       &         Yes&         Yes&         Yes\\
N                   &     107,042&     107,042&     107,042\\
\bottomrule
\end{tabular}

\par\end{centering}
\begin{tablenotes} \scriptsize
\item \emph{Notes:} Each observation corresponds to a firm-market pair. Standard errors, which are given in the parentheses, are clustered at the market-level. All variables are binary. 
\end{tablenotes}
\end{threeparttable}

}
\end{table}

Table \ref{tab:Average-Marginal-Effects} conveys three messages.
First, the presence of own outlets in neighboring markets and distance
to headquarter are negatively correlated with entry decisions. This
appears to be consistent with our prior that these variables have
a negative impact on potential profits. Second, the number of eating
and drinking places strongly affects the burger chains' entries. This
is presumably because districts with high concentration of food services
are also places with high traffic of people who eat out. Finally,
low access to healthy food is \emph{positively} correlated with entry
decisions. That is, the burger chains are more likely to enter a market
when there are fewer healthy substitutes for food.

While Table \ref{tab:Average-Marginal-Effects} provides a helpful
snapshot for what drives the chains' entry decisions, the estimates
are likely to be biased since they ignore the fact that firms' decisions
affect each other. Such consideration is crucial not only for estimating
the parameters of the model but also for studying a policy experiment.
In the next section, we estimate the entry game using Bayes stable
equilibrium as a solution concept.

\subsection{Entry Game Setup}

We posit a canonical entry game that extends the running example to
incorporate covariates in the payoff functions. Let us recall the
notation. We use $i=1,2$ to denote McDonald's and Burger King respectively.
In each market $m$, firm $i$ can choose a binary action $a_{im}\in\left\{ 0,1\right\} $
where $a_{im}=1$ if $i$ stays in and $a_{im}=0$ if $i$ stays out.
The payoff function is specified as
\[
u_{i}^{x_{m},\theta}\left(a_{im},a_{jm},\varepsilon_{im}\right)=a_{im}\left(\beta_{i}^{T}x_{im}+\kappa_{i}a_{jm}+\varepsilon_{im}\right).
\]
That is, the payoff from operating in the market is $\beta_{i}^{T}x_{im}+\kappa_{i}a_{jm}+\varepsilon_{im}$
where $x_{im}$ represents market covariates, $a_{jm}$ represents
whether the opponent is present, and $\varepsilon_{im}$ is firm
$i$'s payoff shock that is not observed by the econometrician. Each
$\varepsilon_{im}$ can include firm-specific payoff determinants
(e.g., customers' loyalty to brand and managerial ability) and common
payoff determinants (e.g., local food preference, local price level,
the degree of competition from other restaurants). We model $\left(\varepsilon_{1m},\varepsilon_{2m}\right)\in\mathbb{R}^{2}$
as being normally distributed with zero mean, unit variance, and correlation
coefficient $\rho\in[0,1)$.\footnote{For example, suppose that $\varepsilon_{im}\equiv\nu_{im}+\xi_{m}$
where $\nu_{im}\overset{iid}{\sim}N\left(0,\sigma_{\nu}^{2}\right)$
for $i=1,2$, and $\xi_{m}\overset{iid}{\sim}N\left(0,\sigma_{\xi}^{2}\right)$.
Then $\text{Var}\left(\varepsilon_{im}\right)=\sigma_{\nu}^{2}+\sigma_{\xi}^{2}$,
$\text{Cov}\left(\varepsilon_{1m},\varepsilon_{2m}\right)=\sigma_{\xi}^{2}$,
and $\text{Corr}\left(\varepsilon_{1m},\varepsilon_{2m}\right)=\sigma_{\xi}^{2}/\left(\sigma_{\nu}^{2}+\sigma_{\xi}^{2}\right)$.
Normalizing the variance of $\varepsilon_{im}$ to one scales the
coefficients $\beta_{i}$ and $\kappa_{i}$ to units equal to the
standard deviation of $\varepsilon_{im}$. Our approach of modeling
$\varepsilon_{im}$'s as jointly normally distributed with arbitrary
correlation follows \citet{magnolfi2021estimation}. \citet{ciliberto2009marketstructure}
models each $\varepsilon_{im}$ as a sum of independent firm-specific
and market-specific random shocks and estimates the associated covariance
matrix.} The payoff from staying out is normalized to zero. Our specification
of the payoff functions is quite standard in the literature.\footnote{A more flexible specification might add a richer set of covariates
or let the spillover effects $\kappa_{i}$ be a function of the observable
covariates as done in \citet{ciliberto2009marketstructure}. We keep
the specification parsimonious.}

We estimate the parameters under the baseline information assumptions
specified previously in Example \ref{exa:1}: $S^{null}$, $S^{1P}$,
and $S^{private}$. To recap, $S^{null}$ is the information structure
in which each player observes nothing; in $S^{1P}$, Player 1 observes
(only) $\varepsilon_{1}$ whereas Player 2 observes nothing; in $S^{private}$,
Player 1 observes $\varepsilon_{1}$ and Player 2 observes $\varepsilon_{2}$.

Under the Bayes stable equilibrium assumption, the baseline information
structures should be interpreted as specifying what the players \emph{minimally}
observe. Then estimating the model with $S^{null}$ as the baseline
information structure amounts to making no assumption on players'
information. On the other hand, if the baseline information structure
is set to $S^{private}$, then the identified set is robust to all
cases in which the players observe at least their payoff shocks. Finally,
setting the baseline information structure to $S^{1P}$ amounts to
assuming that McDonald's has good information about its payoff shocks
whereas Burger King might minimally have no information about its
payoff shock. This assumption relaxes the standard assumption on information
(namely the information structure is fixed at either $S^{private}$
or $S^{complete}$) and is consistent with the anecdotal evidence
that McDonald's is a leader in the market research technology.

\subsection{Estimation Results}

In order to keep the model parsimonious and reduce the computational
burden, we take some steps before estimation, which are described
as follows (see Appendix \ref{sec:Computational-Details} for further
details). First, we assume that the coefficients for common market-level
variables (eating places, income per capita, and low access to healthy
food) are identical across the two players.\footnote{This assumption is not without loss and can be refuted on the basis
that each chain might react differently to market environment. However,
we believe it is reasonable given that McDonald's and Burger King
are close substitutes to each other.} We also assume that the coefficients of the firm-specific variables
(distance to headquarter and the presence of own-brand outlets in
nearby markets) are non-positive. Second, while the benchmark distribution
of the latent variables $\left(\varepsilon_{1m},\varepsilon_{2m}\right)$
is continuous, we use discretized normal distribution for feasible
estimation. Third, we discretize each variable to binary bins; since
there are 7 variables in the covariates, this gives $2^{7}=128$ discrete
covariate bins. Conditional choice probabilities are non-parametrically
estimated using the observations within each bin. Fourth, to construct
confidence sets for the conditional choice probabilities, we used
simultaneous confidence bands based on the method described in \citet{fitzpatrick1987quicksimultaneous};
using simultaneous confidence bands makes the evaluation of the criterion
function a linear program.

\subsubsection{The Role of Informational Assumptions on Identification}

\begin{table}[h]
{\footnotesize
\centering

\caption{\label{tab:Estimation-Results} Bayes Stable Equilibrium Identified
Sets}

\begin{threeparttable}
\begin{centering}
\begin{tabular}{llll} \toprule 
 Baseline Information &  $S^{null}$ &  $S^{1P}$ & $S^{private}$    \\ [1ex] \hline 

 McDonald's Variables &  &  &  \\   \quad Spillover Effects & $[-1.83,1.62]$ & $[-0.89,-0.14]$ & \multicolumn{1}{c}{-} \\   \quad Constant & $[-1.64,0.32]$ & $[-1.46,-1.04]$ & \multicolumn{1}{c}{-} \\   \quad Nearby Outlets & $[-1.24,-0.00]$ & $[-0.56,-0.25]$ & \multicolumn{1}{c}{-} \\   \quad Distance to HQ & $[-1.23,-0.00]$ & $[-0.26,-0.00]$ & \multicolumn{1}{c}{-} \\   Burger King Variables &  &  &  \\   \quad Spillover Effects & $[-1.81,1.22]$ & $[-1.19,-0.25]$ & \multicolumn{1}{c}{-} \\   \quad Constant & $[-2.38,0.44]$ & $[-1.48,-0.76]$ & \multicolumn{1}{c}{-} \\   \quad Nearby Outlets & $[-1.44,-0.00]$ & $[-0.53,-0.00]$ & \multicolumn{1}{c}{-} \\   \quad Distance to HQ & $[-1.41,-0.00]$ & $[-0.52,-0.00]$ & \multicolumn{1}{c}{-} \\   Common Market-level Variables &  &  &  \\   \quad Eating Places & $[-0.31,1.87]$ & $[0.82,1.21]$ & \multicolumn{1}{c}{-} \\   \quad Income Per Capita & $[-1.02,0.75]$ & $[-0.54,-0.18]$ & \multicolumn{1}{c}{-} \\   \quad Low Access & $[-0.71,1.31]$ & $[0.25,0.54]$ & \multicolumn{1}{c}{-} \\   Correlation parameter $ \rho $  & $[0.00,0.99]$ & $[0.42,0.91]$ & \multicolumn{1}{c}{-} \\\hline   Number of Markets & 54940 & 54940 & 54940 \\

\bottomrule
\end{tabular}
\par\end{centering}
\begin{tablenotes}
\emph{Notes}: Table reports the projections of confidence sets obtained with nominal level $\alpha=0.05$. The identified set for $S^{private}$ not reported because it is empty. 
\end{tablenotes}
\end{threeparttable}

}
\end{table}

Table \ref{tab:Estimation-Results} reports projections of the 95\%
confidence sets obtained under the Bayes stable equilibrium assumption
with different baseline information structures. There are three main
findings related to the role of informational assumption. First, making
no assumption on players' information leads to an uninformative identified
set. The confidence set under $S^{null}$ is quite large, and we cannot
determine the signs of the parameters. Therefore, being utterly agnostic
about players' information does not give us enough identifying power
to draw meaningful conclusions.

Second, standard assumptions on information may be too strong. It
is quite standard to assume that each player $i$ observes (exactly)
$\varepsilon_{i}$ or $\left(\varepsilon_{i},\varepsilon_{-i}\right)$.
Setting baseline information structure as $S^{private}$ nests all
these cases. However, we find that the identified set under $S^{private}$
is empty, suggesting the possibility of misspecification.\footnote{Specifically, we consistently find that the minimum of the criterion
function under $S^{private}$ is strictly greater than zero. This
is also true even if we do not use sign constraints or reduce the
nominal level to a very low level (e.g., $\alpha=0.0001$).} Thus, assuming that each player observes at least their $\varepsilon_{i}$
may be too strong. Since the Bayes stable equilibrium identified set
under $S^{private}$ is equivalent to the pure strategy Nash equilibrium
identified set (see Theorem \ref{thm:Relationship between identified sets}.\ref{enu:Relationship between identified sets 2}),
the pure strategy Nash equilibrium assumption would also be rejected.\footnote{The emptiness of the identified set is not driven by the possibility
of non-existence of Bayes stable equilibrium. When the competition
effects parameters have the same signs, there exists at least one
pure strategy Nash equilibrium at each state, implying the existence
of a Bayes stable equilibrium. Of course, the emptiness of the identified
set might be due to misspecification in payoff functions, distribution
of errors, etc. Our statements are conditional on having these specifications
correct.}

Third, we find that setting the baseline information structure to
$S^{1P}$ can produce an informative identified set. Recall that the
identified set under $S^{1P}$ makes the assumption that McDonald's
has accurate information about its payoff shock, but Burger King's
information can be arbitrary. This assumption is consistent with the
anecdotal evidence that McDonald's has superior information on the
potential profitability of each market, and Burger King tries to free-ride
on McDonald's information by observing what McDonald's does.\footnote{ Note that Burger King's extraction of McDonald's information is
feasible when the errors are correlated. For instance, suppose that
local food taste variable $\xi$ enters both $\varepsilon_{1}$ and
$\varepsilon_{2}$ and that McDonald's observes $\varepsilon_{1}$
via its research technology. In a rational expectations equilibrium,
McDonald's decision reveals partial information about $\xi$, which
in turn Burger King can use to infer its $\varepsilon_{2}$. Such
refinement of information is not allowed in the static Bayes Nash
equilibrium framework.} Table \ref{tab:Estimation-Results} shows that, even if we substantially
relax the assumption on Burger King's information, we can determine
the signs of the most parameters. For example, we can see that burger
chains are more likely to enter in markets that have low access to
healthy food. We can also learn that the firms' payoff shocks are
highly correlated to each other.

In conclusion, we find that the informativeness of the identified
set crucially depends on the underlying assumption on players' information.
At least in our empirical application, it is difficult to draw a meaningful
economic conclusion without making assumptions on players' information.
On the other hand, under the maintained solution concept, the model
rejects the popular assumptions made in the literature, namely that
each firm $i$ observes at least its $\varepsilon_{i}$. A credible
intermediate case $S^{1P}$, which is consistent with our knowledge
of the market research technology in the fast food industry, delivers
strong identifying power.\footnote{\label{fn:R2MajC5}Following a reviewer's comment, we have also tried
an alternative specification that assumes $\varepsilon_{im}=\nu_{im}+\xi_{m}$
where $\nu_{im}\overset{iid}{\sim}N\left(0,\sigma_{\nu}^{2}\right)$
and $\xi_{m}\overset{iid}{\sim}N\left(0,\sigma_{\xi}^{2}\right)$.
In the estimation stage, we normalized to variance of $\varepsilon_{im}$
to one. We estimated the model under the assumption that each player
$i$ minimally observes $\nu_{im}$ (but remained agnostic to whether
firm $i$ observes $\xi_{m}$ or $\nu_{-im}$). This specification
is weaker than $S^{private}$, but it is neither stronger nor weaker
than $S^{1P}$ as McDonald's may not observe $\xi_{m}$ and Burger
King observes at least $\nu_{2m}$. We found that the identified set
is non-empty. Thus, an alternative specification can be used to relax
strong assumption on players' information. However, this approach
requires imposing an additional structure on the unobservables. Moreover,
it increases computational costs by increasing the dimensionality
of optimization problems that use discretized distributions. }

\subsubsection{Comparison to Bayes Correlated Equilibrium Identified Sets}

\begin{table}[h]
{\footnotesize
\centering

\caption{\label{tab:Estimation results with BCE}Bayes Correlated Equilibrium
Identified Sets}

\begin{threeparttable}
\begin{centering}
\begin{tabular}{llll} \toprule 
 Baseline Information &  $S^{null}$ &  $S^{1P}$ & $S^{private}$    \\ [1ex] \hline 

McDonald's Variables &  &  &  \\   \quad Spillover Effects & $[-4.83,1.92]$ & $[-4.85,-0.17]$ & $[-4.85,-2.11]$ \\   \quad Constant & $[-1.64,0.34]$ & $[-1.53,0.29]$ & $[-1.37,0.31]$ \\   \quad Nearby Outlets & $[-1.33,-0.00]$ & $[-1.11,-0.00]$ & $[-0.97,-0.00]$ \\   \quad Distance to HQ & $[-1.35,-0.00]$ & $[-1.10,-0.00]$ & $[-0.88,-0.00]$ \\   Burger King Variables &  &  &  \\   \quad Spillover Effects & $[-3.84,3.33]$ & $[-3.98,0.72]$ & $[-3.38,-1.03]$ \\   \quad Constant & $[-3.71,0.61]$ & $[-1.65,0.62]$ & $[-1.62,0.44]$ \\   \quad Nearby Outlets & $[-1.71,-0.00]$ & $[-1.23,-0.00]$ & $[-1.11,-0.00]$ \\   \quad Distance to HQ & $[-1.70,-0.00]$ & $[-1.03,-0.00]$ & $[-0.86,-0.00]$ \\   Common Market-level Variables &  &  &  \\   \quad Eating Places & $[-0.24,1.98]$ & $[0.51,1.76]$ & $[0.49,1.68]$ \\   \quad Income Per Capita & $[-1.32,0.84]$ & $[-1.16,0.14]$ & $[-1.08,0.11]$ \\   \quad Low Access & $[-0.59,1.49]$ & $[-0.37,1.31]$ & $[-0.28,1.07]$ \\   Correlation parameter $ \rho $  & $[0.00,0.99]$ & $[0.00,0.99]$ & $[0.00,0.97]$ \\\hline   BSE volume/BCE volume & 0.05036 & 0.00000 & - \\   Number of Markets & 54940 & 54940 & 54940 \\

\bottomrule
\end{tabular}
\par\end{centering}
\begin{tablenotes}
\emph{Notes}: Table reports the projections of confidence sets obtained with nominal level $\alpha=0.05$. BSE/BCE volume computed by taking products of projected intervals. 
\end{tablenotes}
\end{threeparttable}

}
\end{table}

We compare the Bayes stable equilibrium identified sets to the Bayes
correlated equilibrium identified sets studied in \citet{magnolfi2021estimation}.
The Bayes correlated equilibrium identified sets are reported in Table
\ref{tab:Estimation results with BCE}. We can readily see that the
Bayes correlated equilibrium assumption produces a much larger set
for each baseline information structure. Even when we set $S^{private}$
as the baseline information structure, it is not easy to learn the
signs of many parameters. For example, we cannot determine whether
low access to healthy food promotes or deters entries by the burger
chains.

Comparing Tables \ref{tab:Estimation-Results} and \ref{tab:Estimation results with BCE}
suggests that if the researcher is willing to accept the Bayes stable
equilibrium assumption, it can add significant identifying power while
providing the same kind of informational robustness as Bayes correlated
equilibria. At least in the context of our empirical application,
we believe it is reasonable to assume that McDonald's decisions that
we observe in the data represent best-responses to the \emph{observed}
decisions of Burger King and vice versa.

\subsection{Counterfactual Analysis: The Impact of Increasing Access to Healthy
Food on Market Structure}

Consumption of fast food is driven not only by consumers' taste for
fast food but also the availability of food substitutes in the neighborhood.
Following the recent surge of interest on the relationship between
food deserts and food consumption patterns (see, e.g., \citet{allcott2019fooddeserts}
and \citet{kolb2021retail}), we study the impact of accessibility
to healthy food on the entry decisions of fast-food chains. Specifically,
we consider a policy experiment to predict changes in market structure
in Mississippi food deserts after increasing access to healthy food
measured by supermarket entries.\footnote{Our policy experiment relies on consumers' substitution patterns between
fast-food restaurants and supermarkets (which include supercenters
and large grocery stores). Although shorter distance to providers
of healthy food does not necessarily translate to healthier diet (\citet{allcott2019fooddeserts}
and \citet{kolb2021retail}), easier access to supermarkets leads
consumers to lower their visits to fast-food chains due to reduced
travel costs and increased availability of alternative (healthy or
equally unhealthy but cheaper) food substitutes. Note that our indicator
for access to supermarket corresponds to having a supermarket within
1/2-mile distance which amounts to less than 10-minutes walking distance
while \citet{allcott2019fooddeserts} consider supermarket entries
within 10-15 minutes driving distance. } Mississippi is often called one of the ``hungriest'' states in
the US.\footnote{For example, Mississippi has been identified as the most food insecure
state in the country since 2010 according to Feeding America. See
\url{https://mississippitoday.org/2018/05/04/mississippi-still-the-hungriest-state/}.} Mississippi had 664 census tracts in 2010, and 329 of them are classified
as urban tracts, which correspond to our definition of markets. Out
of 329 urban tracts, 185 tracts (approximately 56\%) are classified
as food deserts, according to the U.S. Department of Agriculture.
According to the definition of food deserts, all of these tracts are
classified as having low access to healthy food.

We conduct a policy experiment as follows. We select the 185 tracts
classified as food deserts in Mississippi and then increase access
to healthy food. This amounts to changing the low access indicator
from one (low access) to zero (high access) in all these markets.
In reality, such policy would correspond to increasing healthy food
providers (large grocery stores, supermarkets, or supercenters) by
providing subsidies or tax breaks. We then recompute the equilibria
in these markets and report the weighted average of the bounds associated
with each measure of market structure.\footnote{Our counterfactual analysis corresponds to a partial equilibrium analysis.
We abstract away from considering how entry or exit in each market
can affect the burger chains' decisions in neighboring markets and
the responses of healthy food providers.} See Appendix \ref{subsec:Counterfactual-Analysis} for computational
details. 

We report the results of the counterfactual analysis in Table \ref{tab:-Policy-experiment:}.
The first column reports the estimates obtained from the data of the
185 markets corresponding to Mississippi food deserts. For example,
the probability of observing McDonald's enter the market in Mississippi
food deserts is 0.30, much larger than the unconditional probability
obtained using all markets, which was around 0.20.

\begin{table}
{\footnotesize
\centering

\caption{\label{tab:-Policy-experiment:} The Impact of Increasing Access to
Healthy Food in Mississippi Food Deserts}

\begin{threeparttable}

{\renewcommand{\arraystretch}{1.2}
\begin{centering}
\begin{tabular}{l l l l l l l l}  \toprule  
& & & \multicolumn{2}{c}{$BSE(S^{1P})$} & & \multicolumn{2}{c}{$BCE(S^{1P})$} \\
\cline{4-5} \cline{7-8} \addlinespace[0.1cm]
& Data & & Pre & Post & & Pre & Post
\\ [1ex] \hline 

Expected number of entrants & $ 0.47 $ &  & $[0.28,1.01]$ & $[0.15,0.79]$ &  & $[0.10,1.18]$ & $[0.03,1.17]$ \\   Probability of MD entry & $ 0.30 $ &  & $[0.11,0.32]$ & $[0.04,0.23]$ &  & $[0.00,0.71]$ & $[0.00,0.67]$ \\   Probability of BK entry & $ 0.17 $ &  & $[0.00,0.84]$ & $[0.00,0.72]$ &  & $[0.00,1.00]$ & $[0.00,1.00]$ \\   Probability of no entrant & $ 0.64 $ &  & $[0.15,0.74]$ & $[0.28,0.85]$ &  & $[0.00,0.90]$ & $[0.00,0.97]$ \\

\bottomrule
\end{tabular}
\par\end{centering}
}

\begin{tablenotes}
\emph{Notes}: Data column represents the sample estimates obtained using markets corresponding to Mississippi food deserts. Final bounds obtained by simulating equilibria at each parameter in the identified set, and then taken union over all bounds. Each number is obtained by taking a weighted average with weights proportional to the number of markets in each covariate bin.
\end{tablenotes}
\end{threeparttable}

}
\end{table}

The second and third columns report the bounds obtained before (``Pre''
has low access indicators set to one) and after the counterfactual
policy (``Post'' has low access indicators set to zero) using the
$S^{1P}$-Bayes stable equilibrium identified set. The bounds are
pretty wide because we have considered all parameters in the identified
set and made no assumption on the equilibrium selection. However,
they shift in the expected directions. For example, the bounds on
the expected number of entrants shift from $[0.28,1.01]$ to $[0.15,0.79]$.
Since the mean number of entrants in the data was $0.47$ and the
post-counterfactual bounds are $\left[0.15,0.79\right]$, the maximal
change we can expect is $0.15-0.47=-0.32$. In some cases, we can
make a stronger statement: while the unconditional probability of
observing McDonald's enter in data was 0.30, the upper bound in the
Post-regime decreases to 0.23, so we can expect that the probability
of McDonald's enter to decrease by \emph{at least} 0.07.

Our results suggest that meaningful counterfactual statements may
be made even with weak assumptions on players' information. The bounds
do not depend on specific assumptions on equilibrium selection and
admit all information structures that are expansions of the baseline
information structure.\footnote{Our predictions are conservative because we do not make any assumptions
on how the information structure or the equilibrium selection rule
might change after the counterfactual policy.} Hence our approach can also serve as a useful tool to conduct sensitivity
analysis for researchers who want to see whether their predictions
are driven by assumptions on equilibrium selection or what the players
know.

For comparison, in the last two columns, we report the counterfactual
results obtained using the $S^{1P}$-Bayes correlated equilibrium
identified set. One can readily see that the bounds are quite large
compared to the Bayes stable equilibrium counterpart. For example,
we cannot make any statement about the probability of Burger King's
entry after the counterfactual policy is implemented. Table \ref{tab:-Policy-experiment:}
shows that Bayes correlated equilibrium predictions can be too permissive,
especially when no assumption is imposed on what equilibrium might
be selected in the counterfactual world.

\section{Conclusion\label{sec:Conclusion}}

This paper presents an empirical framework for analyzing stable outcomes
with weak assumptions on players' information. We propose Bayes stable
equilibrium as a framework for analyzing stable outcomes, which appear
in various empirical settings. Our framework can be an attractive
alternative to existing methods for practitioners who want to work
with an empirical game-theoretic model and be robust to informational
assumptions. Furthermore, we believe the proposed computational algorithms
can also be helpful in similar settings, especially since reducing
computational burden remains a fundamental challenge in the literature.

We believe there are many exciting avenues for future research. First,
providing a non-cooperative foundation to our solution concepts remains
an open question. While we can imagine a dynamic adjustment process
that converges to stable outcomes, how to formalize this idea is yet
unclear. Second, it will be interesting to find reasonable ways of
imposing equilibrium selection. While Bayes stable equilibrium (or
Bayes correlated equilibrium) has the informational robustness property,
the set of predictions may be too large, limiting our ability to make
sharp predictions for counterfactual analysis. Finding ways to sharpen
predictions without sacrificing robustness to information will be
helpful. Third, our counterfactual analysis is limited to a partial
equilibrium analysis. It will be interesting to think about ways to
model the strategic interactions of healthy food providers and unhealthy
food providers together. Finally, there are other forms of informational
robustness that our model cannot handle but is empirically interesting.
For example, it might be natural to assume that McDonald's has superior
information relative to Burger King while being agnostic about their
specific information structure. Our model only specifies what the
players minimally observe and is agnostic about the relative information
across players. Studying alternative forms of informational robustness
and corresponding econometric frameworks should be interesting. 

\bibliographystyle{ecta}
\bibliography{StabilityReferences}

\newpage{}

\appendix

\part*{Appendix}

\section{Proofs\label{sec:Proofs}}

\subsection{Proof of Theorem \ref{thm:BSE and REE connection}}

Let $S^{*}$ be an expansion of $S$. Let $\delta:\mathcal{T}\times\tilde{\mathcal{T}}\to\Delta\left(\mathcal{A}\right)$
be an outcome function in $\left(G,S^{*}\right)$. We say that an
outcome function $\delta$ in $\left(G,S^{*}\right)$ \emph{induces}
a decision rule $\sigma:\mathcal{E}\times\mathcal{T}\to\Delta\left(\mathcal{A}\right)$
in $\left(G,S\right)$ if
\[
\sigma\left(a\vert\varepsilon,t\right)=\sum_{\tilde{t}}\lambda\left(\tilde{t}\vert\varepsilon,t\right)\delta\left(a\vert t,\tilde{t}\right)
\]
for each $a$ whenever $\text{Pr}\left(\varepsilon,t\right)>0$.
\begin{lem}
\label{lem:1}A decision rule $\sigma$ is a Bayes stable equilibrium
of $\left(G,S\right)$ if and only if, for some expansion $S^{*}$
of $S$, there is a rational expectations equilibrium of $\left(G,S^{*}\right)$
that induces $\sigma$.
\end{lem}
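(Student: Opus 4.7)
My plan is to prove the two directions separately, with the harder ($\Rightarrow$) direction done by an explicit construction of an expansion whose augmenting signals act as a public recommendation.

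For the easy ($\Leftarrow$) direction, let $\delta$ be an REE of $(G,S^*)$ that induces $\sigma$. The REE obedience condition (in the form given in the footnote of Section~\ref{subsec:Bayes-Stable-Equilibrium} applied to $S^*$) holds separately for each $(t_i,\tilde t_i)$ on the support of $\pi^*$. I would fix $(i,t_i,a,a_i')$ and sum the inequality over $\tilde t_i\in\tilde{\mathcal T}_i$; the sum is preserved, and after unpacking $\pi^*(t,\tilde t\mid\varepsilon)=\pi(t\mid\varepsilon)\lambda(\tilde t\mid\varepsilon,t)$ one can swap the order of summation to isolate $\sum_{\tilde t}\lambda(\tilde t\mid\varepsilon,t)\delta(a\mid t,\tilde t)=\sigma(a\mid\varepsilon,t)$. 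What remains is exactly the BSE inequality~(\ref{eq:BSE condition}).

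For the harder ($\Rightarrow$) direction, let $\sigma$ be a BSE of $(G,S)$. The idea is to treat $\sigma$ itself as an augmenting device that broadcasts the recommended action profile as a \emph{public} signal to all players. Concretely, I set $\tilde{\mathcal T}_i\equiv\mathcal A$ for each $i$ and define
\[
\lambda(\tilde t\mid\varepsilon,t)=\begin{cases}\sigma(a\mid\varepsilon,t) & \text{if }\tilde t_1=\cdots=\tilde t_I=a\text{ for some }a\in\mathcal A,\\ 0 & \text{otherwise,}\end{cases}
\]
so that $S^*$ with $\pi^*(t,\tilde t\mid\varepsilon)=\pi(t\mid\varepsilon)\lambda(\tilde t\mid\varepsilon,t)$ is an expansion of $S$ by construction. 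I then let $\delta$ be the trivial outcome function $\delta(\tilde a\mid t,\tilde t)=\mathbf 1\{\tilde a=\tilde t_i\}$ (well-defined on the support of $\pi^*$ because all $\tilde t_j$ agree there). Two checks remain. (i) $\delta$ induces $\sigma$: collapsing $\sum_{\tilde t}\lambda(\tilde t\mid\varepsilon,t)\delta(a\mid t,\tilde t)$ leaves only the diagonal term $\tilde t=(a,\dots,a)$, which contributes $\sigma(a\mid\varepsilon,t)$. (ii) $\delta$ is an REE: fix $i,t_i,\tilde t_i=a,a,a_i'$ with $\mathrm{Pr}^\delta(t_i,\tilde t_i,a)>0$; summing over $\tilde t_{-i}$ again collapses to the diagonal, and the resulting REE inequality coincides term-by-term with the BSE inequality for $\sigma$ at $(i,t_i,a,a_i')$.

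The main subtlety, and the only place where care is needed, is the design of $\lambda$: the augmenting signal must be perfectly correlated across players so that the recommendation is truly \emph{public}, since this is precisely what distinguishes a Bayes stable equilibrium (players condition on $a_{-i}$) from a Bayes correlated equilibrium (players condition only on $a_i$). Everything else is bookkeeping with Bayes' rule and cancellation of a strictly positive denominator, analogous to the derivation that led to~(\ref{eq:BSE condition}). Note the construction mirrors the proof of Theorem~1 in \citet{bergemann2016bayescorrelated}, with ``public'' in place of ``private'' recommendation, which is exactly the sketch flagged after Theorem~\ref{thm:BSE and REE connection}.
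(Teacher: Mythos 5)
Your proposal is correct and follows essentially the same route as the paper's proof: the $(\Rightarrow)$ direction uses the identical construction of a perfectly correlated (public) augmenting signal with $\lambda(\tilde t=(a,\dots,a)\mid\varepsilon,t)=\sigma(a\mid\varepsilon,t)$ and the degenerate outcome function, and the $(\Leftarrow)$ direction is the same integration-out of the extra signals. The subtlety you flag\textemdash that the recommendation must be public rather than private, which is what separates this from the Bayes correlated equilibrium argument\textemdash is exactly the point the paper makes as well.
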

The proof of Lemma \ref{lem:1} closely follows the proof in Theorem
1 of \citet{bergemann2016bayescorrelated}. The only if $\left(\Rightarrow\right)$
direction is established by (i) letting the Bayes stable equilibrium
decision rule $\sigma$ a signal function that generates public signals
(recommendations of outcomes) for every given $\left(\varepsilon,t\right)$,
and (ii) constructing an outcome function $\delta$ as a degenerate
self-map that places unit mass on $a$ whenever $a$ is drawn from
$\sigma\left(\cdot\vert\varepsilon,t\right)$. Conversely, the if
$\left(\Leftarrow\right)$ direction is established by constructing
a decision rule by integrating out the players' signals from a given
outcome function.

\begin{proof}[Proof of Lemma \ref{lem:1}]
$\left(\Rightarrow\right)$ Suppose $\sigma$ is a Bayes stable equilibrium
of $\left(G,S\right)$. That is, 
\[
\sum_{\varepsilon,t_{-i}}\psi_{\varepsilon}\pi_{t\vert\varepsilon}\sigma_{a\vert\varepsilon,t}u_{i}\left(a,\varepsilon\right)\geq\sum_{\varepsilon,t_{-i}}\psi_{\varepsilon}\pi_{t\vert\varepsilon}\sigma_{a\vert\varepsilon,t}u_{i}\left(a_{i}',a_{-i},\varepsilon\right),\quad\forall i,t_{i},a,a_{i}'.
\]
We want to find an expansion $S^{*}$ of $S$ and a rational expectations
equilibrium outcome function $\delta$ in $\left(G,S^{*}\right)$
that induces $\sigma$. Construct an expansion $S^{*}$ of $S$ as
follows. With some abuse in notation, let $\lambda$ be a signal distribution
that generates a \emph{public} signal such that
\[
\lambda\left(\tilde{t}^{p}=a\vert\varepsilon,t\right)=\sigma\left(a\vert\varepsilon,t\right).
\]
where $\tilde{t}^{p}$ denotes a public signal.\footnote{More formally, the agents receive signals that are perfectly correlated,
i.e., $\lambda\left(\tilde{t}_{1}=a,...,\tilde{t}_{I}=a\vert\varepsilon,t\right)=\sigma\left(a\vert\varepsilon,t\right)$.} Let an outcome function be degenerate as follows:
\[
\delta\left(\tilde{a}\vert t,\tilde{t}^{p}=a\right)=\begin{cases}
1 & \text{if }\tilde{a}=a\\
0 & \text{if }\tilde{a}\neq a
\end{cases}.
\]
That is, when the players observe $\tilde{t}^{p}=a$ as a public signal,
the outcome function dictates that $a$ be played as an outcome of
the game. It remains to show that every outcome $a$ generated by
the outcome function $\delta$ is optimal to the players. The rational
expectations equilibrium condition is
\[
\sum_{\varepsilon,t_{-i}}\psi_{\varepsilon}\pi_{t\vert\varepsilon}\lambda_{\tilde{t}^{p}\vert\varepsilon,t}\delta_{\tilde{a}\vert t,\tilde{t}^{p}}u_{i}\left(\tilde{a},\varepsilon\right)\geq\sum_{\varepsilon,t_{-i}}\psi_{\varepsilon}\pi_{t\vert\varepsilon}\lambda_{\tilde{t}^{p}\vert\varepsilon,t}\delta_{\tilde{a}\vert t,\tilde{t}^{p}}u_{i}\left(\tilde{a}_{i}',\tilde{a}_{-i},\varepsilon\right),\quad\forall i,t_{i},\tilde{t}^{p},\tilde{a},\tilde{a}_{i}'
\]
But since $\lambda\left(\tilde{t}^{p}=a\vert\varepsilon,t\right)=\sigma\left(a\vert\varepsilon,t\right)$
and the inequality is trivially satisfied when $\tilde{t}^{p}\neq\tilde{a}$
(both sides become zero), the rational expectations equilibrium condition
reduces to 
\[
\sum_{\varepsilon,t_{-i}}\psi_{\varepsilon}\pi_{t\vert\varepsilon}\sigma_{a\vert\varepsilon,t}u_{i}\left(a,\varepsilon\right)\geq\sum_{\varepsilon,t_{-i}}\psi_{\varepsilon}\pi_{t\vert\varepsilon}\sigma_{a\vert\varepsilon,t}u_{i}\left(\tilde{a}_{i}',a_{-i},\varepsilon\right),\quad\forall i,t_{i},a,\tilde{a}_{i}'
\]
which holds by the assumption that $\sigma$ is a Bayes stable equilibrium
of $\left(G,S\right)$.

$\left(\Leftarrow\right)$ Suppose that $\delta$ is a rational expectations
equilibrium of $\left(G,S^{*}\right)$ and $\delta$ induces $\sigma$
in $\left(G,S\right)$. That is, we have
\[
\sum_{\varepsilon,t_{-i},\tilde{t}_{-i}}\psi_{\varepsilon}\pi_{t\vert\varepsilon}\lambda_{\tilde{t}\vert\varepsilon,t}\delta_{a\vert t,\tilde{t}}u_{i}\left(a,\varepsilon\right)\geq\sum_{\varepsilon,t_{-i},\tilde{t}_{-i}}\psi_{\varepsilon}\pi_{t\vert\varepsilon}\lambda_{\tilde{t}\vert\varepsilon,t}\delta_{a\vert t,\tilde{t}}u_{i}\left(a_{i}',a_{-i},\varepsilon\right),\quad\forall i,t_{i},\tilde{t}_{i},a,a_{i}'
\]
Integrating out $\tilde{t}_{i}$ from both sides gives
\begin{gather*}
\sum_{\varepsilon,t_{-i}}\psi_{\varepsilon}\pi_{t\vert\varepsilon}\left(\sum_{\tilde{t}}\lambda_{\tilde{t}\vert\varepsilon,t}\delta_{a\vert t,\tilde{t}}\right)u_{i}\left(a,\varepsilon\right)\geq\sum_{\varepsilon,t_{-i}}\psi_{\varepsilon}\pi_{t\vert\varepsilon}\left(\sum_{\tilde{t}}\lambda_{\tilde{t}\vert\varepsilon,t}\delta_{a\vert t,\tilde{t}}\right)u_{i}\left(a_{i}',a_{-i},\varepsilon\right),\quad\forall i,t_{i},a,a_{i}'\\
\Leftrightarrow\sum_{\varepsilon,t_{-i}}\psi_{\varepsilon}\pi_{t\vert\varepsilon}\sigma_{a\vert\varepsilon,t}u_{i}\left(a,\varepsilon\right)\geq\sum_{\varepsilon,t_{-i}}\psi_{\varepsilon}\pi_{t\vert\varepsilon}\sigma_{a\vert\varepsilon,t}u_{i}\left(a_{i}',a_{-i},\varepsilon\right),\quad\forall i,t_{i},a,a_{i}'
\end{gather*}
which is the Bayes stable equilibrium condition for $\sigma$ in $\left(G,S\right)$.
\end{proof}
The statement of the theorem then follows directly from Lemma \ref{lem:1}
because any decision rule $\sigma:\mathcal{E}\times\mathcal{T}\to\Delta\left(\mathcal{A}\right)$
in $\left(G,S\right)$ pins down the joint distribution on $\mathcal{E}\times\mathcal{T}\times\mathcal{A}$
(the prior distribution $\psi$ on $\mathcal{E}$ is fixed by $G$
and the signal distribution $\pi:\mathcal{E}\to\Delta\left(\mathcal{T}\right)$
is fixed by $S$). $\square$

\subsection{Proof of Corollary \ref{cor:CCP for BSE and REE}}

$\left(\subseteq\right)$ Take any $\phi\in\text{\ensuremath{\mathcal{P}_{a}^{BSE}\left(G,S\right)}}$.
By definition, there is a BSE $\sigma$ in $\left(G,S\right)$ that
induces $\phi$. By Theorem \ref{thm:BSE and REE connection}, there
exists an expansion $S^{*}$ of $S$ and a REE $\delta$ of $\left(G,S^{*}\right)$
that induces $\sigma$. Since $\delta$ induces $\sigma$ and $\sigma$
induces $\phi$, $\delta$ induces $\phi$. It follows that $\phi\in\bigcup_{S^{*}\succsim_{E}S}\mathcal{P}_{a}^{REE}\left(G,S^{*}\right)$.

$\left(\supseteq\right)$ Take any $\phi\in\bigcup_{S^{*}\succsim_{E}S}\mathcal{P}_{a}^{REE}\left(G,S^{*}\right)$.
By definition, there exists some $S^{*}\succsim_{E}S$ and a REE $\delta$
of $\left(G,S^{*}\right)$ such that $\delta$ induces $\phi$, (i.e.,
$\phi_{a}=\sum_{\varepsilon,t,\tilde{t}}\psi_{\varepsilon}\pi_{t\vert\varepsilon}\lambda_{\tilde{t}\vert\varepsilon,t}\delta_{a\vert t,\tilde{t}}$
for all $a\in\mathcal{A}$). Since $S^{*}\succsim_{E}S$ and $\delta$
is a REE of $\left(G,S^{*}\right)$, by Theorem \ref{thm:BSE and REE connection},
$\delta$ induces a decision rule $\sigma$ in $\left(G,S\right)$
that is a BSE of $\left(G,S\right)$. Since $\delta$ induces $\sigma$,
it follows that $\sigma$ induces $\phi$. Therefore, we have $\phi\in\text{\ensuremath{\mathcal{P}_{a}^{BSE}\left(G,S\right)}}$.
$\square$

\subsection{Proof of Corollary \ref{cor:more info leads to tighter set}}

Take $\sigma\in\mathcal{P}_{t,\varepsilon,a}^{BSE}\left(G,S\right)$.
We want to show that $\sigma\in\mathcal{P}_{\varepsilon,t,a}^{BSE}\left(G,S'\right)$.
From Theorem \ref{thm:BSE and REE connection}, we have $\mathcal{P}_{\varepsilon,t,a}^{BSE}\left(G,S\right)=\bigcup_{\tilde{S}\succsim_{E}S}\mathcal{P}_{\varepsilon,t,a}^{REE}\left(G,\tilde{S}\right)$,
so there exists some $S^{*}$ such that $\sigma\in\mathcal{P}_{\varepsilon,t,a}^{REE}\left(G,S^{*}\right)$.
But since $S^{*}\succsim_{E}S\succsim_{E}S'$, we have $\mathcal{P}_{\varepsilon,t,a}^{REE}\left(G,S^{*}\right)\subseteq\bigcup_{\tilde{S}\succsim_{E}S'}\mathcal{P}_{\varepsilon,t,a}^{REE}\left(G,\tilde{S}\right)$,
so it follows that 
\[
\sigma\in\bigcup_{\tilde{S}\succsim_{E}S'}\mathcal{P}_{\varepsilon,t,a}^{REE}\left(G,\tilde{S}\right)=\mathcal{P}_{\varepsilon,t,a}^{BSE}\left(G,S'\right)
\]
which is what we wanted. $\square$

\subsection{Proof of Theorem \ref{thm:Relationship to PSNE}}
\begin{enumerate}
\item We first prove the first statement:\\
$\left(\Rightarrow\right)$ Since $\delta$ is a REE of $\left(G,S^{complete}\right)$,
it satisfies
\[
\psi_{\varepsilon}\delta_{a\vert\varepsilon}u_{i}\left(a,\varepsilon\right)\geq\psi_{\varepsilon}\delta_{a\vert\varepsilon}u_{i}\left(a_{i}',a_{-i},\varepsilon\right),\quad\forall i,\varepsilon,a,a_{i}'.
\]
Fix any $\varepsilon^{*}\in\mathcal{E}$ such that $\psi_{\varepsilon^{*}}>0$
(with the full support assumption, we have $\psi_{\varepsilon}>0$
for all $\varepsilon$). Consider any $a^{*}\in\mathcal{A}$ such
that $\delta$ places a positive mass at $\varepsilon^{*}$, i.e.,
$\delta_{a^{*}\vert\varepsilon^{*}}>0$. Since $\psi_{\varepsilon^{*}}\delta_{a^{*}\vert\varepsilon^{*}}>0$,
the REE condition reduces to
\[
u_{i}\left(a^{*},\varepsilon^{*}\right)\geq u_{i}\left(a_{i}',a_{-i}^{*},\varepsilon^{*}\right),\quad\forall i,a_{i}'
\]
 which is exactly the PSNE condition of $a^{*}$ at state $\varepsilon^{*}$.

$\left(\Leftarrow\right)$ Suppose that $\delta:\mathcal{E}\to\Delta\left(\mathcal{A}\right)$
is constructed in a way such that $\delta_{a\vert\varepsilon}>0$
implies that $a$ is a PSNE outcome at $\varepsilon$. Since any on-path
outcome $a$ at $\varepsilon$ is a PSNE at $\varepsilon$, it immediately
follows that the outcome is optimal to each player who observes $\left(a_{i},a_{-i}\right)$
and $\varepsilon$, satisfying the REE condition. \\
The second statement follows by observing that an outcome function
and a decision rule are equivalent (i.e., $\delta\left(a\vert\varepsilon\right)\equiv\sigma\left(a\vert\varepsilon,t=\varepsilon\right)$)
when the signal distribution is degenerate ($\pi\left(t=\varepsilon\vert\varepsilon\right)=1$).
In this case, the Bayes stable equilibrium conditions reduces to the
rational expectations equilibrium conditions. $\square$
\item The first statement is proven as follows:\\
($\Leftarrow$) Let $\delta:\mathcal{E}\to\Delta\left(\mathcal{A}\right)$
be a REE of $\left(G,S^{complete}\right)$. By definition, we have
\[
\psi_{\varepsilon}\delta_{a\vert\varepsilon}u_{i}\left(a,\varepsilon_{i}\right)\geq\psi_{\varepsilon}\delta_{a\vert\varepsilon}u_{i}\left(a_{i}',a_{-i},\varepsilon_{i}\right),\quad\forall i,\varepsilon,a,a_{i}'
\]
Integrating both sides with respect to $\varepsilon_{-i}$ gives
\[
\sum_{\varepsilon_{-i}}\psi_{\varepsilon}\delta_{a\vert\varepsilon}u_{i}\left(a,\varepsilon_{i}\right)\geq\sum_{\varepsilon_{-i}}\psi_{\varepsilon}\delta_{a\vert\varepsilon}u_{i}\left(a_{i}',a_{-i},\varepsilon_{i}\right),\quad\forall i,\varepsilon_{i},a,a_{i}'
\]
which is exactly the REE condition for $\left(G,S^{private}\right)$.

$\left(\Rightarrow\right)$ Conversely, let $\delta:\mathcal{E}\to\Delta\left(\mathcal{A}\right)$
be a REE of $\left(G,S^{private}\right)$. To show that $\delta$
is a REE of $\left(G,S^{complete}\right)$, by Theorem \ref{thm:Relationship to PSNE}.\ref{enu:REE PSNE 1},
it is enough to show that for each $\varepsilon$, $\delta_{a\vert\varepsilon}>0$
implies that $a$ is a PSNE of $\Gamma_{\varepsilon}$. \\
Since $\delta$ is a REE of $\left(G,S^{private}\right)$, by definition,
we have
\begin{gather*}
\sum_{\varepsilon_{-i}}\psi_{\varepsilon}\delta_{a\vert\varepsilon}u_{i}\left(a,\varepsilon_{i}\right)\geq\sum_{\varepsilon_{-i}}\psi_{\varepsilon}\delta_{a\vert\varepsilon}u_{i}\left(a_{i}',a_{-i},\varepsilon_{i}\right),\quad\forall i,\varepsilon_{i},a,a_{i}'\\
\Leftrightarrow\varphi\left(a,\varepsilon_{i}\right)u_{i}\left(a,\varepsilon_{i}\right)\geq\varphi\left(a,\varepsilon_{i}\right)u_{i}\left(a_{i}',a_{-i},\varepsilon_{i}\right),\quad\forall i,\varepsilon_{i},a,a_{i}'
\end{gather*}
where $\varphi\left(a,\varepsilon_{i}\right):=\sum_{\varepsilon_{-i}}\psi_{\varepsilon}\delta_{a\vert\varepsilon}$.
\\
Now fix $\varepsilon$ and consider any $a$ such that $\delta_{a\vert\varepsilon}>0$.
But $\delta_{a\vert\varepsilon}>0$ implies $\varphi\left(a,\varepsilon_{i}\right)>0$
which in turn implies that 
\[
u_{i}\left(a,\varepsilon_{i}\right)\geq u_{i}\left(a_{i}',a_{-i},\varepsilon_{i}\right),\quad\forall i,a_{i}'
\]
which is exactly the PSNE condition of $a$ at $\varepsilon$. \\
The second statement follows similarly as above; under $S^{private}$,
outcome functions and decision rules are equivalent since players'
signals exhaust information about the state of the world.$\square$
\end{enumerate}

\subsection{Proof of Theorem \ref{thm:Equivalence of identified sets}}

Let $S\equiv\left(S^{x}\right)_{x\in\mathcal{X}}$ and $\tilde{S}\equiv\left(\tilde{S}^{x}\right)_{x\in\mathcal{X}}$.
Let $\tilde{S}\succsim_{E}S$ if and only if $\tilde{S}^{x}\succsim_{E}S^{x}$
for each $x\in\mathcal{X}$. We want to show
\[
\Theta_{I}^{BSE}\left(S\right)=\bigcup_{\tilde{S}\succsim_{E}S}\Theta_{I}^{REE}\left(\tilde{S}\right).
\]
Note that 
\begin{equation}
\Theta_{I}^{BSE}\left(S\right)\equiv\left\{ \theta\in\Theta:\ \forall x\in\mathcal{X},\ \phi^{x}\in\mathcal{P}_{a}^{BSE}\left(G^{x,\theta},S^{x}\right)\right\} \label{eq:equivalence1}
\end{equation}
and
\begin{align}
\bigcup_{\tilde{S}\succsim_{E}S}\Theta_{I}^{REE}\left(\tilde{S}\right) & \equiv\bigcup_{\tilde{S}\succsim_{E}S}\left\{ \theta\in\Theta:\ \forall x\in\mathcal{X},\ \phi^{x}\in\mathcal{P}_{a}^{REE}\left(G^{x,\theta},\tilde{S}^{x}\right)\right\} \nonumber \\
 & =\left\{ \theta\in\Theta:\ \forall x\in\mathcal{X},\ \phi^{x}\in\bigcup_{\tilde{S}^{x}\succsim_{E}S^{x}}\mathcal{P}_{a}^{REE}\left(G^{x,\theta},\tilde{S}^{x}\right)\right\} .\label{eq:equivalence2}
\end{align}

By Corollary \ref{cor:CCP for BSE and REE}, for any given $\theta\in\Theta$
and $x\in\mathcal{X}$, we have
\begin{equation}
\mathcal{P}_{a}^{BSE}\left(G^{x,\theta},S^{x}\right)=\bigcup_{\tilde{S}^{x}\succsim_{E}S^{x}}\mathcal{P}_{a}^{REE}\left(G^{x,\theta},\tilde{S}^{x}\right).\label{eq:equivalence3}
\end{equation}
That (\ref{eq:equivalence1}) and (\ref{eq:equivalence2}) are equal
follows from (\ref{eq:equivalence3}), which is what we wanted. $\square$

\subsection{Proof of Theorem \ref{thm:Relationship between identified sets}}
\begin{enumerate}
\item Let $G$ be an arbitrary basic game. We suppress the covariates $x$
since they do not play a role. Let $S^{1}$ and $S^{2}$ be arbitrary
information structures such that $S^{1}\succsim_{E}S^{2}$. It is
enough to show that a BSE in $\left(G,S^{1}\right)$ always induces
a BSE in $\left(G,S^{2}\right)$ because it will imply that the set
of feasible CCPs in $\left(G,S^{1}\right)$ is a subset of the feasible
CCPs in $\left(G,S^{2}\right)$. But this directly follows from Corollary
\ref{cor:more info leads to tighter set}. $\square$ 
\item The statement follows from Theorem \ref{thm:Relationship to PSNE}.
In particular, note that when pure strategy Nash equilibrium is the
relevant solution concept, the decision rule (or the outcome function)
simply represents an arbitrary equilibrium selection mechanism; no
assumption is placed on the equilibrium selection rule. Since the
set of probability distributions over $\mathcal{A}$ on each realization
of $\varepsilon$ is the same across Bayes stable equilibria and pure
strategy Nash equilibria, the resulting identified set of parameters
must be identical. $\square$
\item The statement follows from Theorem \ref{thm:BSE and BCE}. Theorem
\ref{thm:BSE and BCE} says that for any $\left(G,S\right)$, if a
decision rule $\sigma$ in $\left(G,S\right)$ is a Bayes stable equilibrium
of $\left(G,S\right)$, then it is a Bayes correlated equilibrium
of $\left(G,S\right)$. This implies that we will have $\mathcal{P}_{a}^{BSE}\left(G,S\right)\subseteq\mathcal{P}_{a}^{BCE}\left(G,S\right)$
for any $\left(G,S\right)$, which leads to the statement. $\square$
\end{enumerate}

\subsection{Proof of Theorem \ref{thm:ConfidenceSet}}
\begin{enumerate}
\item The first statement follows directly from construction:
\begin{align*}
\text{Pr}\left(\Theta_{I}\subseteq\widehat{\Theta}_{I}^{\alpha}\right) & =\text{Pr}\left(\Theta_{I}\left(\phi\right)\subseteq\bigcup_{\tilde{\phi}\in\Phi_{n}^{\alpha}}\Theta_{I}\left(\tilde{\phi}\right)\right)\geq\text{Pr}\left(\phi\in\Phi_{n}^{\alpha}\right)
\end{align*}
(The inequality follows from the possibility that there may exist
$\bar{\phi}\neq\phi$ such that $\bar{\phi}\in\Phi_{n}^{\alpha}$
but $\Theta_{I}\left(\phi\right)\subseteq\Theta_{I}\left(\bar{\phi}\right)$.)
Taking the limits on both sides gives the desired result. $\square$
\item The second statement follows from the fact that $\phi$ enters the
population program (see Theorem \ref{thm:Identified set LP}) in an
additively separable manner, and that $\phi\in\Phi_{n}^{\alpha}$
represents a set of convex constraints. To see this, note that $\theta\in\widehat{\Theta}_{I}^{\alpha}$
if and only if the following program is feasible: For each $x\in\mathcal{X}$,
find $\sigma^{x}\in\Delta_{a\vert\varepsilon,t}$ and $\phi^{x}\in\Delta_{a}$
such that
\begin{gather*}
\sum_{\varepsilon,t_{-i}}\psi_{\varepsilon}^{x,\theta}\pi_{t\vert\varepsilon}^{x}\sigma_{a\vert\varepsilon,t}^{x}\partial u_{i}^{x,\theta}\left(a_{i}',a,\varepsilon_{i}\right)\leq0,\quad\forall i,t_{i},a,a_{i}'\\
\phi_{a}^{x}=\sum_{\varepsilon,t}\psi_{\varepsilon}^{x,\theta}\pi_{t\vert\varepsilon}^{x}\sigma_{a\vert\varepsilon,t}^{x},\quad\forall a,x\\
\phi\in\Phi_{n}^{\alpha}.
\end{gather*}
That is, compared to the population program which treats $\phi$ as
known, we let $\phi$ be a variable of optimization and add convex
constraints $\phi\in\Phi_{n}^{\alpha}$. Under the assumption that
$\phi\in\Phi_{n}^{\alpha}$ represents convex constraints, the above
program is convex. $\square$
\end{enumerate}

\subsection{Proof of Theorem \ref{thm:Implementation}}
\begin{enumerate}
\item First, let use show that (\ref{eq:program Qhat}) is always feasible
for any $\theta$. Pick any $\bar{\phi}\in\Phi_{n}^{\alpha}$. For
any $\bar{\phi}$, we can find a $\bar{\sigma}$ satisfying $\bar{\phi}_{a}^{x}=\sum_{\varepsilon,t}\psi_{\varepsilon}^{x,\theta}\pi_{t\vert\varepsilon}^{x}\sigma_{a\vert\varepsilon,t}^{x}$
for all $a,x$. Finally, there exists a non-negative vector of $\left\{ q_{x}\right\} _{x\in\mathcal{X}}$
such that $\sum_{\varepsilon,t_{-i}}\psi_{\varepsilon}^{x,\theta}\pi_{t\vert\varepsilon}^{x}\sigma_{a\vert\varepsilon,t}^{x}\partial u_{i}^{x,\theta}\left(\tilde{a}_{i},a,\varepsilon_{i}\right)\leq q_{x}$
for all $i,x,t_{i},a,\tilde{a}_{i}$. Therefore, the feasible set
of $\left(q,\sigma,\phi\right)$ is always non-empty. Second, convexity
of program (\ref{eq:program Qhat}) follows from the fact that all
the constraints are linear in $\left(q,\sigma,\phi\right)$ and that
$\phi\in\Phi_{n}^{\alpha}$ represents a set of convex constraints.
$\square$
\item It is straightforward to show that $\hat{Q}_{n}^{\alpha}\left(\theta\right)=0$
if and only if $\theta\in\widehat{\Theta}_{I}^{\alpha}$. If $\hat{Q}_{n}^{\alpha}\left(\theta\right)=0$,
then it must be that $q_{x}^{*}=0$ for all $x\in\mathcal{X}$, implying
that $\theta\in\widehat{\Theta}_{I}^{\alpha}$. Conversely, if $\theta\in\widehat{\Theta}_{I}^{\alpha}$,
then we can get $\hat{Q}_{n}^{\alpha}\left(\theta\right)=0$ by plugging
in $q_{x}=0$ for all $x\in\mathcal{X}$. $\square$
\item Finally, we can obtain $\nabla\widehat{Q}_{n}^{\alpha}\left(\theta\right)$
as a byproduct to the convex program using the envelope theorem. $\square$
\end{enumerate}
\newpage{}

\part*{Supplementary Materials (Online Appendix)}

\section{Computational Details\label{sec:Computational-Details}}

\subsection{Discretization of Unobservables \label{subsec:Discretization-of-Unobservables}}

Our approach to econometric analysis requires a discrete approximation
to the distribution of payoff shocks, which are often assumed to be
continuous. We follow a discretization approach similar to that taken
in \citet{magnolfi2021estimation}, which requires finding a finite
set of representative points on the support, and assigning appropriate
probability mass on each point of the discretized support. The only
difference is that \citet{magnolfi2021estimation} uses equally spaced
quantiles of the distribution of $\varepsilon_{i}$'s to find the
discretized support whereas we use the approach introduced in \citet{kennan_note_2006}
to find the discretized support.

First, to discretize the space of each $\varepsilon_{i}\in\mathbb{R}$,
we adopt the recommendations by \citet{kennan_note_2006}, which have
been used in several works, e.g., \citet{kennan2011theeffect}, \citet{lee2019onthe},
and \citet{aizawa2020equilibrium}. Let us briefly describe the procedure.
Let $F_{0}$ be the true continuous distribution of a scalar random
variable $\varepsilon_{i}$ with support $\mathcal{E}_{0}$. Suppose
we want to find an $N$-point discrete approximation to $F_{0}$.
Specifically, we want to find a pair $\left(\mathcal{E},F\right)$
where $\mathcal{E}$ contains $N$ points and $F$ describes the probability
mass on each of the $n$ points. How should we choose $\mathcal{E}$
and $F$?

\citet{kennan_note_2006} characterizes the ``best'' discrete approximation
$\left(\mathcal{E},F\right)$ to $\left(\mathcal{E}_{0},F_{0}\right)$,
measured in $L^{p}$ norm (for any $p>0$) when the researcher can
choose $N$ points. We restate the proposition introduced in \citet{kennan_note_2006}.
\begin{prop*}[Kennan 2006]
 The best $N$-point approximation $F$ to a given distribution $F_{0}$
has equally-weighted support points $\mathcal{E}\equiv\left\{ x_{j}^{*}\right\} _{j=1}^{N}$
given by
\[
F\left(x_{j}^{*}\right)=\frac{2j-1}{2N}
\]
for $j=1,...,N$.
\end{prop*}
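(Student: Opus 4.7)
My plan proceeds in three steps and reduces the proposition to a one-line application of Jensen's inequality.

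First, I would pull back to quantile space via $u = F_0(x)$. Writing $\tilde{F}(u) \equiv F(F_0^{-1}(u))$ and interpreting the $L^p$ discrepancy in the natural way for this setting, $\int |F(x) - F_0(x)|^p \, dF_0(x)$, the change of variable turns the problem into approximating the uniform distribution on $[0,1]$ in $L^p(du)$. It therefore suffices to prove the statement for uniform $F_0$ and translate back via $x_j^\ast = F_0^{-1}(u_j^\ast)$.

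Second, I would parametrize the approximating step function on $[0,1]$ by its jump locations $u_1 \leq \dots \leq u_N$ and its cumulative heights $c_j$, with the boundary conventions $u_0 = c_0 = 0$ and $u_{N+1} = c_N = 1$. The objective is then
\[
\Phi(u,c) = \sum_{j=0}^N \int_{u_j}^{u_{j+1}} |c_j - u|^p \, du.
\]
Since each $c_j$ enters only the $j$-th summand, at any optimum $c_j$ must lie inside $[u_j, u_{j+1}]$ (otherwise pushing $c_j$ toward that interval strictly decreases the $j$-th integral without affecting the rest). This lets me introduce the $2N$ interlaced gaps $\delta_j = u_j - c_{j-1}$ and $\epsilon_j = c_j - u_j$, which are non-negative and satisfy $\sum_{j=1}^N (\delta_j + \epsilon_j) = 1$. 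A direct integration then collapses the objective to
\[
\Phi \;=\; \frac{1}{p+1}\sum_{j=1}^N \bigl(\delta_j^{p+1} + \epsilon_j^{p+1}\bigr).
\]

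Third, since $p > 0$ implies $p+1 > 1$, the map $v \mapsto v^{p+1}$ is strictly convex on $[0,\infty)$. Jensen's inequality applied to the $2N$ non-negative numbers $\{\delta_j, \epsilon_j\}$ under their sum constraint yields $\Phi \geq \frac{1}{(p+1)(2N)^p}$, with equality if and only if $\delta_j = \epsilon_j = 1/(2N)$ for every $j$. Unwinding this gives $c_j^\ast = j/N$ (so the optimal weights are equal) and $u_j^\ast = (2j-1)/(2N)$, which translates back to $F_0(x_j^\ast) = (2j-1)/(2N)$ as claimed.

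The step I expect to be the most delicate is the first one: pinning down the precise sense of the $L^p$ norm that makes the conclusion hold uniformly in $p$ and in $F_0$. If one instead measures $|F - F_0|^p$ against Lebesgue in $x$, the change of variable introduces a weight $1/f_0(F_0^{-1}(u))$ that couples the optimum to the density and destroys the clean Jensen argument. The universality of Kennan's formula across $p > 0$ and across $F_0$ is exactly what forces the discrepancy to be interpreted against $dF_0$, which also gives the formula its natural "equal-probability-quantile" interpretation. Once that convention is fixed, the remaining steps are calculus plus Jensen, and the uniqueness of the claimed configuration follows immediately from the strict convexity of $v^{p+1}$.
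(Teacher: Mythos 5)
Your proof is correct and complete. Note first that the paper itself offers no proof of this proposition: it is restated verbatim from the cited reference (Kennan, 2006) as an ingredient of the discretization procedure in Appendix B, so there is no in-paper argument to compare against. Your three steps --- probability-integral transform to reduce to the uniform case, the observation that each cumulative height $c_j$ is separable and must land in $[u_j,u_{j+1}]$ at an optimum, and the collapse of the objective to $\frac{1}{p+1}\sum_{j=1}^{N}\bigl(\delta_j^{p+1}+\epsilon_j^{p+1}\bigr)$ followed by Jensen under the constraint $\sum_j(\delta_j+\epsilon_j)=1$ --- are all sound; I verified the telescoping identity and the boundary terms ($\delta_1 = u_1$ from $c_0=0$ and $\epsilon_N = 1-u_N$ from $c_N=1$), and strict convexity of $v\mapsto v^{p+1}$ for $p>0$ does deliver uniqueness of the equal-gap configuration, hence $c_j^{\ast}=j/N$ and $u_j^{\ast}=(2j-1)/(2N)$. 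This is in substance the same argument as in Kennan's note. Two small remarks. First, your ``delicate point'' is resolved the way you suspect: the discrepancy is measured against $dF_0$ (equivalently, Lebesgue measure in quantile space), which is precisely what makes the answer independent of $F_0$ and of $p$; your diagnosis that a Lebesgue-in-$x$ criterion would break the argument is accurate. Second, the proposition as transcribed in the paper writes $F\left(x_j^{\ast}\right)=\frac{2j-1}{2N}$ with $F$ denoting the discrete approximant, which cannot be literally right (an equally weighted $N$-point CDF takes values $j/N$); the intended reading is $F_0\left(x_j^{\ast}\right)=\frac{2j-1}{2N}$, i.e., the support points sit at the odd quantiles of the target distribution, which is exactly the conclusion you reach.
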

Following the proposition, we discretize unobservables as follows.
In a two-player game with binary actions, we take the benchmark distribution
of firm $i$'s random shock $\varepsilon_{i}$ to be the standard
normal distribution. We fix the number of grid points $N$ (we use
$N=10$ for empirical application) and find $\mathcal{E}_{i}\equiv\left\{ x_{j}^{*}\right\} _{j=1}^{N}$
as described above. Then we take the Cartesian product of $\mathcal{E}_{1}$
and $\mathcal{E}_{2}$ to set the discrete support of $\left(\varepsilon_{1},\varepsilon_{2}\right)$.
In the baseline case where $\varepsilon_{1}$ is uncorrelated with
$\varepsilon_{2}$, we construct the discretized prior distribution
$\psi$ as an $N\times N$ matrix whose entries are constant at $\frac{1}{N\times N}$.
Thus, $\psi\left(\varepsilon_{1},\varepsilon_{2}\right)=\frac{1}{N\times N}$
for any $\left(\varepsilon_{1},\varepsilon_{2}\right)\in\mathcal{E}\equiv\mathcal{E}_{1}\times\mathcal{E}_{2}$.
For example, when each $\varepsilon_{i}$ is approximated with $N=20$
points, we have $20^{2}=400$ points in $\mathcal{E}$ with $\psi$
assigning mass $1/400$ to each point in $\mathcal{E}$.

Second, to capture correlated unobservables, we apply weights to each
point in $\mathcal{E}$ where the weights are generated using the
density of the Gaussian copula. Specifically, we find the weight at
each point $\varepsilon=\left(\varepsilon_{1},\varepsilon_{2}\right)\in\mathcal{E}$
to be proportional to the density of bivariate Gaussian copula evaluated
at the point with correlation matrix $R=\left[\begin{array}{cc}
1 & \rho\end{array};\begin{array}{cc}
\rho & 1\end{array}\right]$. In the special case $\rho=0$, the approach applies uniform weights
to each point on $\mathcal{E}$, and we return to the case where $\psi$
has constant mass on every point on $\mathcal{E}$. Extension to the
case with more than two players is straightforward.

In Figure \ref{fig:True-Correlation-vs.}, we plot the true correlation
coefficient against the estimated correlation coefficient obtained
using the discretization approach with $N_{E}=10$. The figure shows
that discretized distribution has estimated correlation coefficient
slightly smaller than the true (input) correlation coefficient $\rho$.

\begin{figure}[h]
\caption{\label{fig:True-Correlation-vs.}True Correlation vs. Estimated Correlation
($N_{E}=10$)}

\centering{}\includegraphics[scale=0.35]{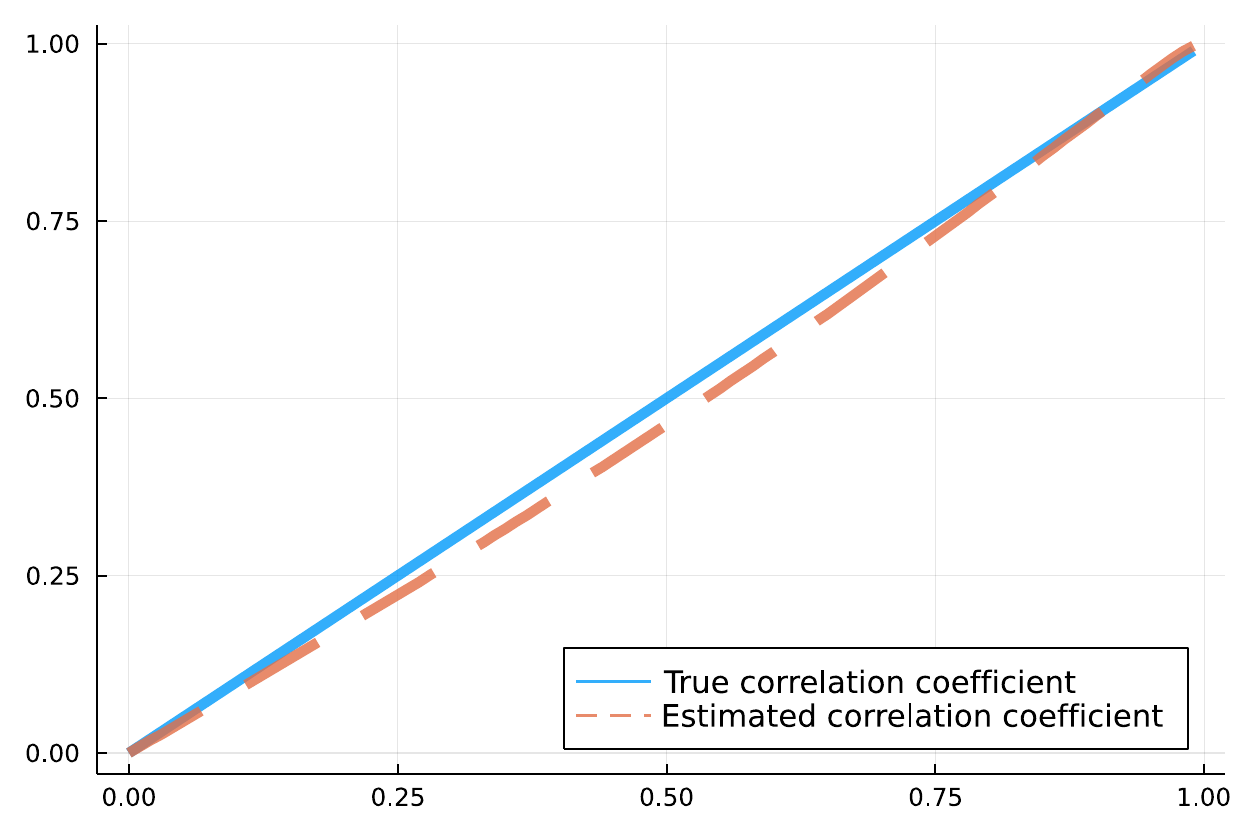}
\end{figure}

Note that whereas \citet{kennan_note_2006} shows an ``optimal''
way of discretizing the support of a univariate random variable, we
do not have such optimality result for a multivariate case. Thus,
our approach should be understood as being heuristic.

\subsubsection{Maximal Error from Discrete Approximation}

Given that our approach relies on discrete approximations (as done
in \citet*{syrgkanis2021inference} and \citet{magnolfi2021estimation}),
a natural question is how accurate the approximation is. We provide
a simple numerical evidence which supports the claim that the approximation
error is at most mild.

Consider a two-player entry game with payoff $u_{i}\left(a_{i},a_{j},\varepsilon_{i}\right)=a_{i}\left(\kappa_{i}a_{j}+\varepsilon_{i}\right)$.
We generate observed choice probability data at $\left(\kappa_{1},\kappa_{2}\right)=\left(-0.5,-0.5\right)$
using a continuous distribution $\varepsilon_{i}\overset{\text{iid}}{\sim}N\left(0,1\right)$,
and symmetric equilibrium selection probability. The population choice
probability is $\left(\phi_{00},\phi_{01},\phi_{10},\phi_{11}\right)\approx\left(0.25,0.3274,0.3274,0.0952\right)$.

If we use the discrete approximation procedure described above, how
much error can there be? Our measure of discrepancy is the solution
to
\begin{gather*}
\min_{t\in\mathbb{R},\sigma\in\Delta_{a\vert\varepsilon}}t\quad\text{subject to}\\
\sum_{\varepsilon_{-i}}\psi_{\varepsilon}\sigma_{a\vert\varepsilon}\partial u_{i}\left(\tilde{a}_{i},a,\varepsilon_{i}\right)\leq t,\quad\forall i,\varepsilon_{i},a,\tilde{a}_{i}\\
\sum_{\varepsilon}\psi_{\varepsilon}\sigma_{a\vert\varepsilon}-\phi_{a}\leq t,\quad\forall a\\
\phi_{a}-\sum_{\varepsilon}\psi_{\varepsilon}\sigma_{a\vert\varepsilon}\leq t,\quad\forall a
\end{gather*}
The solution $t^{*}$ measures the maximal relaxation required for
the equilibrium conditions and the consistency conditions. If $t^{*}=0$,
there is no approximation error. In general, we can expect $t^{*}>0$.
Let $N_{E}$ be the number of grid points used for approximating $N\left(0,1\right)$.
(We use $N_{E}=10$ for $\varepsilon_{1}$ and $\varepsilon_{2}$
in our empirical application which produces $10^{2}=100$ points for
the support of $\psi$.)

\begin{figure}[h]
\caption{\label{fig:Discrete-approximation-error}Discrete approximation error}

\centering{}\includegraphics[scale=0.35]{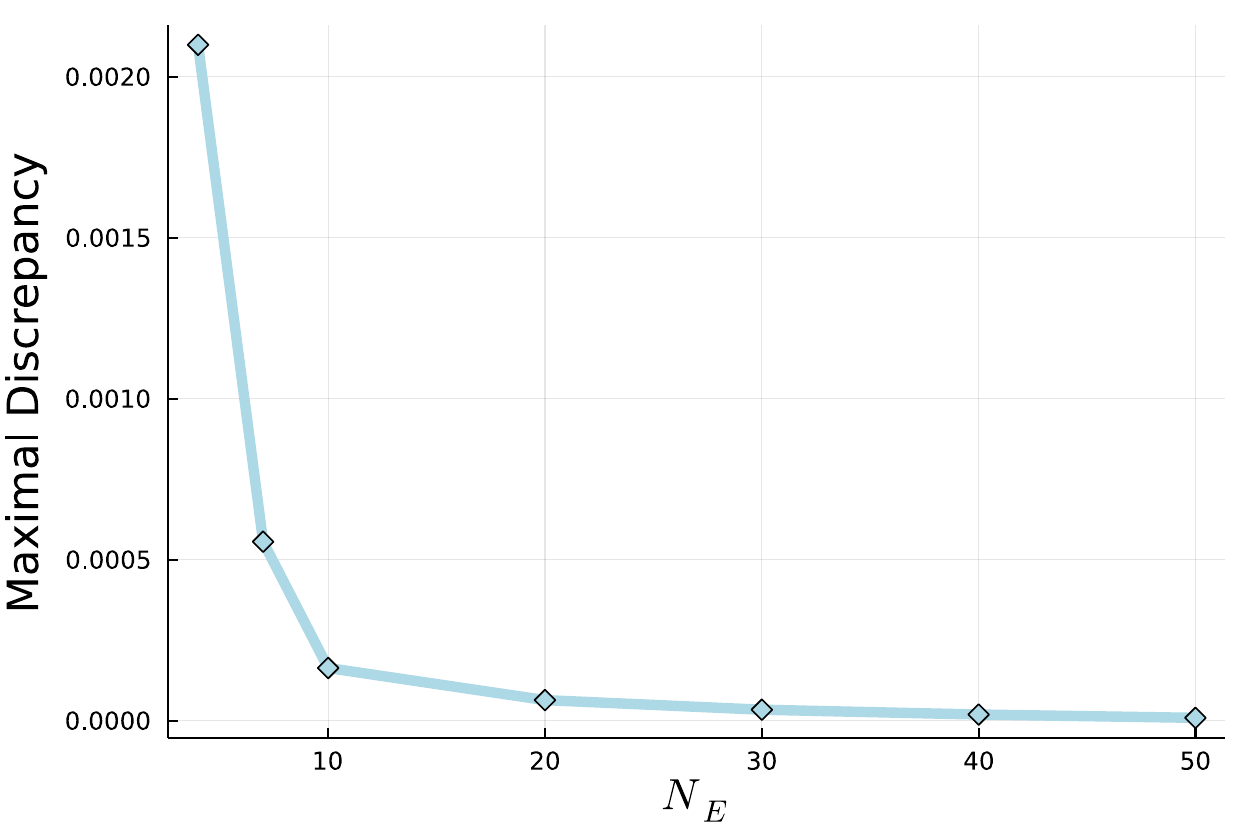}
\end{figure}

Figure \ref{fig:Discrete-approximation-error} plots $t^{*}$ (``maximal
discrepancy'') against $N_{E}$. The figure shows that the discrepancy
is decreasing in $N_{E}$ and at most modest after $N_{E}=10$. Since
we construct confidence sets for the conditional choice probabilities
when we do inference, it is likely that the approximation error will
be controlled together. For this reason, it seems quite unlikely that
discretization error will contaminate the estimation results.

\subsection{Construction of Convex Confidence Sets for Conditional Choice Probabilities\label{subsec:Construction-of-Convex}}

In this section, we describe a simple approach to constructing confidence
sets for the conditional choice probabilities, which we use for the
empirical application. We construct simultaneous confidence intervals
based on \citet{fitzpatrick1987quicksimultaneous}. The basic idea
is to construct confidence intervals for each multinomial proportion
parameter so that the confidence set for the conditional choice probabilities
can be characterized as a set of constraints that are \emph{linear}
in the population conditional choice probabilities. While there are
many ways of constructing simultaneous confidence bands for a vector
of means (e.g., see \citet{olea2019simultaneous} and the references
therein), we follow \citet{fitzpatrick1987quicksimultaneous} because
it provides a very simple approach to constructing simultaneous confidence
intervals for multinomial proportion parameters.\footnote{Specifically, \citet{fitzpatrick1987quicksimultaneous} shows a particular
simultaneous confidence intervals for multinomial proportion parameters
that are extremely easy to construct and characterizes the asymptotic
coverage probabilities.}

Let $\mathcal{X}$ be a finite set of covariates and $\vert\mathcal{X}\vert$
its cardinality. Let $\phi_{a}^{x}\in\mathbb{R}$ be the population
choice probability of outcome $a\in\mathcal{A}$ at bin $x\in\mathcal{X}$.
At each bin $x$, the conditional choice probabilities $\phi^{x}\equiv\left(\phi_{a}^{x}\right)_{a\in\mathcal{A}}\in\mathbb{R}^{\vert\mathcal{A}\vert}$
represent the proportion parameters of a multinomial distribution.
The entire vector of conditional choice probabilities is denoted $\phi\equiv\left(\phi^{x}\right)_{x\in\mathcal{X}}\in\mathbb{R}^{\vert\mathcal{A}\vert\times\vert\mathcal{X}\vert}$.
Let $n^{x}\in\mathbb{Z}$ be the number of observations at each bin
$x$, and let $n\equiv\sum_{x\in\mathcal{X}}n^{x}$ be the total number
of observations in the data.

Our strategy is described as follows. Our objective is to construct
a confidence set $\Phi_{n}^{\alpha}$ that covers $\phi$ with probability
at least $1-\alpha$ asymptotically where $\alpha\in\left(0,1\right)$.
To do so, we will construct a confidence set $\Phi_{n^{x}}^{x,\beta_{\alpha}}$
at each bin $x$ that covers $\phi^{x}$ with probability at least
$1-\beta_{\alpha}$ asymptotically where $\beta_{\alpha}=1-\left(1-\alpha\right)^{1/\vert\mathcal{X}\vert}$
($\beta_{\alpha}$ arises from applying the Šidák correction for testing
$\vert\mathcal{X}\vert$ number of independent hypotheses with family-wise
error rate $\alpha$; note that the samples in each bin $x$ are independent
from each other when the data are generated from independent markets).
Next, we will construct $\Phi_{n}^{\alpha}$ by taking intersections
of $\Phi_{n^{x}}^{x,\beta_{\alpha}}$ across $x$; making the coverage
probability for $\phi^{x}$ at each $x$ be no less than $1-\beta_{\alpha}$
ensures that the overall coverage probability for $\phi$ is no less
than $1-\alpha$. Moreover, if, for each $x$, $\Phi_{n^{x}}^{x,\beta_{\alpha}}$
can be represented by a set of constraints linear in $\phi^{x}$,
then $\Phi_{n}^{\alpha}$ will be represented by a set of constraints
linear in $\phi$ by construction.

At each $x\in\mathcal{X}$, we define the confidence set for $\phi^{x}$
as follows. Let $\hat{\phi}_{a}^{x}\equiv n_{a}^{x}/n^{x}\in\mathbb{R}$
be the nonparametric frequency estimator of $\phi_{a}^{x}$ where
$n_{a}^{x}\in\mathbb{Z}$ is the number of observations with outcome
$a$ at bin $x$. Then construct $\Phi_{n^{x}}^{x,\beta_{\alpha}}$
as:
\begin{equation}
\Phi_{n^{x}}^{x,\beta_{\alpha}}\equiv\left\{ \phi^{x}:\ \phi_{a}^{x}\in\hat{\phi}_{a}^{x}\pm\frac{z\left(\beta_{\alpha}/4\right)}{2\sqrt{n^{x}}},\quad\forall a\in\mathcal{A}\right\} ,\label{eq:FS1}
\end{equation}
where $z\left(\tau\right)\in\mathbb{R}$ denotes the upper $100\left(1-\tau\right)\%$
quantile of the standard normal distribution.\footnote{Although the intervals may include values lower than 0 or higher than
1, we impose the condition that $\phi_{a}^{x}\in\left[0,1\right]$
for each $a,x$ and $\sum_{a}\phi_{a}^{x}=1$ for each $x$ in the
optimization problem.} Note that $\Phi_{n^{x}}^{x,\beta_{\alpha}}$ consists of $\vert\mathcal{A}\vert$
number of confidence intervals.

Finally, we define a confidence region for $\phi$ as:
\begin{equation}
\Phi_{n}^{\alpha}\equiv\left\{ \phi:\ \phi^{x}\in\Phi_{n^{x}}^{x,\beta_{\alpha}},\quad\forall x\in\mathcal{X}\right\} .\label{eq:FS2}
\end{equation}

The following proposition states that, under regular conditions, $\Phi_{n}^{\alpha}$
constructed as (\ref{eq:FS2}) has the desired asymptotic coverage
probabilities for the population conditional choice probabilities
$\phi$.
\begin{prop}
\label{prop:SCI using FP}Let $\Phi_{n}^{\alpha}$ be defined as (\ref{eq:FS2}).
Suppose that samples are independent across $x\in\mathcal{X}$, and
$n^{x}\to\infty$ for each $x\in\mathcal{X}$ as $n\to\infty$. If
$\alpha$ is sufficiently low or $\vert\mathcal{X}\vert$ is sufficiently
large so that $\beta_{\alpha}\leq0.032$, we have 
\[
\lim_{n\to\infty}\text{Pr}\left(\phi\in\Phi_{n}^{\alpha}\right)\geq1-\alpha.
\]
\end{prop}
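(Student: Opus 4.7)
The plan is to reduce the joint coverage statement to a product of per-bin coverage statements, and then invoke the asymptotic guarantee of \citet{fitzpatrick1987quicksimultaneous} at each bin. The two ingredients we rely on are (i) independence of the samples across bins $x\in\mathcal{X}$, which makes the events $\{\phi^{x}\in\Phi_{n^{x}}^{x,\beta_{\alpha}}\}$ mutually independent, and (ii) the Fitzpatrick--Scott simultaneous confidence interval, which guarantees asymptotic coverage at least $1-\beta_{\alpha}$ at each bin, provided $\beta_{\alpha}\leq 0.032$ (this numerical threshold is exactly the range over which Fitzpatrick and Scott establish their coverage bound for the multinomial).

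First, I would unpack the event $\{\phi\in\Phi_{n}^{\alpha}\}$ as the intersection of the per-bin events. By the definition in (\ref{eq:FS2}),
\[
\{\phi\in\Phi_{n}^{\alpha}\}=\bigcap_{x\in\mathcal{X}}\{\phi^{x}\in\Phi_{n^{x}}^{x,\beta_{\alpha}}\}.
\]
Because the observations in distinct bins come from independent markets, the empirical frequencies $\hat{\phi}^{x}$ are independent across $x\in\mathcal{X}$, and hence the events $\{\phi^{x}\in\Phi_{n^{x}}^{x,\beta_{\alpha}}\}$ are mutually independent. Therefore
\[
\text{Pr}\left(\phi\in\Phi_{n}^{\alpha}\right)=\prod_{x\in\mathcal{X}}\text{Pr}\left(\phi^{x}\in\Phi_{n^{x}}^{x,\beta_{\alpha}}\right).
\]

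Second, I would invoke the Fitzpatrick--Scott result at each bin. Fix $x$ and recall that $\phi^{x}\in\mathbb{R}^{|\mathcal{A}|}$ is a multinomial proportion vector estimated from $n^{x}$ i.i.d.\ draws. Fitzpatrick and Scott show that for a multinomial with any number of cells, the simultaneous confidence intervals $\hat{\phi}_{a}^{x}\pm z(\beta_{\alpha}/4)/(2\sqrt{n^{x}})$ have asymptotic simultaneous coverage at least $1-\beta_{\alpha}$ whenever $\beta_{\alpha}\leq 0.032$. Under the hypothesis $n^{x}\to\infty$, this yields
\[
\liminf_{n\to\infty}\text{Pr}\left(\phi^{x}\in\Phi_{n^{x}}^{x,\beta_{\alpha}}\right)\geq 1-\beta_{\alpha}
\]
for each $x\in\mathcal{X}$.

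Finally, combining the two steps and using $\beta_{\alpha}=1-(1-\alpha)^{1/|\mathcal{X}|}$, I obtain
\[
\liminf_{n\to\infty}\text{Pr}\left(\phi\in\Phi_{n}^{\alpha}\right)\geq\prod_{x\in\mathcal{X}}(1-\beta_{\alpha})=(1-\beta_{\alpha})^{|\mathcal{X}|}=1-\alpha,
\]
which is the desired conclusion. The main obstacle, and the only nontrivial point, is the per-bin step: I am directly importing the asymptotic coverage bound of Fitzpatrick and Scott, whose validity is exactly what necessitates the technical condition $\beta_{\alpha}\leq 0.032$ in the statement. The other two moves---independence across bins and the Šidák product---are standard and follow from the sampling design.
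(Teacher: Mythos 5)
Your proposal is correct and follows essentially the same route as the paper: decompose the joint coverage event into a product over bins via independence, apply the Fitzpatrick--Scott per-bin coverage bound (valid for $\beta_{\alpha}\leq 0.032$), and use the Šidák choice $\beta_{\alpha}=1-(1-\alpha)^{1/\vert\mathcal{X}\vert}$ to recover $1-\alpha$. The only cosmetic difference is your use of $\liminf$ where the paper interchanges the limit with the finite product directly; both are fine.
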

To prove Proposition \ref{prop:SCI using FP}, we use Theorem 1 of
\citet{fitzpatrick1987quicksimultaneous} as a lemma. The lemma characterizes
the asymptotic lower bounds on the coverage probabilities of $\Phi_{n^{x}}^{x,\beta_{\alpha}}$
for $\phi^{x}$ when the intervals of form (\ref{eq:FS1}) are used.
\begin{lem}[\citet{fitzpatrick1987quicksimultaneous} Theorem 1]
 Let $\Phi_{n^{x}}^{x,\beta_{\alpha}}$ be defined as (\ref{eq:FS1}).
Then 
\[
\lim_{n^{x}\to\infty}\text{Pr}\left(\phi^{x}\in\Phi_{n^{x}}^{x,\beta_{\alpha}}\right)\geq\mathcal{L}\left(\beta_{\alpha}\right)
\]
where 
\[
\mathcal{L}\left(\beta_{\alpha}\right)=\begin{cases}
1-\beta_{\alpha}, & \text{if }\beta_{\alpha}\leq0.032\\
6\Phi\left(\frac{3z\left(\beta_{\alpha}/4\right)}{\sqrt{8}}\right)-5, & \text{if \ensuremath{0.032\leq\beta_{\alpha}\leq0.3}}
\end{cases}.
\]
\end{lem}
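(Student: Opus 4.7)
The plan is to reduce the global coverage statement to the per-bin coverage guarantee provided by the Fitzpatrick--Scott lemma, exploiting (i) independence of the data across covariate bins and (ii) the \v Sid\'ak calibration $\beta_\alpha = 1-(1-\alpha)^{1/\vert\mathcal{X}\vert}$ built into the construction. First I would define the per-bin coverage event $A_n^x \equiv \{\phi^x \in \Phi_{n^x}^{x,\beta_\alpha}\}$ and note that, by the very definition (\ref{eq:FS2}) of $\Phi_n^\alpha$, we have the set identity $\{\phi \in \Phi_n^\alpha\} = \bigcap_{x\in\mathcal{X}} A_n^x$.

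Next I would argue independence: because each $\hat\phi^x$ is computed only from observations whose covariate value is $x$, and observations are independent across bins by assumption, the family $\{A_n^x\}_{x\in\mathcal{X}}$ is mutually independent. Consequently
\[
\Pr\bigl(\phi \in \Phi_n^\alpha\bigr) \;=\; \prod_{x\in\mathcal{X}} \Pr(A_n^x).
\]
Since $\vert\mathcal{X}\vert$ is finite and each factor lies in $[0,1]$, a finite product is jointly continuous in its entries, so passing to the limit gives
\[
\liminf_{n\to\infty}\Pr\bigl(\phi \in \Phi_n^\alpha\bigr) \;\geq\; \prod_{x\in\mathcal{X}} \liminf_{n\to\infty} \Pr(A_n^x),
\]
where we use the hypothesis that $n^x\to\infty$ for each $x$ as $n\to\infty$.

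Finally I would invoke the imported lemma. Under the hypothesis $\beta_\alpha \leq 0.032$, the lemma gives $\liminf_{n^x\to\infty}\Pr(A_n^x) \geq \mathcal{L}(\beta_\alpha) = 1-\beta_\alpha$ for every $x$. Combining,
\[
\liminf_{n\to\infty}\Pr\bigl(\phi \in \Phi_n^\alpha\bigr) \;\geq\; (1-\beta_\alpha)^{\vert\mathcal{X}\vert} \;=\; \bigl((1-\alpha)^{1/\vert\mathcal{X}\vert}\bigr)^{\vert\mathcal{X}\vert} \;=\; 1-\alpha,
\]
which is the desired conclusion.

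I do not anticipate a substantive obstacle. The two points that require a sentence of care are: verifying that the sample-level independence across bins carries over to independence of the events $A_n^x$ (immediate because $\Phi_{n^x}^{x,\beta_\alpha}$ is a measurable function of bin-$x$ data alone), and justifying the interchange of $\liminf$ with a finite product, which is elementary for bounded nonnegative sequences. The role of the hypothesis $\beta_\alpha\leq 0.032$ is solely to unlock the first branch of $\mathcal{L}(\cdot)$ in the lemma, which is precisely what makes the \v Sid\'ak calibration deliver the target level $1-\alpha$ exactly rather than a weakening of it.
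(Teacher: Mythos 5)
There is a genuine gap: your proposal does not prove the stated Lemma at all. The target statement is the \emph{per-bin} result of \citet{fitzpatrick1987quicksimultaneous} --- that for a single covariate bin $x$, the box of simultaneous intervals $\hat{\phi}_{a}^{x}\pm z\left(\beta_{\alpha}/4\right)/\left(2\sqrt{n^{x}}\right)$ covers the multinomial proportion vector $\phi^{x}$ with asymptotic probability at least $\mathcal{L}\left(\beta_{\alpha}\right)$. What you have written is instead a proof of the downstream Proposition (the aggregate coverage $\lim_{n\to\infty}\text{Pr}\left(\phi\in\Phi_{n}^{\alpha}\right)\geq1-\alpha$ across all bins via independence and the \v{S}id\'{a}k calibration), and at the decisive step you write ``I would invoke the imported lemma'' --- i.e., you assume exactly the statement you were asked to prove. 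For the record, the paper does not prove this Lemma either; it imports it verbatim as Theorem 1 of \citet{fitzpatrick1987quicksimultaneous} and only proves the Proposition from it (by essentially the same independence-plus-\v{S}id\'{a}k argument you give, so that part of your write-up does match the paper's reasoning for the \emph{other} result).

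An actual proof of the Lemma would have to engage with the sampling distribution of a single multinomial vector: by the CLT, $\sqrt{n^{x}}\left(\hat{\phi}_{a}^{x}-\phi_{a}^{x}\right)$ is asymptotically normal with variance $\phi_{a}^{x}\left(1-\phi_{a}^{x}\right)\leq1/4$, so the half-width $z\left(\beta_{\alpha}/4\right)/\left(2\sqrt{n^{x}}\right)$ is at least $z\left(\beta_{\alpha}/4\right)$ standard deviations for every cell; the content of Fitzpatrick and Scott's theorem is that the \emph{simultaneous} non-coverage probability over all cells can be bounded by $\beta_{\alpha}$ (for $\beta_{\alpha}\leq0.032$) using the $\beta_{\alpha}/4$ quantile rather than a Bonferroni-style $\beta_{\alpha}/\left(2\vert\mathcal{A}\vert\right)$ quantile, which exploits the negative dependence structure of the multinomial cells and the fact that at most one proportion can be close to $1/2$. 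None of that analysis appears in your proposal, so the target statement remains unproven.
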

Now let us prove Proposition \ref{prop:SCI using FP}. The proof uses
the fact that (i) the samples are independent across $x\in\mathcal{X}$,
(ii) $\Phi_{n^{x}}^{x,\beta_{\alpha}}$ covers $\phi^{x}$ with probability
no less than $\beta_{\alpha}$ asymptotically, and (iii) $\beta_{\alpha}$
is chosen in a way that ensures the overall coverage probability for
$\phi$ becomes no less than $1-\alpha$ asymptotically (Šidak correction).
\begin{proof}
We have
\begin{align}
\text{Pr}\left(\phi\in\Phi_{n}^{\alpha}\right) & =\text{Pr}\left(\phi^{x}\in\Phi_{n^{x}}^{x,\beta_{\alpha}},\quad\forall x\in\mathcal{X}\right)\nonumber \\
 & =\prod_{x\in\mathcal{X}}\text{Pr}\left(\phi^{x}\in\Phi_{n^{x}}^{x,\beta_{\alpha}}\right)\label{eq:FSproof1}
\end{align}
where (\ref{eq:FSproof1}) follows from the independence across $x\in\mathcal{X}$.
Given that $\beta_{\alpha}$ is sufficiently small, taking the limit
gives
\begin{align}
\lim_{n\to\infty}\prod_{x\in\mathcal{X}}\text{Pr}\left(\phi^{x}\in\Phi_{n^{x}}^{x,\beta_{\alpha}}\right) & =\prod_{x\in\mathcal{X}}\lim_{n^{x}\to\infty}\text{Pr}\left(\phi^{x}\in\Phi_{n^{x}}^{x,\beta_{\alpha}}\right)\label{eq:FSproof2}\\
 & \geq\prod_{x\in\mathcal{X}}\left(1-\beta_{\alpha}\right)\label{eq:FSproof3}\\
 & =\left(1-\beta_{\alpha}\right)^{\vert\mathcal{X}\vert}\nonumber \\
 & =\left(1-\left\{ 1-\left(1-\alpha\right)^{1/\vert\mathcal{X}\vert}\right\} \right)^{\vert\mathcal{X}\vert}\label{eq:FSproof4}\\
 & =1-\alpha.\nonumber 
\end{align}
where (\ref{eq:FSproof2}) follows from the product rule of limits,
(\ref{eq:FSproof3}) follows from \citet{fitzpatrick1987quicksimultaneous}
Theorem 1, and (\ref{eq:FSproof4}) follows from the definition of
$\beta_{\alpha}$.
\end{proof}
The main advantage of using \citet{fitzpatrick1987quicksimultaneous}
is its simplicity. The method is easily applicable even when there
are zero count cells, i.e., $n_{a}^{x}=0$ for some $a\in\mathcal{A}$
and $x\in\mathcal{X}$. Zero count cells often occur when the sample
size is small and may require some correction if other popular approaches
(e.g., normal approximation for each $\phi_{a}^{x}$ taken as a Bernoulli
parameter) were used. The simultaneous confidence bands can be conservative,
but retains a linear structure, which is computationally attractive.
\begin{example}
Suppose there are two bins $\mathcal{X}=\left\{ l,h\right\} $, and
that the number of observations at each bin is $n^{l}=400$ and $n^{h}=600.$
Suppose that $\mathcal{A}=\left\{ 00,01,10,11\right\} $ so that $\phi^{x}=\left(\phi_{00}^{x},\phi_{01}^{x},\phi_{10}^{x},\phi_{11}^{x}\right)$
and that we obtained $\hat{\phi}^{l}=\left(0.1,0.1,0.4,0.4\right)$
and $\hat{\phi}^{h}=\left(0.2,0.3,0.3,0.2\right)$ using nonparametric
frequency estimators at each bin. If $\alpha=0.05$, then $\beta_{\alpha}=1-\left(1-\alpha\right)^{1/2}=0.0253$.
Then $z\left(\beta_{\alpha}/4\right)=z\left(1-0.0253/4\right)=2.4931$.
Finally, since $z\left(\beta_{\alpha}/4\right)/\left(2\sqrt{400}\right)=0.0623$
and $z\left(\beta_{\alpha}/4\right)/\left(2\sqrt{600}\right)=0.0509$,
our $\Phi_{n}^{\alpha}$ is defined by the following inequalities:
\begin{align*}
\hat{\phi}_{a}^{l}-0.0623 & \leq\phi_{a}^{l}\leq\hat{\phi}_{a}^{l}+0.0623,\quad\forall a\in\mathcal{A}\\
\hat{\phi}_{a}^{h}-0.0509 & \leq\phi_{a}^{h}\leq\hat{\phi}_{a}^{h}+0.0509,\quad\forall a\in\mathcal{A}.
\end{align*}
$\blacksquare$
\end{example}

\subsubsection{Monte Carlo Experiment}

We conduct Monte Carlo experiments to examine whether the simultaneous
confidence bands have correct coverage probabilities and confirm that
the approach works well. Let $\mathcal{X}=\left\{ 1,2,...,N_{\mathcal{X}}\right\} $
be a finite set indices (covariates). The following constitutes a
single trial. We randomly generate a probability vector $\phi^{x}\in\mathbb{R}^{4}$
for $x=1,...,N_{\mathcal{X}}$ by taking a 4-dimensional uniform vector
and normalize the vector so that it sums to one. Then, at each $x\in\mathcal{X}$,
we generate a random sample by taking a draw from a multinomial distribution
with parameter $\left(n^{x},\phi^{x}\right)$ where $n^{x}$ is the
number trials. Finally, we test whether the simultaneous confidence
bands, constructed as described above, covers $\phi^{x}$. We repeat
this procedure for $100,000$ times and find the coverage probability.

\begin{table}[h]
\caption{\label{tab:Monte-Carlo-Experiment:} Coverage Probability of Simultaneous
Confidence Bands from Simulation}

{\footnotesize
\begin{centering}
\begin{tabular}{cccccccccccc}
\hline 
 & \multicolumn{5}{c}{(A) $\alpha=0.05$} &  & \multicolumn{5}{c}{(B) $\alpha=0.01$}\tabularnewline
$N_{\mathcal{X}}\backslash n^{x}$ & 100 & 200 & 500 & 1000 & 10000 &  & 100 & 200 & 500 & 1000 & 10000\tabularnewline
\cline{1-6} \cline{2-6} \cline{3-6} \cline{4-6} \cline{5-6} \cline{6-6} \cline{8-12} \cline{9-12} \cline{10-12} \cline{11-12} \cline{12-12} 
4 & 0.9697 & 0.9707 & 0.9713 & 0.9744 & 0.9837 &  & 0.9950 & 0.9948 & 0.9957 & 0.9956 & 0.9975\tabularnewline
10 & 0.9735 & 0.9731 & 0.9748 & 0.9754 & 0.9854 &  & 0.9955 & 0.9954 & 0.9957 & 0.9960 & 0.9978\tabularnewline
50 & 0.9760 & 0.9760 & 0.9777 & 0.9797 & 0.9885 &  & 0.9958 & 0.9962 & 0.9962 & 0.9968 & 0.9981\tabularnewline
100 & 0.9779 & 0.9788 & 0.9791 & 0.9811 & 0.9886 &  & 0.9959 & 0.9961 & 0.9964 & 0.9969 & 0.9982\tabularnewline
200 & 0.9776 & 0.9783 & 0.9794 & 0.9816 & 0.9902 &  & 0.9964 & 0.9962 & 0.9966 & 0.9971 & 0.9984\tabularnewline
\hline 
\end{tabular}
\par\end{centering}
}
\end{table}

Table \ref{tab:Monte-Carlo-Experiment:} reports the results of the
Monte Carlo experiment. It shows that the confidence sets obtain desired
coverage probabilities although they can be conservative. We conclude
that the proposed approach works well.

\subsection{Random Walk Surface Scanning Algorithm\label{subsec:Random-Walk-Surface}}

Let $\Theta_{I}$ be the identified set of parameters. The identified
set is defined as the level set
\[
\Theta_{I}\equiv\left\{ \theta\in\Theta:Q\left(\theta\right)\leq0\right\} 
\]
where $Q\left(\theta\right)$ is a non-negative valued criterion function.
(To obtain the confidence set, simply replace $Q\left(\theta\right)$
with $\widehat{Q}_{n}^{\alpha}\left(\theta\right)$.) Except for special
cases (e.g., when $\Theta_{I}$ is convex), we need to approximate
$\Theta_{I}$ by collecting a large number of points in $\Theta_{I}$.
A naive approach is to conduct an extensive grid search: draw a fine
grid on the parameter space $\Theta$ (e.g., by taking quasi-Monte
Carlo draws) and evaluate the criterion function at all point on the
grid. However, a naive grid search can be computationally burdensome
especially when the dimension of $\theta$ is large.

In our setup, Theorem \ref{thm:Implementation} says that we can get
the gradient information for free due to the envelope theorem. That
is, once we evaluate $Q\left(\theta\right)$ at any $\theta$, we
can get $\nabla Q\left(\theta\right)$ as well. Exploiting the gradient
information allows us to find a minimizer of $Q\left(\theta\right)$
far more efficiently because we can use gradient-based optimization
algorithms (e.g., gradient descent or (L-)BFGS) as opposed to gradient-free
algorithms. However, since we need to find \emph{all} minimizers of
$Q\left(\theta\right)$, solving $\min_{\theta}Q\left(\theta\right)$
is insufficient.

We propose a heuristic approach. First, we identify $\theta^{0}=\arg\min_{\theta}Q\left(\theta\right)$
by using a gradient-based optimization algorithm. Second, we iteratively
explore the neighbors of the identified set by running a random walk
process from $\theta^{0}$ and accepting points at which the criterion
function is zero-valued. Being able to quickly identify a point in
the identified set gives a considerable advantage over grid search
algorithms because we do not have to explore points that are ``far''
from the identified set. The required assumption is that $\Theta_{I}$
is a connected set.

We use a random walk surface scanning algorithm described as follows.
Let $\theta^{0}=\arg\min_{\theta}Q\left(\theta\right)$ be the identified
parameter and assume that $Q\left(\theta^{0}\right)=0$ (otherwise
the identified set is empty). From $\theta^{0}$, we take a random
candidate
\[
\tilde{\theta}^{1}\leftarrow\theta^{0}+\eta
\]
where $\eta\sim N\left(0,\sigma_{\eta}^{2}\right)$ is a vector of
random shocks. We then evaluate $Q\left(\tilde{\theta}^{1}\right)$
and check whether the value is equal to zero. If $Q\left(\tilde{\theta}^{1}\right)=0$,
we accept the candidate $\tilde{\theta}^{1}$ and let $\theta^{1}\leftarrow\tilde{\theta}^{1}$.
If $Q\left(\tilde{\theta}^{1}\right)>0$, then we draw a new $\tilde{\theta}^{1}$
until we find a point that is accepted. Iterating this process generates
a random sequence of points $\theta^{0},\theta^{1},\theta^{2},...$
that ``bounces'' inside the level set $\Theta_{I}$. We iterate
this process until we find a large number of points in $\Theta_{I}$.

To control the step size, we let $\sigma_{\eta}$ adjust adaptively.
Specifically, if a candidate point is accepted, we increase $\sigma_{\eta}$
before a new draw is taken to make the search more aggressive. If
a candidate point is rejected, we decrease $\sigma_{\eta}$ to make
the search more conservative (a lower bound can be placed to prevent
excessively small step size).

\subsection{Counterfactual Analysis\label{subsec:Counterfactual-Analysis}}

In this section, we explain the implementation details for counterfactual
analysis. Let us first lay out the counterfactual prediction problem.
Let us call the game before and after the counterfactual policy pre-game
and post-game respectively. Suppose we have a counterfactual policy
that changes the pre-game $\left(G^{pre},S\right)$ to post-game $\left(G^{post},S\right)$
(we assume that a counterfactual policy only changes the payoff-relevant
primitives, but not the information structure). In our application,
we assume the counterfactual policy changes the covariates from $x^{pre}$
to $x^{post}$ so that the payoff function changes from $u_{i}^{pre}\left(a,\varepsilon_{i};\theta\right)\equiv u_{i}^{x^{pre},\theta}\left(a,\varepsilon_{i}\right)$
to $u_{i}^{post}\left(a,\varepsilon_{i};\theta\right)\equiv u_{i}^{x^{post},\theta}\left(a,\varepsilon_{i}\right)$.
We assume that the prior distribution $\psi$ and the baseline information
structure $S$ do not change.

Let $h:\mathcal{A}\times\mathcal{E}\to\mathbb{R}$ be the counterfactual
objective of interest, which is a function of realized state of the
world and action profiles (see examples provided below). At a fixed
$x\in\mathcal{X}$, we can obtain the lower/upper bounds on the expected
value of $h$ by finding the equilibria that will minimize/maximize
the expected value of $h$:
\begin{gather}
\min/\max_{\sigma^{x}\in\Delta_{a\vert\varepsilon,t}}\sum_{\varepsilon,t,a}\psi_{\varepsilon}^{x}\pi_{t\vert\varepsilon}^{x}\sigma_{a\vert\varepsilon,t}^{x}h\left(a,\varepsilon\right)\quad\text{subject to }\label{eq:counterfactual1}\\
\sum_{\varepsilon,t_{-i}}\psi_{\varepsilon}^{x}\pi_{t\vert\varepsilon}^{x}\sigma_{a\vert\varepsilon,t}^{x}\partial u_{i}^{x,\theta}\left(\tilde{a}_{i},a,\varepsilon\right)\leq0,\quad\forall i,t_{i},a,\tilde{a}_{i}.\nonumber 
\end{gather}
Note that (\ref{eq:counterfactual1}) is a linear program.

We now connect the characterizations to the empirical application.
Let $\mathcal{X}^{pre}$ be the set of covariates corresponding to
the food deserts in Mississippi; there can be multiple values of $x^{pre}\in\mathcal{X}^{pre}$
because there are multiple markets with different observable covariates.
By the definition of food deserts, all Mississippi food deserts have
covariates with the low access to healthy food indicator equal to
1. For each market $m$, we define the counterfactual covariates as
the vector obtained by changing the low access indicator from 1 to
0. For example, if $x^{pre}=\left(x^{highipc},x^{lowaccess}\right)=\left(1,1\right)$
for a particular market, we set $x^{post}=\left(1,0\right)$. This
changes the game since the players' payoff functions are changed.
Then the set of covariates for the post-regime $\mathcal{X}^{post}$
is constructed by taking each $x^{pre}\in\mathcal{X}^{pre}$ and changing
the low access indicator from 1 to 0.

We use four measures of market structure:

\begin{table}[h]
\centering{}%
\begin{tabular}{cc}
\hline 
Counterfactual objective & $h\left(a,\varepsilon\right)$\tabularnewline
\hline 
Number of entrants & $1\times\left(\mathbb{I}\left\{ a=\left(0,1\right)\right\} +\mathbb{I}\left\{ a=\left(1,0\right)\right\} \right)+2\times\mathbb{I}\left\{ a=\left(1,1\right)\right\} $\tabularnewline
McDonald's entry & $\mathbb{I}\left\{ a=\left(1,0\right)\right\} +\mathbb{I}\left\{ a=\left(1,1\right)\right\} $\tabularnewline
Burger King entry & $\mathbb{I}\left\{ a=\left(0,1\right)\right\} +\mathbb{I}\left\{ a=\left(1,1\right)\right\} $\tabularnewline
No entry & $\mathbb{I}\left\{ a=\left(0,0\right)\right\} $\tabularnewline
\hline 
\end{tabular}
\end{table}

Suppose $\theta$ is given. At a fixed covariate $x$, we can obtain
bounds on the expected value of $h$ by solving (\ref{eq:counterfactual1}).
However, since $\mathcal{X}^{pre}$ is non-singleton, we find the
weighted average of the bounds. Let $\left\{ w^{x}\right\} _{x\in\mathcal{X}^{pre}}$
be the weights at each covariate vector where $w^{x}$ is proportional
to the number of markets corresponding to Mississippi food deserts
in covariate bin $x\in\mathcal{X}^{pre}$; we scale the weights so
that $\sum_{x\in\mathcal{X}^{pre}}w^{x}=1$. The weighted average
on $h$ can be found by solving:
\begin{gather*}
\min/\max_{\sigma}\sum_{x\in\mathcal{X}^{post}}w^{x}\sum_{\varepsilon,t,a}\psi_{\varepsilon}^{x}\pi_{t\vert\varepsilon}^{x}\sigma_{a\vert\varepsilon,t}^{x}h\left(a,x\right)\quad\text{subject to }\\
\sum_{\varepsilon,t_{-i}}\psi_{\varepsilon}^{x}\pi_{t\vert\varepsilon}^{x}\sigma_{a\vert\varepsilon,t}^{x}\partial u_{i}^{x,\theta}\left(\tilde{a}_{i},a,\varepsilon\right)\leq0,\quad\forall x\in\mathcal{X}^{post},i,t_{i},a,\tilde{a}_{i}\\
\sigma^{x}\in\Delta_{a\vert\varepsilon,t},\quad\forall x\in\mathcal{X}^{post}
\end{gather*}
The bounds for the pre-counterfactual regime can be found by replacing
$\mathcal{X}^{post}$ with $\mathcal{X}^{pre}$.

Finally, since $\Theta_{I}$ is set-valued, we repeat the above process
for each $\theta$ in $\Theta_{I}$ and take the union of the bounds.
Since there is a large number of points in $\Theta_{I}$, to save
computation time, we use $k$-means clustering on $\Theta_{I}$ to
find a set of points that approximate $\Theta_{I}$ (we choose $k$
equal to 2000 or larger and compare the projection of the original
set to the projection of the approximating set to see if the approximation
is accurate).

\subsection{Overview of the Implementation\label{subsec:Overview-of-the}}

We provide a brief overview of how we obtain the confidence sets in
the empirical application section. To prepare data for structural
estimation, we use \texttt{Stata} to obtain discretized bins of covariates.
We use estimate the conditional choice probabilities via nonparametric
frequency estimator. We also compute the number of observations in
each bin $x\in\mathcal{X}$ (which are inputs to constructing simultaneous
confidence intervals for the CCPs) and define weights at each $x$
(which are inputs to criterion function) as being proportional to
the number of observations. The final dataset has $\vert\mathcal{X}\vert$
rows, where each row contains vector of covariate values corresponding
to bin $x$, CCP estimates $\hat{\phi}_{a}^{x}$ for each outcome
$a\in\mathcal{A}$, and the number of observations at the covariate
bin. We then export the data to \texttt{Julia} where all computations
for structural estimation are done.

To prepare feasible optimization programs, we discretize the space
of shocks using the approach described in Section \ref{subsec:Discretization-of-Unobservables}.
We then declare optimization program using \texttt{JuMP} interface
\citep{dunning2017jumpa}.\footnote{The main advantages of \texttt{JuMP} are its ease of use and its automatic
differentiation feature which does not require the researcher to provide
first- and second-order derivatives.} We construct the simultaneous confidence sets for the conditional
choice probabilities using the approach described in \ref{subsec:Construction-of-Convex}.
This makes evaluation of the criterion functions $\widehat{Q}_{n}^{\alpha}\left(\theta\right)$
for each point $\theta\in\Theta$ a linear program. We use \texttt{Gurobi}
to solve linear programs.

To approximate the confidence set $\widehat{\Theta}_{I}^{\alpha}$,
we need to collect many points in $\Theta$ that satisfy the condition
$\hat{Q}_{n}^{\alpha}\left(\theta\right)=0$. Collecting these points
are done by the random walk surface scanning algorithm described in
Section \ref{subsec:Random-Walk-Surface}. To use this approach, it
is important to quickly identify an initial point $\theta^{0}$ such
that $\hat{Q}_{n}^{\alpha}\left(\theta^{0}\right)=0$ by solving $\min_{\theta}\hat{Q}_{n}^{\alpha}\left(\theta\right)$.
This can be done efficiently by using gradients of $\hat{Q}_{n}^{\alpha}\left(\theta\right)$
obtained by the envelope theorem (see Theorem \ref{thm:Implementation}).
We recommend using many initial points to increase the chance of convergence,
and decreasing the tolerance for optimality conditions ($\Vert\nabla\hat{Q}_{n}^{\alpha}\left(\theta\right)\Vert<\varepsilon^{tol}$)
for higher accuracy. We use \texttt{Knitro} to solve nonlinear programs.

Specifically, we identify $\arg\min_{\theta}\hat{Q}_{n}^{\alpha}\left(\theta\right)$
by solving the minimization problem in two steps:
\[
\min_{\theta}\hat{Q}_{n}^{\alpha}\left(\theta\right)=\min_{\theta^{\rho}}\left(\min_{\theta^{u}}\hat{Q}_{n}^{\alpha}\left(\theta^{u};\theta^{\rho}\right)\right)
\]
where $\theta^{u}$ represent the parameters associated with the payoff
functions and $\theta^{\rho}$ represent the correlation coefficient
parameter for the distribution of payoff shocks. In the outer loop,
we search for the minimum over a grid of $\theta^{\rho}$ on $\left[0,1\right]$.
In the inner loop, taking $\theta^{\rho}$ as fixed, we solve $\min_{\theta^{u}}\hat{Q}_{n}^{\alpha}\left(\theta^{u};\theta^{\rho}\right)$
by minimizing (\ref{eq:program Qhat}) \emph{jointly} with $\theta^{u}$.
Solving the nested optimization program (the outer loop minimizes
over $\theta^{u}$ and the inner loop minimizes over $q,\sigma,\phi$)
as a single optimization program is faster when the number of variables
is manageable; this is similar to the key idea of \citet{su2012constrained}.
Although we can obtain $\psi^{x,\theta^{\rho}}$ in closed form so
that the minimization problem can be solved jointly in $\left(\theta^{u},\theta^{\rho}\right)$,
we chose to divide the minimization problem as above because $\psi^{x,\theta^{\rho}}$
can be highly non-linear in $\theta^{\rho}$.

\part*{}

\section{Data Appendix\label{sec:Data-Appendix}}

This section describes the datasets used for our empirical application,
which studies the entry game between McDonald\textquoteright s and
Burger King in the US. The following table provides an overview of
the datasets used in this paper.

{\footnotesize
\centering
\setstretch{1.1}
\begin{center}
\begin{longtable}[c]{>{\raggedright}p{4cm}>{\raggedright}p{12cm}}
\hline 
\textbf{Dataset Name} & \textbf{Description}\tabularnewline
\hline 
Data Axle (Infogroup) Historical Business Database & Proprietary; accessed via Wharton Research Data Services \url{https://wrds-www.wharton.upenn.edu/}
using institutional subscription.\footnote{Wharton Research Data Services (WRDS) was used in preparing part of
the data set used in the research reported in this paper. This service
and the data available thereon constitute valuable intellectual property
and trade secrets of WRDS and/or its third-party suppliers.} Data Axle (formerly known as Infogroup) is a data analytics marketing
firm that provides digital and traditional marketing data on millions
of consumers and businesses. Address-level records on business entities
operating in the US are available for 1997-2019 at the annual level.
We obtain the addresses of burger outlets in operation, which in turn
are translated into tract-level entry decisions for each calendar
year using the census shapefiles.\tabularnewline
\hline 
US Census Shapefiles & Accessible from \url{https://www.census.gov/geographies/mapping-files/time-series/geo/tiger-line-file.html}.
Used to get 2010 census tract boundaries. Shapefiles are needed to
find tract IDs corresponding to each physical store given their location
coordinates.\tabularnewline
\hline 
Longitudinal Tract Data Base (LTDB) & Accessible from \url{https://s4.ad.brown.edu/projects/diversity/researcher/bridging.htm}.
LTDB provides tract-level demographic information (from the census)
for 1970-2010 harmonized to 2010 tract boundaries. We obtain population
and income per capita for year 2000 and 2010 from here.\tabularnewline
\hline 
National Neighborhood Data Archive (NaNDA) & Accessible from \url{https://www.openicpsr.org/openicpsr/nanda}.
NaNDA provides measures of business activities at each tract. We obtain
the number of eating and drinking places for year 2010 at the tract
level. Other variable such as the number of grocery stores (per square
miles), the number of super-centers, and the number of retail stores
are available.\tabularnewline
\hline 
Food Access Research Atlas & Accessible from \url{https://www.ers.usda.gov/data-products/food-access-research-atlas/}.
Food Access Research Atlas provides information on whether a census
tract has limited access to supermarkets, super-centers, grocery stores,
or other sources of healthy and affordable food. We obtain indicators
for ``low access to healthy food'' and ``food deserts'' at the
tract level for year 2010. A census tract is classified as a food
desert if it is identified as having low access to healthy food and
low income. A census tract is classified as low-access tract if at
least 500 people or at least 33 percent of the population is greater
than 1/2 mile from the nearest supermarket, supercenter, or large
grocery store for an urban area or greater than 10 miles for a rural
area.\footnote{An alternative measure uses 1 mile radius for urban area. Using the
1 mile radius measure does not change the qualitative conclusion of
our empirical analysis.} The criteria for identifying a census tract as low-income are from
the Department of Treasury's New Markets Tax Credit (NMTC) program.\tabularnewline
\hline 
\end{longtable}
\par\end{center}

}

\subsection{Data Construction}

We merge multiple datasets to construct the final sample used for
structural estimation. The details are described as follows.

\subsubsection*{Panel data at tract-year level}

Although we use 2010 cross-section for estimation of the structural
model, we construct a panel dataset at a tract-year level to track
the openings and closings of fast-food outlets in the US. We make
the sample period run from 1997 to 2019, corresponding to the period
for which business location data from Data Axle Historical Business
Database are available.

We define the units for \emph{markets} as 2010 census tracts designated
by the US Census Bureau. (We define potential markets as 2010 urban
tracts. See below for the definition of urban tracts.) Year 2010 was
selected since it was the latest year for which the decennial census
data was available when we started the empirical analysis. For all
years in the sample period, we fix markets as 2010 census tracts;
although census tract boundaries change slightly every decade, we
fixed the boundaries for consistency across time.

To construct tract-level data, we first download the 2010 census shapefiles
from the US Census to obtain the list of all 2010 census tracts (there
are 74,134 tracts defined for the 2010 decennial census in the US
and its territories). Next, we exclude all tracts outside the contiguous
US: Alaska, Hawaii, American Samoa, Guam, Northern Mariana Islands,
Puerto Rico, and the Virgin Islands. We drop these regions since the
data generating process (specifically how the game depends on observable
market characteristics) is likely to differ from the rest.

Using the market-year panel data as a ``blank sheet'', we append
relevant variables that include the firms' entry decisions in each
tract for a given year and observable tract characteristics such as
population. At this stage, we can create a variable distance to headquarter
by measuring the distance between the location of a firm's headquarter
and the centroid of a tract (McDonald's and Burger King have their
headquarters in Chicago and Florida respectively).

In the final dataset used for the empirical application, we restrict
attention to 2010 urban census tracts (i.e., we drop all rural tracts).
A census tract is defined as urban if its population-weighted centroid
is in an ``urban area'' as defined in the Census Bureau's \emph{urbanized
area} definition; a census tract is rural if not urban. We obtain
the urban tract indicator from the Food Access Research Atlas.

\subsubsection*{Coding Entry Decisions}

The primary source of data for our empirical application is Data Axle\textquoteright s
\emph{Historical Business Data}base. The dataset contains the list
of local business establishments operating in the US over 1997-2019
at an annual level. Each establishment is assigned a unique identification
number which can be used to construct establishment-level panel data.
In addition, the dataset contains information such as company name,
parent company, location of the establishment in coordinates, number
of employees, industry codes.

We first need to download the entire list of burger outlets that were
in operation. We download raw data from Wharton Research Data Services
(WRDS) using the qualifier \textquotedblleft SIC code=58\textquotedblright{}
(retail eating places). We then identify relevant burger chains using
company (brand) names and their parent number. In principle, each
burger chain should have a unique parent number by the data provider.
For example, all McDonald\textquoteright s outlets have parent number
\textquotedblleft 001682400\textquotedblright . Ideally, one can identify
all burger chains that belong to a brand using their names and parent
numbers. However, there are some errors due to misclassifications,
which makes identifying all relevant burger chains more difficult.
For example, McDonald\textquoteright s outlets will have different
company names such as \textquotedblleft MC DONALD\textquoteright S\textquotedblright ,
\textquotedblleft MCDONALDS\textquotedblright , and \textquotedblleft MC
DONALD\textquotedblright . In addition, some McDonald\textquoteright s
outlets have parent numbers missing for some subset of years, or some
establishments have duplicate observations.\footnote{The main hurdle in constructing establishment-level panel data is
the following. Each establishment is assigned a unique ``ABI number''
which allows the analyst to track how the establishment operates over
time. However, we found that some establishments had their ABIs changing
over time or one establishment had duplicate observations with different
ABI numbers assigned. When we inquired the original data provider
support team about why this issue might be arising, they responded
that it seems to be errors generated in the data recording stage.}

To overcome this issue, we rely on the coordinates information to
identify unique establishments. Since the same establishment can have
different coordinates assigned over time depending on which point
of place is used to measure the coordinates, we put each establishment
in blocks approximately 250 meters in height and width. The idea is
to put all observations whose coordinates are very close to each other
in a single bin. Then we assign a unique establishment id to the stores
in each block, i.e., we treat them as corresponding to a single store.
We find that while it is challenging to avoid minor classification
errors, the total number of burger chains outlets identified by our
procedure closely is very close to the total number of outlets reported
by other sources (e.g., reports in Statista \url{https://www.statista.com/}).
Identifying unique establishments allows the construction of establishment-level
panel data, which can be used to track firm entries and exits in each
market.

The final step is to reshape the establishment-level panel data to
market-level data to tabulate the number of burger chains operating
in each market-year pair. We accomplish this with the help of \texttt{Stata}'s
geocoding function, which helps identify census tract id\textquoteright s
corresponding to each coordinate (location of establishments). We
then tabulate the number of outlets by each brand at a year-tract
level.

In each market, we code entry decisions as binary variables. There
were very few cases of a firm having more than one outlet in a single
tract (approximately 1.5\% of markets for McDonald's and 0.3\% for
Burger King). We also construct a firm-specific variable \emph{own
outlets in nearby markets}. This variable records the number of own-brand
outlets operating in adjacent markets (neighboring markets that share
the same borders). For example, if for market $m$, McDonald\textquoteright s
nearby outlets are 2, it means that there were a total of 2 outlets
operating in markets adjacent to market $m$. We constructed this
variable with the help of a dataset downloaded from \emph{Diversity
and Disparities project} website that provides the list of 2010 census
tracts and adjacent tracts.\footnote{Accessible from \url{https://s4.ad.brown.edu/Projects/Diversity/Researcher/Pooling.htm}.}

\subsubsection*{Market Characteristics}

We obtain tract-level characteristics from multiple sources described
in the table above. All of these datasets provide variables at tract-level
for the year 2010. We append tract-level characteristics to the main
dataset that has entry decisions and firm-specific variables at tract-level.

\section{Adjustment Costs\label{sec:Adjustment-Costs-in}}

\subsection{Adjustment Costs in the Model}

Throughout the paper, we have assumed that adjustment costs (e.g.,
sunk entry costs) are zero. The assumption is very common for static
entry models (see, e.g., Bresnahan and Reiss \citeyearpar{bresnahan_entry_1990,bresnahan_entry_1991},
\citet{ciliberto2009marketstructure}, \citet{ciliberto2021superstar},
\citet{ciliberto2021marketstructure}, and \citet{magnolfi2021estimation}).
The assumption is also commonly used in other econometric frameworks,
e.g., matching, network formation, and consumer choice. Whether the
zero-adjustment costs assumption is appropriate or realistic depends
on the research question, empirical setting, and the empirical model.
If the researcher believes that adjustment costs (might) drive the
observed decisions, then the researcher should model the adjustment
costs. For instance, adjustment costs may play an important role in
firms' decision timings in a dynamic model. However, in static entry
games, static profit functions are often interpreted as a reduced-form
representation of long-run profit associated with the firms' decisions.
In this case, one-time sunk payment associated with changing the operating
status (e.g., from ``staying out'' to ``staying in'') is likely
to be small relative to long-run profit. Although adjustment costs
can be introduced in the model, the zero-adjustment costs assumption
helps us motivate the main topic of this paper and simplify the exposition.
Finally, we remark that we cannot identify adjustment costs with typical
cross-sectional data that do not provide information on how firms
switched from one state to another.

We note that our main results up to the econometrics section are not
affected by the zero-adjustment costs assumption. If there are adjustment
costs, then we can treat the realized action profile as a state variable
and introduce adjustment costs as a function of a player's current
action and new action. Introducing adjustment costs do not affect
the form of outcome functions, which serve as players' model of the
relationship between signal profiles and outcomes. Thus, the definition
of rational expectations equilibrium is unaffected; the presence of
adjustment costs only makes deviation from the realized outcome more
difficult. The relationship between Bayes stable equilibrium and rational
expectations equilibrium (Theorem \ref{thm:BSE and REE connection})
also remains intact. However, it should be clear that introducing
adjustment costs in a static model requires assumptions about the
timing and speed of firms' actions as well as the size of the adjustment
costs relative to long-run profits. Also note that an outcome function
would be silent on how the players reached a certain outcome in the
presence of adjustment costs.

\subsection{Adjustment Costs in the Fast-Food Industry}

In the empirical section, we assume that the burger chains can revise
their actions costlessly. One-time sunk costs incurred for revising
a decision (from ``in'' to ``out'' or vice versa) are likely negligible
relative to the discounted sum of payoffs over long horizon, especially
when the one-time payments are not significant and firms tend to adhere
to a certain decision for a long period. In fact, the one-time payments
associated with adjusting the operating status of a restaurant seem
non-significant in the fast-food industry.\textcolor{blue}{{} }

When a fast-food outlet opens or closes, the store has to incur one-time
sunk costs that would not be incurred in the absence of adjustments.
For example, a store that is opening has to pay legal fees for obtaining
licenses to operate a restaurant. A store that intends to close might
be constrained by terms of contract at least in the short run. The
total amount of adjustment costs varies, but our investigation of
fast-food restaurants suggests that adjustment costs are generally
small relative to the long-run profits in the industry.\footnote{See, e.g., \citet{meyer2013open} for an in-depth discussion on opening
and operating costs of restaurants in the US.} While the details of the cost structure of McDonald's and Burger
King are proprietary, their franchise disclosure documents provide
a rough idea of the relative magnitude of sunk entry costs. The 2021
McDonald's Franchise Disclosure Document (FDD) reports that the average
annual sales volume of domestic (US) traditional McDonald's restaurants
was \$3,487,000 (the average profit per store is not reported since
rents vary considerably depending on the outlet's location). Operating
expenses include food costs, labor costs, rent/lease, loan payments,
equipment leases, fixed salaries, taxes, advertising fee, promotion,
utilities, insurance, office supplies/postage/shipping, etc. A typical
opening costs for a restaurant include those for construction, remodeling,
licenses/permits, and professional/legal services. The FDD reports
that expenses on signs, seating, equipment, and decor typically ranges
between \$1,000,000 and \$1,600,000. However, a subset of these expenses
is not sunk as they include purchase of capital that can be resold
or used in other outlets. Exit costs are less well-documented. However,
when a fast-food restaurant decides to close, it is likely that land,
buildings, equipment and furniture can be sold at a fair market value
or used in other outlets. Thus, it seems reasonable to assume that
adjustment costs do not play an important role in shaping firms' long
run decision.

The profit function parameter estimates from \citet{aguirregabiria2020identification}
also support our assumption that the adjustment costs are negligible.
In their empirical application, they study the dynamic entry game
played by McDonald's and Burger King in the UK during the period 1991\textendash 1995.
The estimates of structural payoff parameters, reported in Table 9
of their paper, show that the sunk entry costs are small relative
to the estimated annual variable profits, implying that sunk entry
costs are likely to be negligible in the long run. They do not estimate
exit costs because there were no exists during the period they study.

\section{Outcome Functions\label{sec:Outcome-Functions}}

Rational expectations equilibrium assumes that players refine their
information via a commonly agreed outcome function. In this section,
we further examine the role of outcome functions by allowing each
player to hold a different outcome function and studying the implications.
We discuss minimal conditions that must be imposed on outcome functions
in equilibrium and how outcome functions relate to players' ability
to process information. 

\subsection{The Role of Outcome Functions}

In the baseline version of the rational expectations equilibrium,
it is assumed that all players agree on a single outcome function
$\delta:\mathcal{T}\to\Delta\left(\mathcal{A}\right)$. In this case,
$\delta$ plays multiple roles. First, $\delta$ represents the true
data generating process in a reduced form. While the model is silent
on how $\delta$ comes about, the use of $\delta$ reflects the fact
that some relationship between exogenous signals and endogenous outcomes
exists. Second, $\delta$ represents players' subjective model of
how signals relate to actions. The rational expectations equilibrium
assumes that players' models are correct. Third, $\delta$ represents
players' information updating rule. Each player use $\delta$ to assess
if deviation from a given outcome can be profitable. Clearly, such
assumption on information updating is strong as it ignores how the
players might have updated their information along the interaction
process before stable actions are determined and realized, but it
substantially simplifies the characterization of the equilibrium.

\subsection{Heterogeneous Outcome Functions}

Relaxing the symmetry assumption and allowing players to hold heterogenous
outcome functions disentangles the roles. Suppose that each player
$i$ uses outcome function $\delta_{i}:\mathcal{T}\to\Delta\left(\mathcal{A}\right)$.
Each $\delta_{i}$ represents $i$'s subjective model of how players'
information relates to market outcomes. We maintain the ``rational
expectations'' assumption\textemdash that each player refines her
information after observing $a$ using Bayes' rule with $\delta_{i}$\textemdash since
the players can observe opponents' actions at stable outcomes. However,
we relax the baseline assumption that $\delta_{i}$'s be identical
across players. 

\subsubsection{Data Generating Process}

To say that $\delta_{i}$ is correct or incorrect requires reference
to the true data generating process, so we need to introduce $\delta^{*}:\mathcal{T}\to\Delta\left(\mathcal{A}\right)$
to denote the true data generating process. Each $\delta_{i}$ represents
player $i$'s subjective model, whereas $\delta^{*}$ represents the
objective model. If $\delta_{i}\neq\delta^{^{*}}$, then player $i$'s
model is misspecified. 

\subsubsection{Minimal Consistency Requirements}

Although $\delta_{i}$'s are subjective, any reasonable notion of
equilibrium will require that $\delta_{i}$'s cannot be too arbitrary.
First, we should require that $\delta_{i}$'s generate no deviation
incentives. This requires evaluating whether $i$ finds deviation
incentives at every information set that can be realized with positive
probability under $\delta^{*}$. Second, a player's subjective belief
cannot contradict information available at equilibrium. The true data
generating process $\delta^{*}$ should be consistent with the actual
behavior induced by $\delta_{i}$'s. Moreover, the empirical distribution
over a player's information set should be consistent with the distribution
implied by $\delta_{i}$ since the player can detect inconsistency
otherwise.\footnote{That players' subjective models can be supported as long as the players'
models do not contradict their observations is interesting because
it is reminiscent of the self-confirming equilibrium of \citet{fudenberg_self-confirming_1993}.} Complete characterization of the equilibrium conditions with heterogeneous
beliefs depends on the analyst's assumptions, but it is clear that
any reasonable concept of equilibrium will constrain players' subjective
beliefs from being too arbitrary because rational players will use
information available at equilibrium to refine their beliefs.

\subsubsection{Processing Information}

In rational expectations equilibrium, $\delta_{i}$ represents a channel
through which player $i$ refines her information upon observing market
outcomes. Given that $\delta_{i}$ is subjective, it is natural to
ask whether $\delta_{i}$'s can be used to reflect differences in
the ability to process information. For example, can we capture differences
in McDonald's and Burger King's ability to process information by
setting $\delta_{i}$'s appropriately? It turns out that this is quite
difficult because the model imposes minimal level of sophistication
on players, which in turn constrains $\delta_{i}$'s from being too
arbitrary in equilibrium. While the analyst may consider setting $\delta_{i}$
so as to mimic limited ability to update information (e.g., by setting
$\delta_{i}:\mathcal{T}\to\Delta\left(\mathcal{A}\right)$ to a constant
function or controlling the degree of misspecification), such approach
is constrained by the consistency requirements discussed above (i.e.,
that $\delta_{i}$ cannot contradict what $i$ can observe or learn
in equilibrium). Thus, in the rational expectations equilibrium framework,
the assumptions on players' rationality limits the analyst's freedom
in manipulating $\delta_{i}$ to capture potential heterogeneity in
players' ability to process information.

\subparagraph*{}

\end{document}